\newtheorem{theorem}{Theorem}
\newtheorem{proposition}[theorem]{Proposition}
\newtheorem{corollary}[theorem]{Corollary}
\newtheorem{definition}[theorem]{Definition}
\newtheorem{example}{Example}
\newtheorem{remark}{Remark}
\newproof{proof}{Proof}
\newproof{proofsketch}{Proof Sketch}
\newcommand{\emr}{{\,\triangleleft}}
\newcommand{\eml}{{\triangleright\,}}
\DeclareMathOperator{\add}{{\rm add}}
\newcommand{\la}{{\leftarrow}}
\newcommand{\ra}{{\rightarrow}}
\newcommand{\head}{{\text{\rotatebox[origin=c]{180}{$\Rsh$}}}}
\newcommand{\blank}{{\sqcup}}
\newclass{\NCM}{NCM}
\newclass{\NCCM}{NCCM}
\newclass{\NPCM}{NPCM}
\newclass{\NCACM}{NCACM}
\newclass{\DPCM}{DPCM}
\newclass{\NPDA}{NPDA}
\newclass{\DPDA}{DPDA}
\newclass{\NFA}{NFA}
\newclass{\DFA}{DFA}
\newclass{\NQCM}{NQCM}
\newclass{\DCM}{DCM}
\newclass{\NTM}{NTM}
\newclass{\DTM}{DTM}
\newclass{\NRBQA}{NRBQA}
\newclass{\NRBSA}{NRBSA}
\newclass{\NRBTA}{NRBTA}
\newclass{\NRBQCM}{NRBQCM}
\newclass{\NRBSCM}{NRBSCM}
\newclass{\NRBTCM}{NRBTCM}
\newclass{\NSA}{NSA}
\newclass{\pd}{pd}
\newclass{\onetDPDA}{1tDPDA}
\newclass{\ftNTM}{ftNTM}
\newclass{\fvNTM}{fvNTM}
\newclass{\fcNTM}{fcNTM}
\newclass{\fvDTM}{fvDTM}
\newclass{\fcDTM}{fcDTM}
\newclass{\NTCM}{NTCM}
\newclass{\DTCM}{DTCM}
\newclass{\ftNTCM}{ftNTCM}
\newclass{\fvNTCM}{fvNTCM}
\newclass{\fcNTCM}{fcNTCM}
\newclass{\fvDTCM}{fvDTCM}
\newclass{\fcDTCM}{fcDTCM}
\newclass{\DCOUNTER}{DCOUNTER}
\newclass{\COUNTER}{COUNTER}
\newcommand{\LL}{\mathcal{L}}
\newcommand{\MM}{\mathcal{M}}
\renewcommand{\SS}{\mathcal{S}}
\newcommand\pre{{\rm pre}}
\newcommand\post{{\rm post}}
\newcommand\conf{{\rm conf}}
\begin{document}%\def\docID{38} \input{../preamble.tex}

\begin{frontmatter}

%% Title, authors and addresses

%% use the tnoteref command within \title for footnotes;
%% use the tnotetext command for the associated footnote;
%% use the fnref command within \author or \address for footnotes;
%% use the fntext command for the associated footnote;
%% use the corref command within \author for corresponding author footnotes;
%% use the cortext command for the associated footnote;
%% use the ead command for the email address,
%% and the form \ead[url] for the home page:
%%
%% \title{Title\tnoteref{label1}}
%% \tnotetext[label1]{}
%% \author{Name\corref{cor1}\fnref{label2}}
%% \ead{email address}
%% \ead[url]{home page}
%% \fntext[label2]{}
%% \cortext[cor1]{}
%% \address{Address\fnref{label3}}
%% \fntext[label3]{}

%\dochead{}
%% Use \dochead if there is an article header, e.g. \dochead{Short communication}
%% \dochead can also be used to include a conference title, if directed by the editors
%% e.g. \dochead{17th International Conference on Dynamical Processes in Excited States of Solids}

\title{Store Languages of Turing Machines and Counter Machines}

\author[label1]{Noah Friesen}
\address[label1]{Department of Mathematics and Statistics\\
University of Saskatchewan\\
Saskatoon, SK S7N 5C9, Canada}
\ead{noah.friesen@usask.ca}
%\fntext[fn1]{Supported by.}

\author[label2]{Oscar H. Ibarra}
\address[label2]{Department of Computer Science\\ University of California, Santa Barbara, CA 93106, USA}
\ead{ibarra@cs.ucsb.edu}

\author[label3]{Jozef Jir\'asek\fnref{fn1}}
\address[label3]{Department of Computer Science, University of Saskatchewan\\
Saskatoon, SK S7N 5C9, Canada}
\ead{jirasek.jozef@usask.ca}
%\cortext[corr]{Corresponding author}
\fntext[fn1]{Supported, in part, by Natural Sciences and Engineering Research Council of Canada Grant 2022-05092 (Ian McQuillan)}

\author[label3]{Ian McQuillan\corref{corr}\fnref{fn1}
\corref{corr}}
%\address[label4]{Department of Computer Science, University of Saskatchewan\\
%Saskatoon, SK S7N 5C9, Canada}
\ead{mcquillan@cs.usask.ca}
%\cortext[corr]{Corresponding author}
\cortext[corr]{Corresponding author}

%\author[label1]{Noah Friesen}
%\address{Department of Mathematics and Statistics\\
%University of Saskatchewan\\
%Canada}
%\ead[label1]{noah.friesen@usask.ca}

%\author[label2]{Oscar H. Ibarra}
%\address[label2]{Department of Computer Science\\ University of California, Santa Barbara\\ CA, USA}
%\ead{ibarra@cs.ucsb.edu}

%\author[label3]{Jozef Jir\'asek Jr.}
%\address[label3]{Department of Computer Science\\
%University of Saskatchewan\\Canada}
%\ead{jirasek.jozef@usask.ca}

%\author[label4]{Ian McQuillan}
%\address[label4]{Department of Computer Science\\
%University of Saskatchewan\\Canada}
%\ead{mcquillan@cs.usask.ca}
%\fntext[fn1]{Supported, in part, by Natural Sciences and Engineering Research Council of Canada Grant 2022-05092 (Ian McQuillan)}

\begin{abstract}
The \emph{store language} of an automaton is the set of store configurations (state and store contents, but not the input) that can appear as an intermediate step in an accepting computation.
A one-way nondeterministic finite-visit Turing machine ($\fvNTM$) is a Turing machine with a one-way read-only input tape, and a single worktape, where there is some number $k$ such that in every accepting computation, each worktape cell is visited at most $k$ times.
We show that the store language of every $\fvNTM$ is a regular language. 
Furthermore, we show that the store language of every $\fvNTM$ augmented by reversal-bounded counters can be accepted by a machine with only reversal-bounded counters and no worktape. Several applications are given to problems in the areas of verification and fault tolerance, and to the study of right quotients. We also continue the investigation of the store languages of one-way and
two-way machine models where we present some conditions under which their
store languages are recursive or non-recursive.
\end{abstract}

\begin{keyword}
Store Languages \sep Finite-Visit Turing Machines \sep Decidability Questions \sep Counter Machines
\end{keyword}

\end{frontmatter}

\section{Introduction}
\label{sec:intro}

A useful concept in formal language and automata theory is that of the store language of a machine.
The \textit{store language} is the set of store configurations (state and store contents) that can appear as an intermediate step in an accepting computation.
For example, the store language of a one-way nondeterministic pushdown automaton consists of  all words of the form $q\gamma$, where there is some accepting computation that passes through a configuration in state $q$ and with $\gamma$ as the pushdown contents.
It is well-known that the store language of any pushdown automaton is a regular language \cite{GreibachCFStore,CFHandbook}.
Furthermore, it is known that a nondeterministic finite automaton accepting the store language of a given pushdown automaton can always be constructed in polynomial time \cite{NPDAStoreAlgorithm}.
In addition, the optimal number of states needed in the nondeterministic finite automaton constructed from the given pushdown automaton has been determined \cite{NPDAStoreDescriptional}.

There are several other models where the store languages have been characterized as well.
One-way stack automata are one-way pushdown automata with the additional ability to read from the inside of the stack in a two-way read-only mode, but pushing and popping are only available when at the top of the stack \cite{StackAutomata}; 
finite-turn Turing machines (denoted here by $\ftNTM$) have a one-way read-only 
input tape and a single worktape where there is a bound on the number of changes in direction of the worktape head on the worktape \cite{visitautomata};
and finite-flip pushdown automata are pushdown automata which can ``reverse'' (or flip) their pushdown store up to a bounded number of times \cite{flipPushdown}. 
To note, finite-turn Turing machines are also known as reversal-bounded Turing machines \cite{visitautomata} but we use the term finite-turn as we use the term reversal-bounded for counters instead.
It has been proven that the store languages of all of these models are also always regular (\cite{KutribCIAA2016,StoreLanguages} for stack automata, \cite{StoreLanguages} for $\ftNTM$, 
and \cite{IbarraMcQuillanVerification} for finite-flip pushdown automata). 
One-way counter machines have some finite number of counters that each contains a non-negative integer, and transitions can test whether the value in each counter is zero or not. While it is well known that these machines can accept all recursively enumerable languages \cite{HU}, restrictions can limit their power.
A counter machine is \textit{reversal-bounded} if in every accepting computation, for each counter, the number of switches between executing sequences of transitions that do not decrease the counter and transitions that do not increase the counter, is bounded \cite{Ibarra1978,Baker1974}.
The set of one-way nondeterministic reversal-bounded multi-counter machines is denoted by $\NCM$, and the set of deterministic machines in $\NCM$ is denoted by $\DCM$.
 The store languages of $\NCM$s can be non-regular, but can always be accepted by $\DCM$s \cite{StoreLanguages}. If we augment any of pushdown automata \cite{StoreLanguages}, finite-turn Turing machines \cite{StoreLanguages}, or finite-flip pushdown automata \cite{IbarraMcQuillanVerification} with any number of reversal-bounded counters, then their store languages can always be accepted by $\NCM$s.
Lest one speculate that this would be true for any model that only produces regular store languages, it is not so as stack automata augmented by one reversal-bounded counter can produce store languages that cannot be accepted by a  $\NCM$ \cite{StoreLanguages} despite stack automata only having regular store languages.

When all store languages of a one-way nondeterministic machine model $\MM$ (under some simple conditions described in the paper) are regular, this implies that the languages accepted by the deterministic machines in $\MM$ are closed under right quotient with regular languages \cite{StoreLanguages}. This technique can completely replace the often lengthy and ad-hoc proofs of closure under right quotient with regular languages for deterministic classes of machines in the literature (e.g.\ for deterministic pushdown automata \cite{GinsburgDPDAs} or deterministic stack automata \cite{DetStackQuotient}).

Also, a class of machines either having store languages that are regular or that can be accepted by $\NCM$ machines has applications to problems in the area of model checking, reachability, and verification. Two commonly studied operations are: given a set of store configurations $C$, the set of store configurations that can follow in zero or more moves of a machine $M$ from those in $C$ is called $\post_M^*(C)$, and the set of store configurations that can lead in zero or more moves to those in $C$ is called $\pre_M^*(C)$.
It is known that given a pushdown automaton $M$ and a regular set of configurations $C$, $\pre_M^*(C)$ and $\post_M^*(C)$ are both regular languages \cite{PushdownVerification}. It is also known that for an $\NCM$ $M$ and a set $C$ of configurations accepted by an $\NCM$, both $\pre_M^*(C)$ and $\post_M^*(C)$ can be accepted by $\DCM$s \cite{IbarraSu}. These operations have also been studied for other machine models, e.g.\ \cite{multipushdownmodel,Seth,MultiStackModelChecking,Bouajjani,Finkel2000}.
More recently, in  \cite{IbarraMcQuillanVerification}, connections were made between store languages and these operations, whereby determining that the store languages of a class of machines are always in certain families of languages implies that $\pre_M^*(C)$ and $\post_M^*(C)$ are in the same family, and vice versa.
More specifically, it was shown that under some simple conditions, any class of machines $\MM$ with only regular store languages always satisfies the following: given a regular set $C$, then both $\pre_M^*(C)$ and $\post_M^*(C)$ are also regular; further, for a class where the store languages can always be accepted by an $\NCM$, then given $C$ accepted by an $\NCM$, both $\pre_M^*(C)$ and $\post_M^*(C)$ can be accepted by an $\NCM$. Also, consider the following problem, called the {\em common configurations problem for a class of machines $\MM$}: ``given two machines $M_1, M_2 \in \MM$, are there any non-initial store configurations in common between $M_1$ and $M_2$?''.  The same paper \cite{IbarraMcQuillanVerification} shows that as long as store languages of a class can always be accepted by an $\NCM$, then the common configurations problem is always decidable. This problem is related to fault-tolerance: if one machine, $M_1$, has a store language that describes all faulty configurations, then computations of the other machine $M_2$ may lead to a fault configuration if and only if they have configurations in common.

In this paper, we are interested in one-way nondeterministic Turing machines with a single two-way read/write worktape. 
Such a machine is \emph{$k$-turn} (respectively $k$-visit, $k$-crossing) if, in every accepting computation, the number of changes in direction of movement of the read/write from left-to-right or right-to-left (respectively the number of visits to a worktape cell, or the number of crosses between the boundary of any two adjacent worktape cells) is at most $k$. It is \textit{finite-turn} (respectively finite-visit, finite-crossing), if it is $k$-turn (respectively $k$-visit, $k$-crossing) for some $k$. The class of one-way finite-turn nondeterministic (respectively finite-visit, finite-crossing) Turing machines is denoted by $\ftNTM$ (respectively $\fvNTM, \fcNTM$). All $\ftNTM$ machines are $\fvNTM$ (but not vice versa), and all $\fvNTM$ machines are $\fcNTM$ but not vice versa (since staying on a cell counts as a visit but not a cross). However, it is known that the languages accepted by $\ftNTM$ are properly contained in the languages accepted by $\fvNTM$, which are equal to the languages accepted by $\fcNTM$. Indeed, the languages accepted by $\fcNTM$ and $\fvNTM$ are closed under
Kleene-* but those accepted by $\ftNTM$ are not \cite{visitautomata}; 
also $\fvNTM$ and $\fcNTM$ precisely characterize an important family that has been studied in the formal language theory literature. Greibach showed \cite{visitautomata} that the languages accepted by $\fvNTM$ and $\fcNTM$ are equal to the languages generated by {\em absolutely parallel grammars} \cite{AbsolutelyParallelGrammars}. And it was shown by Latteux \cite{Latteux} that the languages generated by absolutely parallel grammars are identical to the languages generated by many different types of grammar models that are so-called {\em finite-index}
\cite{RozenbergFiniteIndexGrammars}, where there is some integer $k$ such that every word generated by the grammar has a derivation with at most $k$ nonterminals. The languages generated by finite-index grammars of the following types (which we will not define here) were all shown to coincide: ET0L systems, EDT0L systems (two types of Lindenmayer systems), context-free programmed grammars, ordered grammars, matrix grammars, and matrix grammars with appearance checking \cite{RozenbergFiniteIndexGrammars}. Thus,
finite-crossing and finite-visit Turing machines provide an automata model that accepts exactly the same languages as all of these grammar systems can generate. Hence, $\fvNTM$ and $\fcNTM$ are an important and natural class of machines. 
%We have not previously seen the equivalence of finite-crossing and finite-visit Turing machines with all of these finite-index grammar models mentioned in the literature, but it is important to note.

While it is already known that the store languages of all $\ftNTM$ are regular  \cite{StoreLanguages},
in this paper, we show that the store languages of the more powerful finite-visit and finite-crossing Turing machines are regular languages. This immediately shows that the languages accepted by deterministic finite-visit and deterministic finite-crossing Turing machines are closed under right quotient with regular languages, which was not previously known.  Furthermore, if we augment finite-crossing Turing machines with reversal-bounded counters, then the store languages of all such machines can be accepted by $\NCM$s. These store language results have applications to the $\pre^*$ and $\post^*$ operations, and the common configuration problem, with the common configurations problem being decidable for $\fcNTM$ augmented with reversal-bounded counters. 
%These results are especially nice given the characterization of $\fcNTM$ with all of the finite-index grammar models.

Next, we investigate decidability problems, such as, given Turing machine $M$ and $k \ge 0$, is $M$ $k$-turn, (respectively $k$-visit, $k$-crossing)? Also, given $M$, is $M$ finite-turn (respectively finite-visit, finite-crossing)? This would thereby guarantee that its store language is a regular language. Then, we describe the subtle difference between defining finite-crossing as ``in every accepting computation, each boundary between a pair of adjacent cells is crossed at most $k$ times'', and what we call {\em weak finite-crossing}: ``for every accepted word $w$, there is some accepting computation in which the boundary  of each boundary between a pair of adjacent cells is crossed at most $k$ times''. We use the former notion here, but it is the latter notion that is called finite-crossing in \cite{visitautomata}. Despite the two notions being equivalent in terms of languages accepted, this is not so for store languages. The store languages of the first class of machines are all regular languages, however we show that for the second notion, 
the store languages contain non-recursive languages. This shows that there are non-recursive store languages for these machines. This shows that even though two classes of machines accept the same family of languages, they can be dramatically different in terms of store languages. The subtleties of the machine model definition are crucial. 

Finally, we continue the study of store languages of classes of two-way machines. Previously, very little was known besides some examples of non-regular store languages being demonstrated \cite{StoreLanguages}.
We show that several models have non-recursive store languages, including two-way deterministic pushdown automata, and two-way deterministic $1$-counter machines. We also find other classes that have store languages with an undecidable membership problem.

\section{Preliminaries}

We denote the set of integers by $\mathbb{Z}$, the set of non-negative integers by $\mathbb{N}_0$, and the set of positive integers by $\mathbb{N}$. Given $n \in \mathbb{N}_0$, let $\pi(n)$ be $0$ if $n = 0$ and $1$ otherwise.

We assume a familiarity with the basics of formal language and automata theory, including one-way and two-way nondeterministic finite automata, regular languages, and Turing machines \cite{HU}.
An \emph{alphabet} $\Sigma$ is a finite set of symbols.
The \emph{empty word} is denoted by $\lambda$.
Given a word $w\in\Sigma^*$, the \emph{length} of $w$ is denoted by $\left|w\right|$.
Given a word $w = a_1 \cdots a_n, a_i \in \Sigma, 1 \le i \le n$, the {\em reverse} of $w$, $w^R = a_n \cdots a_1$.
The set of all non-empty words over $\Sigma$ is denoted by $\Sigma^+$, and the set of all words over $\Sigma$, including the empty word, is denoted by $\Sigma^*$.
A \emph{language} over $\Sigma$ is any subset of $\Sigma^*$.
Given $L_1, L_2 \subseteq \Sigma^*$, the left quotient of $L_2$ by $L_1$, 
$L_1^{-1}L_2 = \{ v \mid uv \in L_2, u \in L_1\}$, and the right quotient of $L_1$ by $L_2$,
$L_1L_2^{-1} = \{ u \mid uv \in L_1, v \in L_2\}$.
A language $L \subseteq \Sigma^*$ is {\em bounded} if $L\subseteq w_1^* \cdots w_n^*$ for some non-empty
words $w_1, \ldots, w_n \in \Sigma^*$; and $L$ is letter-bounded if $L \subseteq a_1^* \cdots a_n^*$ where $a_1, \ldots, a_n$ are letters of $\Sigma$.

In this paper, we are going to primarily focus on three types of automata: multi-counter machines, Turing machines, and Turing machines augmented by counters. All of the machines are assumed to be nondeterministic, and use a one-way read-only input tape. Hence, we will simply define the model combining together both storage types, and plain Turing machines will be restricted to only use the worktape, and counter machines will only use the counters.

\begin{definition}
A one-way nondeterministic Turing machine with $t$ counters is denoted by a tuple
$M = (Q, \Sigma, \Gamma, \delta, q_0, F)$, where
$Q$ is the finite set of states, $\Sigma$ is the finite input alphabet,
$\Gamma$ is the finite worktape alphabet containing the blank symbol $\blank$, $q_0 \in Q$ is the initial
state, $F \subseteq Q$ is the set of final states,
and
$\delta$ is a finite subset of
$\Omega_0 \cup \cdots \cup \Omega_{t}$, where
\begin{eqnarray*}
\Omega_0 = Q \times (\Sigma \cup \{\lambda,\emr\}) \times \{0\} \times \Gamma \times Q \times \Gamma \times \{{\rm L}, {\rm S}, {\rm R}\},\\
\Omega_i =  Q \times (\Sigma \cup \{\lambda,\emr\}) \times \{i\} \times \{0,1\} \times Q \times \{ -1,0, +1 \} \mbox{~for~} 1 \le i \le t.\\
\end{eqnarray*}

%The machine is called  {\em deterministic} if, for each $q\in Q ,c \in \Gamma, a \in \Sigma\cup \{\emr\}$
%then $| \delta(q,a,c) \cup \delta(q,\lambda,c)| \le 1$. In this case, we call $M$ a one-way deterministic Turing machine ($\DTM$).

A {\em configuration} of $M$ is a tuple
\begin{equation} (q, w, x \head y, z_1, \ldots, z_t),
\label{config}
\end{equation}
where $q \in Q$ is the current state,
$w \in \Sigma^*\emr \cup \{\lambda\}$ is the remaining input,  
$x \head y$ is the worktape contents with $x \in \Gamma^+, y \in \Gamma^*$, ($\head$ is a new symbol denoting the position of the read/write head, which is scanning the symbol immediately to its left), 
  and $z_i \in \mathbb{N}_0$ is the current contents of counter $i$ for $1 \le i \le t$. 
The {\em store configuration} of the configuration (\ref{config}) is the string $q x \head y C_1^{z_1} \cdots C_t^{z_t}$ where $C_1, \ldots, C_t$ are new symbols, and $\conf(M)$ is the set of all store configurations.
Configurations will change as follows.
\begin{itemize}
\item $(q,aw,x c \head y, z_1,\ldots, z_t) \vdash_M (q', w, x d \head y, z_1,\ldots, z_t)$, if
$(q, a, 0,c,  q',d, {\rm S}) \in \delta,$ (a stay transition on the worktape),
\item $(q,aw,x c \head y, z_1,\ldots, z_t) \vdash_M (q', w, x'  \head d' y, z_1,\ldots, z_t)$, if
$(q, a, 0,c, q',d, {\rm L}) \in \delta , (x = \lambda \implies x' = \blank, \mbox{~otherwise~} x' = x), (y = \lambda \wedge d = \blank \implies d' =\lambda, \mbox{~otherwise~} d' = d)$, (a left transition on the worktape),
\item $(q,aw,x c \head y,z_1,\ldots, z_t) \vdash_M (q', w, x d' c' \head y', z_1,\ldots, z_t)$, if
$(q, a, 0,c ,q', d, {\rm R}) \in \delta, (y = \lambda \implies y' = \lambda, c' = \blank, \mbox{~otherwise~} y = c'y', c' \in \Gamma ), (x = \lambda \wedge d = \blank \implies d' =\lambda, \mbox{~otherwise~} d' = d)$, (a right transition on the worktape),
\item $(q,aw,x c \head y,z_1,\ldots, z_t) \vdash_M (q', w, x c \head y, z_1,\ldots, z_{i-1}, z_i+z, z_{i+1}, \ldots, z_t)$, if
$(q, a, i, \pi(z_i), q',z) \in \delta$ where $1 \le i \le t$, (a counter transition).
\end{itemize}
A {\em computation} of $M$ on $w$ is a derivation
$(p_0,w_0,\gamma_0,z_{0,1}, \ldots, z_{0,t}) \vdash \cdots \vdash (p_n,w_n,\gamma_n, z_{n,1}, \ldots, z_{n,t})$
where $p_0 = q_0, w_0 = w \emr$, and $\gamma_0 = \blank \head$.
This computation is {\em accepting} if $p_n \in F$ and $w_n =\lambda$.
Given such a computation, the {\em worktape address} of configuration $j$, $c_j = (p_j,w_j,\gamma_j, z_{j,1}, \ldots, z_{j,t})$,
for $0 \le j \le n$, denoted $\add(c_j)$ is defined inductively to be $1$ if $j = 0$; and otherwise it is the address of $c_{j-1}$
plus $1$ (resp.\ $-1$, $0$) if $c_{j-1} \vdash c_j$ with a transition that moves right on the worktape (resp.\ moves left on the worktape, stays or uses a counter).
Let $\vdash_M^*$ be the reflexive and transitive closure of $\vdash_M$.

The language accepted by $M$, denoted by $L(M)$, is defined to be the set of all $w \in \Sigma^*$
such that there is an accepting computation of $M$ on $w$.
Furthermore, the store language of $M$ is defined to be:
$$S(M) = \left\{  qx C_1^{z_1} \cdots C_t^{z_t} \left|  \begin{aligned}\  & (q_0, w\emr,  \blank \head, 0, \ldots, 0 ) \vdash_M^*
(q, w', x, z_1, \ldots, z_t) \vdash_M^* (q_f, \lambda, x', z_1', \ldots, z_t'), \\
& q_f \in F,w \in \Sigma^*, w' \in \Sigma^*\emr \cup \{\lambda\}.\end{aligned} \right.\right\}$$
%For $q \in Q$, the $q$-store language of $M$ is defined to be $S_q(M) = \{x \mid qx \in S(M)\}$.
\end{definition}

Intuitively, $\Omega_0$ contains all possible transitions involving the worktape, and $\Omega_i$ contains all possible transitions involving the $i$th counter, for $1 \le i \le t$. 
Therefore, each transition only uses one store at a time (either the worktape, or one of the counters), and it would require multiple transitions to read from multiple stores.
Here the symbol $\emr$ is the
right end-of-input symbol. We will sometimes omit the end-marker for one-way nondeterministic machines, as these machines can guess that they are at the end of the input, as is common for say one-way nondeterministic pushdown automata.
The first component is always the current state, the second component is the input letter being read (or the empty word, or the end-marker), and the third component is the current store being accessed. For the worktape store (so a transition in $\Omega_0$), the fourth component is the current symbol under the read/write head of the store, the fifth component is the state to switch to, the sixth component is the worktape symbol to replace the current symbol, and the seventh component is either ${\rm L}, {\rm S}, {\rm R}$ which provides the direction to move the read/write head on the worktape (left, stay, or right respectively).
For the counter stores in $\Omega_i, i \ge 1$, the fourth component is $0$ if counter $i$ is currently zero and $1$ otherwise, the fifth component is the state to switch to, and the sixth component is the value to add to counter $i$.

Notice that in such a machine with $0$ counters, the store language only encodes the state and worktape and does not use the symbols $C_1, \ldots, C_t$. In such a case, these are simply one-way nondeterministic Turing machines, which we refer to by $\NTM$. In this case, we leave off the third component of the transitions as they are all $0$.
Notice that the read/write head
is encoded within the worktape contents encoded in the store language.
We sometimes examine the restriction where we leave off the worktape and only have counters. We denote by $\COUNTER(t)$ machines with $t$ counters, and $\COUNTER$ all machines with some number of counters.

%Notice that for $M$ with state set $Q$, $S(M) = \bigcup_{q \in Q} q S_q(M)$, and also that
%for each $q \in Q$, $S_q(M) = q^{-1} S(M)$.

Next, we study a restriction of the counters, and then three different restrictions of the worktape.
It is well known that Turing machines with one worktape accept all recursively enumerable languages, and likewise one-way deterministic machines with $2$ counters also accept all recursively enumerable languages \cite{HU}. Hence, restrictions are necessary for certain problems to be decidable or to have simpler store languages. All restrictions studied will limit their power.

A one-way nondeterministic Turing machine with $t$ counters $M = (Q, \Sigma, \Gamma, \delta, q_0, F)$ is 
 {\em $r$-reversal-bounded} if, in each accepting computation, the number of changes between sequences of non-decreasing transitions (where transitions can add $1$ or $0$) and sequences of non-increasing transitions (where transitions can add $-1$ or $0$), or vice versa, in each counter is at most $r$. The machine is {\em reversal-bounded} if it is $r$-reversal-bounded for some $r$. The class of all one-way nondeterministic Turing machines with some number of reversal-bounded counters is denoted by $\NTCM$. If the machine does not use the worktape, so $\Omega_0 \cap \delta = \emptyset$, then the class of all one-way nondeterministic machines with some number of reversal-bounded counters (no worktape) is denoted by $\NCM$.
%In this paper, we will also  briefly use two-way nondeterministic reversal-bounded multi-counter machines where the read-only input head can move in both directions, denoted by $2\NCM$ (also studied in \cite{Ibarra1978,StoreLa¡nguages}).

All three restrictions of the worktape were studied in \cite{visitautomata} (finite-turn was called reversal-bounded in their paper, but we use finite-turn here to disambiguate with our use of counters). 
\begin{definition}
Let $M = (Q, \Sigma, \Gamma, \delta, q_0, F)$ be an $\NTM$ (respectively an $\NTCM$).
Let
$(p_0,w_0,\gamma_0, 0, \ldots, 0) \vdash \cdots \vdash (p_n,w_n,\gamma_n, z_1, \ldots, z_n),$ be an
accepting computation, where $i_j$ is the address of the $j$th configuration, for $ 0 \le j \le n$. We say the accepting computation
{\em makes a turn} in the $j$th configuration where $j \ge 1$, if either $i_j < i_{j-1}$ and 
there exists a largest $1 \le l \le j-1$ such that $i_l \ne i_{j-1}$ which has $i_l < i_{j-1}$; or
$i_j > i_{j-1}$ and there exists a 
largest $1 \le l \le j-1$ such that $i_l \ne i_{j-1}$ which has $i_l > i_{j-1}$.
Then, we say the accepting computation is
\begin{itemize}
\item \emph{$k$-turn} if the number of turns is at most $k$,
\item \emph{$k$-crossing} if,
for each $r \in \mathbb{Z}$, $| \{j \mid  i_j = r, i_{j+1} = r+1  \}  \cup \{ j \mid i_j = r+1, i_{j+1} = r   \}| \le k,$
\item \emph{$k$-visit} if,
for each $r \in \mathbb{Z}$, $| \{j \mid  i_j = r\}| \le k$. 
\end{itemize}
 Further, we say $M$ is \emph{$k$-turn} (respectively $k$-visit, $k$-crossing) if every accepting computation is $k$-turn (respectively $k$-visit, $k$-crossing). Also, $M$ is \emph{finite-turn} (respectively finite-visit, finite-crossing) if it is $k$-turn (respectively $k$-visit, $k$-crossing) for some $k \ge 0$.
\end{definition}

Given a class of machines $\MM$, let $\LL(\MM)$ be the family of languages accepted by machines in $\MM$, and let
$\SS(\MM)$ be the family of store languages of machines in $\MM$.
%We also define $\SS_{\rm state}(\MM)$ to be the family of all $S_q(M)$ where $M \in \MM$ and $q$ is a state of $M$.
%It is immediate that if $\SS( \MM)$ is contained in some family $\LL$ and $\LL$ is closed under left quotient with a single symbol, then $\SS_{\rm state}(\MM)$ is also contained in $\LL$. 

We denote the class of one-way nondeterministic finite automata by $\NFA$, and class of two-way nondeterministic finite automata by 
$2\NFA$, and as a short-form we denote the class of regular languages by $\REG$ and the class of recursively enumerable languages by $\RE$.
We denote the class of all finite-turn (respectively finite-visit, finite-crossing) Turing machines by $\ftNTM$ (respectively $\fvNTM$, $\fcNTM$).
We denote the class of all finite-turn (respectively finite-visit, finite-crossing) Turing machines with some number of reversal-bounded counters by $\ftNTCM$ (respectively $\fvNTCM$, $\fcNTCM$).

As noted in Section \ref{sec:intro}, every finite-turn Turing machines is finite-visit, but not vice versa (see Example \ref{TMwSwH*} below), and all finite-visit machines are finite-crossing, but not vice versa (each transition that stays on the same worktape cell increases the visit count but not the crossing count). In terms of languages accepted, however, it was shown in \cite{visitautomata} that $\LL(\ftNTM) \subsetneq \LL(\fvNTM) = \LL(\fcNTM)$.
In particular, $\LL(\ftNTM)$ is not closed under Kleene-*, but $\LL(\fcNTM)$ is closed under it. Also, as mentioned in Section \ref{sec:intro}, $\LL(\fcNTM)$ is equal to the family of languages generated by many types of finite-index grammar systems, such as finite-index matrix grammars.
Furthermore, $\LL(\fcNTCM)$ was studied in \cite{Harju} where it was shown that all languages in this family are semilinear (with an effective procedure) implying a decidable emptiness, membership, and finiteness problem.

\section{Store Languages of Finite-Visit and Finite-Crossing Turing Machines}
\label{storeFiniteVisit}

We give two examples of Turing machines and their store languages.

\begin{example}\label{TMwSw}
    Consider the $\NTM$ $M=(Q,\Sigma,\Gamma,\delta,q_0,F)$, where $\Sigma = \{a,b,\$\}$, $\overline{\Sigma} = \{a,b\}$, $\Gamma=\{a,b,\blank\}$, $F=\{q_4\}$, and $\delta$ contains the following transitions, for all $c\in\overline{\Sigma}$:
    \begin{align*}
            (q_0,c,0,\blank, q_1,c, {\rm R}), 
            (q_1,c,0,\blank,q_1,c, {\rm R}), 
            (q_1,\$,0,\blank,q_2,\blank, {\rm L}), 
            (q_2,\lambda,0,c,q_2,c, {\rm L}),\\
            (q_2,\lambda,0, \blank,q_3,\blank, {\rm R}),
            (q_3,c,0,c,q_3,c , {\rm R}), 
            (q_3,\emr,0,\blank,q_4,\blank, {\rm S}).
    \end{align*}
    Note that the third components are all $0$'s as it does not have any counters.
    This machine accepts the language $\{w\$w\mid w\in \overline{\Sigma}^+\}$, which is a well-known non-regular and non-context-free language \cite{HU}.
    It operates by copying the first part of the input (up to $\$$) to the tape, moving the read/write head back to the leftmost end of the tape, then matching the second part of the input with the contents of the tape.
    Hence this machine is 2-turn. It is also $3$-crossing, and $3$-visit.

    The store language of this machine, $S(M)$, is
$$\{q_0\blank\head\}
        \cup\{qx\blank\head\mid q\in\{q_1,q_3,q_4\},x\in\overline{\Sigma}^+\}
        \cup\{q_2\blank\head x\mid x\in\overline{\Sigma}^+\}
        \cup\{qx\head y\mid q\in\{q_2,q_3\},x\in\overline{\Sigma}^+,y\in\overline{\Sigma}^*\},$$
    which is a regular language.
\end{example}

This example is indicative of a general important property of store languages of $\ftNTM$ Turing machines.
We have the following from \cite{StoreLanguages}.
\begin{proposition}[\cite{StoreLanguages}]
 $\SS(\ftNTM) \subseteq \REG$.
\end{proposition}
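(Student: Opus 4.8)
The plan is to characterize membership in $S(M)$ as the conjunction of a forward reachability and a backward reachability condition on store configurations, and to show that each condition defines a regular set by exploiting the bounded number of turns through a crossing-sequence argument. Fix a $k$-turn $\NTM$ $M=(Q,\Sigma,\Gamma,\delta,q_0,F)$; since an $\ftNTM$ has no counters, store configurations have the form $q x \head y$, and by the endmarker-free convention noted in the preliminaries I may assume $M$ nondeterministically guesses the end of the input and thereafter makes only $\lambda$-moves.

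First I would define $R$ to be the set of store configurations $q x \head y$ reachable from the initial configuration by a computation that has not yet guessed the end of input, and $A$ to be the set of $q x \head y$ from which $M$ can reach an accepting configuration, and then verify $S(M)=R\cap A$. The inclusion $S(M)\subseteq R\cap A$ is immediate by splitting an accepting computation at the configuration in question. For the reverse inclusion I rely on the input being existentially quantified: given a prefix reaching $q x \head y$ while reading some $u_1\in\Sigma^*$ and a suffix from $q x \head y$ reading some $u_2$ and accepting, the word $u_1u_2$ is a legal input on which the concatenated computation is accepting and passes through $q x \head y$. Because $M$ is $k$-turn, this concatenated accepting computation is automatically $k$-turn, so no turn counting is needed at the join.

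The main step is to show $R$ (and symmetrically $A$) is regular. A $k$-turn computation decomposes into at most $k+1$ maximal monotone sweeps of the worktape head, so each boundary between adjacent cells is crossed at most $k+1$ times; hence $M$ is $(k+1)$-crossing. For each boundary I record its \emph{crossing sequence}: the bounded-length list of states in which the head crosses it, each entry tagged with the index of the sweep producing it. Since $\Gamma$, $Q$, and the crossing length are finite, there are finitely many such sequences. I would then build an $\NFA$ reading the candidate $q x \head y$ left to right, carrying in its state the crossing sequence of the boundary just passed and guessing that of the current cell's right boundary; it accepts iff every cell admits a \emph{locally consistent history} — a sequence of entries and exits matching the two adjacent crossing sequences in order, using valid transitions of $\delta$, rewriting the cell's content between successive visits, ending with the symbol of $x\head y$ at that position, and, for the designated cell, placing the head at $\head$ at the terminal instant of the prefix. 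The end conditions (the computation begins on a single $\blank$ cell; no crossing leaves the used region) are checked at the scan boundaries. As this verification is local to each cell and uses only finite state, $R$ is regular; the construction for $A$ is analogous, terminating the suffix in a final state with all input consumed. Thus $S(M)=R\cap A\in\REG$.

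The step I expect to be the main obstacle is reconciling the \emph{temporal} order of the computation — input is consumed one-way and monotonically, and the end-of-input guess occurs at a single moment after which only $\lambda$-moves are allowed — with the \emph{spatial} organization of crossing sequences, which are indexed by cell boundaries rather than by time. The relative timing of any two crossings is determined by the pair (sweep index, position within the sweep), and within a single sweep the temporal order of cell visits coincides with their spatial order (up to the sweep's direction). The resolution I would use is to guess globally the sweep $s^{*}\in\{1,\dots,k+1\}$ during which the end of input is declared, together with a single spatial threshold within that sweep; the end-guessed status of every crossing is then \emph{determined} by comparing its sweep tag with $s^{*}$, so the $\NFA$ can enforce that no $\Sigma$-letter is read after the declaration while tracking only a single one-time flip during its left-to-right scan. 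Verifying that this bounded bookkeeping is both sound and complete, and that it threads correctly through the cell scan together with the terminal-instant marking, is the technical heart of the argument.
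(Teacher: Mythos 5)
Your decomposition $S(M)=R\cap A$ is where the proof breaks, not because the set identity is useless, but because the main step — ``show $R$ (and symmetrically $A$) is regular'' — is false for the sets as you defined them. In this paper (and in the cited source) $M$ is $k$-turn when every \emph{accepting} computation makes at most $k$ turns; nothing constrains computations that never lead to acceptance, nor computations issuing from configurations that are not reachable. Your $R$ collects configurations reachable by \emph{arbitrary} computation prefixes, and such prefixes need not be extendable to accepting computations, so the turn bound simply does not apply to them. Concretely, take a machine that nondeterministically either (branch 1) reads its input and accepts without touching the worktape, or (branch 2) ignores the input, runs the duplication routine of Example~\ref{TMduplicate} on its worktape, and never enters a final state. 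Every accepting computation is $0$-turn, so this is an $\ftNTM$ and its store language is regular; but $R$ contains all branch-2 configurations, including the non-regular set of configurations with tape contents $w\$\head w$. The same objection hits $A$: a computation from an unreachable configuration into a final state is not a suffix of any accepting computation, so it escapes the turn bound as well. Hence your sentence ``a $k$-turn computation decomposes into at most $k+1$ sweeps, hence $M$ is $(k+1)$-crossing'' is true but irrelevant to $R$ and $A$: the crossing-sequence $\NFA$ you build recognizes only the configurations witnessed by turn-bounded computations, which is in general a proper subset of $R$ (resp.\ $A$). This is exactly the accepting-computations-only subtlety that the paper stresses when separating finite-crossing from weakly finite-crossing machines, where ignoring it turns regular store languages into non-recursive ones. (A second, smaller slip: you restrict $R$ to computations that ``have not yet guessed the end of input,'' but $S(M)$ explicitly contains configurations occurring after the end of input is consumed, so with that restriction even $S(M)\subseteq R\cap A$ can fail.)

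The repair is small and uses the observation you already made at the join. Define $R_k$ (resp.\ $A_k$) as the configurations reachable from the initial configuration (resp.\ from which a final configuration is reachable) by a computation making at most $k$ turns — i.e., exactly what your $\NFA$s actually recognize. Then $S(M)=R_k\cap A_k$: for the forward inclusion, split an accepting computation at the configuration; since $M$ is $k$-turn, the whole computation, hence each half, makes at most $k$ turns. For the reverse inclusion, concatenate the two witnesses; the result is an accepting computation passing through the configuration — no turn counting is needed at the join, which is precisely your remark. With the definitions corrected this way, your sweep-tagged crossing-sequence construction is legitimate (the witnesses are turn-bounded by definition), and the argument goes through. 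Note that this is a genuinely different route from the paper's: there, the entire computation history is encoded as a multi-track word carrying a doubly-linked-list structure, a $2\NFA$ checks its consistency and that track $0$ holds the store word at a guessed instant, and a gsm extracts the answer; that encoding does not rely on a global ordering of sweeps and therefore extends to the finite-crossing case, where your sweep indexing would no longer be available.
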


Whether this was true for finite-visit and finite-crossing Turing machines was left open.
\begin{example}\label{TMwSwH*}
    Consider the following Turing machine
which is a  modified version of the machine given in Example \ref{TMwSw},
$M=(Q,\Sigma,\Gamma,\delta,q_0,F)$, where $\Sigma=\{a,b,\$,\#\}$, $\overline{\Sigma} = \{a,b\},\Gamma=\{a,b, \#,\blank\}$, $F=\{q_0\}$, and $\delta$ contains the following transitions, for all $c\in\overline{\Sigma}$ (leaving off all third components as they are all $0$'s):
    \begin{align*}
            (q_0,\lambda,\blank, q_1,\#, {\rm R}),
            (q_1,c,\blank, q_1,c, {\rm R}), 
            (q_1,\$,\blank, q_2,\blank, {\rm L}),
            (q_2,\lambda,c, q_2,c, {\rm  L}), \\
            (q_2,\lambda,\#, q_3,\#, {\rm R}), 
            (q_3,c,c, q_3,c, {\rm R}),
            (q_3,\#,\blank,q_0,\blank,{\rm S}).
    \end{align*}
    This machine accepts the language 
    $\{w\$w\# \mid w\in \overline{\Sigma}^*\}^*$.
    For each segment $u\$v$ of the input separated by $\#$'s, it writes $\#$, copies the string $u$ to the tape, moves the read/write head left until it reaches $\#$, then matches the string $v$ from the input with the contents of the tape to the right of the head.
    It then repeats this for the next segment of the input using a new section of the worktape.
        
    This machine is not finite-turn, since it makes $2n$ turns for an input with $n$ segments. However, $M$ is $4$-visit and $3$-crossing.
%Indeed, it is known that $\LL(\fcNTM)$ is closed under Kleene-* but $\LL(\ftNTM)$ is not. 
The store language of this machine is:
    \begin{multline*}
        S(M) = \{q_0x\blank\head\mid x\in(\#\overline{\Sigma}^*)^*\}
        \cup \{qx\blank\head\mid q\in\{q_1,q_3\},x\in(\#\overline{\Sigma}^*)^+\}\\
        \cup \{q_2x\head y\mid x\in(\#\overline{\Sigma}^*)^+, y\in\overline{\Sigma}^*\}
        \cup \{q_3x\head y\mid x\in(\#\overline{\Sigma}^*)^*\#\overline{\Sigma}^+, y\in\overline{\Sigma}^*\},
    \end{multline*}
    which is a regular language.
\end{example}
Next we shall prove that this is true in general for any finite-crossing Turing machine.

%There are non-finite-crossing Turing machines such as Example \ref{duplicate} whose store languages are not regular, hence a restriction on Turing machines is needed to always produce regular store languages.

\begin{theorem} \label{fcNTM}
 $\SS(\ftNTM) \subseteq \REG$ and $\SS(\fvNTM) \subseteq \REG$.
\end{theorem}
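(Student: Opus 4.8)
The first inclusion $\SS(\ftNTM) \subseteq \REG$ is the preceding Proposition (and in any case follows from the second, since $\ftNTM \subseteq \fvNTM$), so the plan is to establish $\SS(\fvNTM) \subseteq \REG$. Fix an $\fvNTM$ $M$ that is $k$-visit. Since each worktape cell is occupied at most $k$ times in any accepting computation, every boundary between two adjacent cells is crossed at most $K = 2k$ times, so $M$ is finite-crossing with a fixed crossing bound. I will build an $\NFA$ $A$ with $L(A) = S(M)$ that reads a candidate store configuration $q x \head y$ from left to right, guesses a succinct summary of a hypothetical accepting computation passing through it, and verifies local consistency; regularity of $S(M)$ is then immediate.

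The summary is based on \emph{crossing sequences}. For a fixed accepting computation through the snapshot $(q, w', x\head y)$ and for each boundary $b$ between adjacent worktape cells, record the temporal list of control states in which the head crosses $b$ (the directions necessarily alternate, so only the first need be remembered). Because the computation is $K$-crossing, every crossing sequence has length at most $K$ and is drawn from a finite set. I attach to cell $i$ the finite datum $D_i$ consisting of (i) the tape symbol it holds at the snapshot, which $A$ reads off its input; (ii) the crossing sequences of its left and right boundaries, each split into a prefix occurring before the snapshot time and a suffix occurring after; and (iii) a valid local run: the bounded sequence of steps the head performs while on cell $i$, including stay-moves, rewrites, and the states at which it enters and leaves through each boundary. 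The $k$-visit bound makes each $D_i$ an element of a fixed finite set.

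The verification treats the snapshot as gluing together a ``before'' and an ``after'' partial computation at the per-cell symbols. Reading left to right, $A$ checks: (a) adjacency --- the right crossing sequence of cell $i$ (with its split) equals the left crossing sequence of cell $i+1$, as they describe the same boundary; (b) locality --- the local run in $D_i$ respects $\delta$, turns the initial blank of cell $i$ into the snapshot symbol along its before-steps and continues from it along its after-steps, and agrees with the entry/exit states dictated by the two crossing sequences; (c) the head cell --- the cell scanned at the snapshot is where the before- and after-runs meet in state $q$, split mid-visit, consistent with the read symbol; and (d) parity --- the net number (rightward minus leftward) of before-snapshot crossings is $+1$ for every boundary strictly left of the head and $0$ for every boundary at or right of it, which is exactly where the head sits relative to each boundary. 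At the two ends, $A$ guesses the position inside $x$ of the initial cell, where the trajectory begins in state $q_0$ on blank tape (the head may have extended the tape leftward), and it uses a precomputed finite set of right-boundary crossing sequences from which the computation can be completed to acceptance, possibly by venturing onto fresh blank cells beyond $y$; such continuations on blank tape are themselves $K$-crossing and form a regular set absorbable into the finite control.

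The main obstacle is the sufficiency, or stitching, direction: that a consistent assignment of these local data can be reassembled into an actual accepting computation through the snapshot. Two features of the model could a priori break the standard crossing-sequence stitching argument. First, the one-way input head carries global information not recorded in crossing sequences; this is harmless here because the input is existentially quantified in the store language, so once a consistent head trajectory is stitched together, its input-reading steps occur in a definite temporal order and one may simply take the input to be the symbols those steps read, which is then consumed exactly and imposes no further constraint. Second, the snapshot introduces a single global cut in time; rather than enforce one cut across all boundaries at once, I handle the before- and after-snapshot portions as two independent crossing-sequence problems sharing only the per-cell snapshot symbols, reducing everything to two applications of the stitching lemma. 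The remaining technical work is the careful treatment of the end conditions --- placement of the initial cell within $x$ and completion of acceptance on fresh blank cells.
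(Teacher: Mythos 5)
Your proposal is correct in outline, but it takes a genuinely different route from the paper. The paper proves the stronger statement that $\SS(\fcNTM)\subseteq\REG$ (which subsumes both claims) by encoding an \emph{entire} accepting computation as a single string over a columnar alphabet with $k+1$ tracks: track $0$ carries the snapshot (the candidate store word, with the head position marked), and tracks $1,\dots,k$ carry a doubly-linked list of per-cell events with explicit successor/predecessor pointers (direction plus track index). A $2\NFA$ walks this encoding by following the pointers, simulating $M$'s transition relation, checking pointer-consistency conditions that in particular rule out spurious cycles in the linked structure, and switching from unmarked to marked symbols at the snapshot; regularity then follows from the $2\NFA$-to-regular theorem plus closure of $\REG$ under gsm mappings to project out track $0$. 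You instead run the classical Hennie--Shepherdson crossing-sequence argument directly on the store word: a one-way $\NFA$ guesses, per cell, a bounded summary (snapshot symbol, boundary crossing sequences split at the snapshot, and a local run) and checks adjacency, locality, head, parity, and end conditions. What your route buys is an $\NFA$ accepting $S(M)$ with no detour through a history encoding, two-way automata, or gsm closure; what it costs is that all the global-consistency content (which the paper discharges via its track-ordering condition and explicit acyclicity argument) is concentrated in the stitching lemma, which you correctly identify as the crux but only sketch --- a complete write-up would have to prove (or precisely cite) that locally consistent, temporally ordered crossing data in one dimension always reassemble into a genuine computation. Note also that the paper's proof covers $\fcNTM$, where stay-sequences on a cell are unbounded (it compresses them by recording only the last symbol written before the head moves off), whereas your bounded ``local run'' datum works verbatim only for $\fvNTM$; that suffices for the theorem as stated but is strictly less than what the paper's construction establishes.

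Two details in your sketch need repair, though both fit inside your framework. First, your parity condition (d) is stated as if the computation begins at the leftmost cell of $x$; since the worktape grows leftward, the initial cell can lie strictly inside $x$ (as you yourself note when guessing its position), so the net before-snapshot flow across a boundary must be $+1$ exactly on boundaries strictly between the guessed initial cell and the snapshot head position, and $0$ elsewhere. Second, ``the input imposes no further constraint'' is slightly too quick: acceptance requires the reconstructed reads to form $w\emr$ with nothing read after the end-marker, which is a global temporal condition; it is handled by threading a single ``end-marker already read'' bit through the crossing-sequence states and local runs, but this augmentation should be stated.
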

\begin{proof}
It is enough to prove this for $\fcNTM$ since all finite-visit machines are finite-crossing.

    We use a construction that is somewhat similar to the one found in the proof of the previous proposition for $\ftNTM$ given in \cite{StoreLanguages}, however, it is quite a bit more complex.
    Let $M=(Q,\Sigma,\Gamma,\delta,q_0,F)$ be an $r$-crossing Turing machine. Let $k = 2r$. Notice that if $M$ is $r$-crossing, then each cell is involved in at most $2r=k$ crossings, where at most $r$ come from the cell to the left, and at most $r$ from the cell to the right.
    Let $\overline{\Gamma}=\{\overline{a}\mid a\in\Gamma\}$ be a new alphabet.
    Symbols in $\Gamma$ are said to be \emph{unmarked}, and those in $\overline{\Gamma}$ are said to be \emph{marked}.
We are going to construct a $2\NFA$ $M'$ that accepts words that represent a ``history'' of a computation of $M$. First, we will describe the alphabet $C$.   Intuitively, each symbol of $w$ to $M'$ is a column representing the history of a single cell of the worktape of $M$, and each component of the symbols of $w$ is a ``track'' (we call these tracks $0$ through $k$).
Track 0 will contain a representation of the worktape on some single point of time from the history of this computation, with the state in the first cell, and with a single marked symbol which represents the position of the read/write head at that point of time.
    In tracks 1 through $k$, the numbers and arrows encode a doubly linked list structure representing actions that $M$ performs on its tape given some input.

    Formally, define a new alphabet $C$ whose elements are of the form $(c_0,\dots,c_{k})$, where $c_0\in Q\cup\Gamma\cup \overline{\Gamma}$ and for each $j=1,\dots,k$, either $c_j=\varnothing$ or
    \[c_j\in(\Gamma \cup \overline{\Gamma}) \times\{\la,\ra,\varnothing\} \times\{1,\dots,k,\varnothing\} \times\{\la,\ra,\varnothing\} \times\{1,\dots,k,\varnothing\}.\]
    For such a $c_j=(x,s_d, s_t, p_d, p_t)$, $1 \le j \le k$, we call $x$ the tape symbol, $s_d$ the successor direction, $s_t$ the successor track, $p_d$ the predecessor direction, and $p_t$ the predecessor track. The new symbol $\varnothing$ is used as an ``unused'' marker.  
    
    Next, construct a $2\NFA$ $M' = (Q', C, \delta', q_0', F')$, where we will describe the construction of $Q'$ and $\delta'$ below, with input delimited by left and right end-markers $\triangleright$ and $\triangleleft$, \textit{i.e.}, the input to $M'$ is $\triangleright w\triangleleft$ where $w\in C^*$.
    \begin{figure}[h!]
        \centering
        \includegraphics[height=2.2in]{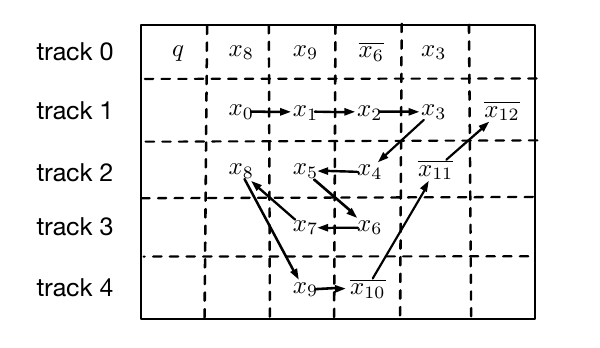}
        \caption{Example input to the $2\NFA$.}
        \label{2nfainput}
    \end{figure}
    Figure \ref{2nfainput} shows a visualization of an example input given to $M'$ which encodes a simulation of the following sequence of instructions of $M$. The encoding below shows the symbol to write in the current cell with a transition that moves to another cell (possibly after a sequence of stay transitions) along with the direction to move as superscript, corresponding to the computation from this example:
    \[x_0^\ra,\ x_1^\ra,\ x_2^\ra,\ x_3^\la,\ x_4^\la,\ x_5^\ra,\ x_6^\la,\ x_7^\la,\ x_8^\ra,\ x_9^\ra,\ x_{10}^\ra,\ x_{11}^\ra,\ x_{12},\]
    which corresponds to the following sequence of worktape contents of $M$:
\begin{align*}
        \blank\head,\ x_0\blank\head,\ x_0 x_1 \blank\head,\ x_0 x_1 x_2 \blank\head,\ x_0 x_1 x_2\head x_3,\ x_0 x_1 \head x_4 x_3 ,\ x_0 x_5 x_4 \head x_3, 
        x_0 x_5 \head x_6 x_3,\\ x_0\head x_7 x_6 x_3,\ x_8 x_7 \head x_6 x_3,\ \mathbf{ x_8 x_9 x_6 \head x_3},\ x_8 x_9 x_{10} x_3 \head,\ x_8 x_9 x_{10} x_{11} \blank\head,\ x_8 x_9 x_{10} x_{11} x_{12}\head.
\end{align*}
The bolded configuration corresponds to the figure.
    In Figure \ref{2nfainput}, we only visually show the arrows encoded by the second and third components of each track (the second component encodes arrow direction and the third component encodes the track of the arrow destination), and for each track there is an arrow encoded by the fourth and fifth components that exactly follows the shown arrow backwards.
    For example, the first three characters of the input corresponding to this visualization would be:
    \begin{gather*}
        (q,\varnothing,\varnothing,\varnothing,\varnothing) \\
        (x_8,(x_0,\ra,1,\varnothing,\varnothing),(x_8,\ra,4,\ra,3),\varnothing,\varnothing) \\
        (x_9,(x_1,\ra,1,\la,1),(x_5,\ra,3,\ra,2),(x_7,\la,2,\ra,3),(x_9,\ra,4,\la,2)).
    \end{gather*}
    The last component of the third character is encoded as $(x_9,\ra,4,\la,2)$, since the next node in the list is the fifth track of the column to the right, and the previous node in the list is the third track of the column to the left.

    Intuitively, as the arrows describe the movement between worktape cells of the computation, they fill in the next available unused track, from track $1$ to $k$. In this way, a linked list representing any $r$-crossing computation of $M$ can be stored in a string $w$ within these $k$ tracks. By following the arrows of the linked list, it is possible to follow the simulation of $M$ represented by $w$.
    At some nondeterministically chosen point in the simulation, track $0$ will be verified to contain the current tape contents at that point. Since that point is chosen nondeterministically, we get an encoding of each possible store language word on the first track with the read/write head marked. 
From then on, it continues the simulation in tracks $1$ to $k$, but we switch from using unmarked symbols to marked symbols.
    We can apply a generalized sequential machine (gsm) to extract the word contained in track $0$ (erasing blanks at the ends, and replacing the only marked symbol in track 1 $\overline{a}\in \overline{\Gamma}$ with $a\head$) to obtain a word in $S(M)$.

    The input to $M'$ is accepted if and only if $M'$ is able to successfully verify all of the following conditions:
    \begin{enumerate}
    
           \item Track 0 of the input word contains a word in $Q\Gamma^* \overline{\Gamma}\Gamma^*$, \textit{i.e.}, a word that could possibly be in $S(M)$ after applying the gsm mentioned above. In particular, track 0 contains exactly one marked symbol.
 
        \item In each column, tracks $1$ to $k$ are filled contiguously from track 1 downwards from the start to the end of the linked list.
        This implies that if some track $i$ is $\varnothing$, then tracks $i+1,\dots,k$ are also $\varnothing$. Thus, consider a node
        $(x,s_d,s_t,p_d,p_t)$ in the $i$th track, $i \ge 1$. Examining the column in direction $s_d$, the first track after $s_t$ with direction opposite $s_d$ (if it exists) must be to track $i+1$.
        
        % Conditions 2-6 concern predecessors/successors
        
        \item There is exactly one node whose predecessor track and direction are both $\varnothing$, and this node is located in track 1. This node is called the beginning node. There is exactly one node whose successor track and direction are both $\varnothing$, and this node is located in the bottom-most non-empty track of its column. This node is called the end node. The predecessor and successor tracks and directions of all other nodes are not $\varnothing$.
        
        \item Consider a node  $(x,s_d, s_t, p_d, p_t)$ in the $i$th track. Then the predecessor track of the node in the $s_t$th track of the column to the left (if $s_d = \la$) or to the right (if $s_d = \ra$) must be set to $i$, and the predecessor direction of this node must be set to the opposite direction of $s_d$. The successor link from the predecessor of every node is verified similarly.

%        \item In each column, if a track $\alpha \ge 2$ has a predecessor in track $\beta$ of an adjacent column in direction $p_d$, then it must not have a successor in tracks $1,\dots,\beta$ of the adjacent column in direction $p_d$ (that is, the successor has to use a track after the predecessor). Further, in each column, if the node in track $\alpha \ge 1$ has a successor in the adjacent column in direction $s_d$, then the node in track $\alpha+1$ must either be $\varnothing$ or have a predecessor in the adjacent column in direction $s_d$.
        
%        \item Arrows of the linked list must not cross. That is, if the node in track $\alpha \ge 1$ has a predecessor (or successor, respectively) in track $\beta$ of the adjacent column in direction $p_d$ (respectively $s_d$), then the nodes in tracks $1,\dots,\alpha-1$ must not have predecessors (successors, respectively) in tracks $\beta+1,\dots,k$ of the adjacent column in direction $p_d$ (respectively $s_d$), and the nodes in tracks $\alpha+1,\dots,k$ must not have predecessors (successors, respectively) in tracks $1,\dots,\beta-1$ of the adjacent column in direction $p_d$ (respectively $s_d$).
        
        % Conditions 7-11 concern marked/unmarked symbols

        \item In each column, if the symbol in track 0 is unmarked, then it must match the bottom-most unmarked symbol of the nodes in tracks $1,\dots,k$.
        
%        \item The first node in the linked list sequence with a marked symbol is in the same column as the marked symbol in the first track.
        
        \item The linked list represents an accepting computation of $M$ as follows: $M'$ starts by reading and storing the state $q$ written at the beginning of track 0 in the finite control of $M'$, and then $M'$ finds the beginning node in track 1. $M'$ then
 simulates transitions of $M$ by matching instructions with the symbols in the linked list rather than writing to a tape.
       $M'$ keeps track of the current track as it simulates.
                The most recent symbol that $M$ writes while staying on the same tape cell can also be tracked using the state of $M'$, and the symbols in the linked list represent the last symbol written in a cell before moving off of the tape cell (so the history does not contain any intermediate values written to a cell as it executes a sequence of stay transitions).
        The point in the linked list sequence where the first marked symbol occurs represents the point during the computation of $M$ where the word in track 0 matches the tape of $M$.
        This means that at some nondeterministically guessed point before the simulation of $M$ moves off of this tape cell (possibly in a sequence of stay transitions), the state of the simulated $M$ must match the state $q$ in track 0, and the symbol $M$ has just written to this tape cell (remembered in the state of $M'$) must match the marked symbol in track 0 of the same column.
        If this is successful, the simulation continues with only marked symbols until $M$ accepts, at which point $M'$ accepts.
    \end{enumerate}

    These conditions ensure that each input word represents a valid sequence of actions that $M$ could perform.
    In particular, they ensure that the directed graph described by the successor relation is a single directed path with no cycles.
    To see this, suppose that the list contains a cycle.
    Then there must be a node in the leftmost column of the cycle with both a predecessor and successor in the adjacent column to the right.
    Similarly, there is a node in the rightmost column of the cycle with both a predecessor and successor in the adjacent column to the left.
    Condition 2 implies that the successors of these nodes are in lower tracks than the predecessors of these nodes.
    Since these nodes are in the cycle, there must exist directed paths from the successor of the leftmost node to the predecessor of the rightmost node, and from the successor of the rightmost node to the predecessor of the leftmost node.
    However, these directed paths must cross at some point since they cannot extend beyond the leftmost and rightmost nodes, contradicting condition 2.
    
    Thus, $M'$ accepts words which describe an accepting computation of $M$ in tracks $1$ through $k$; and track $0$ contains the store word of $M$ in one specific configuration of this computation, after adjusting the head symbol, and possibly erasing blank symbols at the beginning and the end.
    It is known that the language accepted by any $2\NFA$ is regular \cite{HU}, hence $L(M')$ is a regular language.
    We can then apply the gsm that extracts the word from track $0$ of every word in $L(M')$, erasing blanks as appropriate and replacing $\overline{a}\in\overline{\Gamma}$ with $a\head$, to obtain $S(M)$.
    Since regular languages are closed under gsms \cite{HU}, it follows that $S(M)$ is a regular language.
\qed \end{proof}

Note that in all three cases of $\ftNTM$, $\fvNTM$, and $\fcNTM$, the store languages are ``essentially'' equal to the regular languages, except for the state at the beginning of each word, and the read/write head. For all three restrictions, given a regular language $R$, it is easy to construct a Turing machine $M$ that
would just put a word in $R$ on the worktape, then switch to some new fixed state $q$, and then switch to a final state. 
Let $S_q(M) = \{ u \mid q u \head \in S(M)\}$. For any subclass $\MM$ of Turing machines, let 
$\SS_{\rm state}(\MM)$ be the family of all $S_q (M)$ for $M \in \MM$ and states $q$ of $M$. 
Because $S(M)$ is always regular, $S_q(M)$ must be as well, for any state $q$ of $M$. This implies the following:
\begin{corollary}
$\SS_{\rm state}(\ftNTM) = \SS_{\rm state}(\fvNTM) = \SS_{\rm state}(\fcNTM) = \REG$.
\end{corollary}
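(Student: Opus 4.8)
The plan is to prove the chain of inclusions
\[ \REG \subseteq \SS_{\rm state}(\ftNTM) \subseteq \SS_{\rm state}(\fvNTM) \subseteq \SS_{\rm state}(\fcNTM) \subseteq \REG, \]
which forces all four families to coincide. The two middle inclusions are immediate from the definitions: every finite-turn machine is finite-visit and every finite-visit machine is finite-crossing, so each $M \in \ftNTM$ is also in $\fvNTM$ and each $M \in \fvNTM$ is also in $\fcNTM$; hence any language $S_q(M)$ witnessing membership in a smaller state-family also witnesses membership in the larger one. It therefore remains to establish the outer inclusion $\SS_{\rm state}(\fcNTM) \subseteq \REG$ (the regularity direction) and the reverse inclusion $\REG \subseteq \SS_{\rm state}(\ftNTM)$ (the construction direction).

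For $\SS_{\rm state}(\fcNTM) \subseteq \REG$, I would invoke Theorem~\ref{fcNTM}, whose proof in fact establishes the stronger statement $\SS(\fcNTM) \subseteq \REG$, so that $S(M)$ is regular for every $M \in \fcNTM$. Given a state $q$ of $M$, the language $S_q(M) = \{ u \mid q u \head \in S(M)\}$ can be extracted from $S(M)$ by first intersecting with the regular set $q\Gamma^+\head$ (isolating the store configurations in state $q$ whose head scans the rightmost tape symbol), and then applying the homomorphism that erases the symbols $q$ and $\head$ and fixes every symbol of $\Gamma$. Since the regular languages are closed under intersection with regular sets and under homomorphism, $S_q(M)$ is regular; thus $\SS_{\rm state}(\fcNTM) \subseteq \REG$, and together with the nesting above, all three state-families lie in $\REG$.

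For $\REG \subseteq \SS_{\rm state}(\ftNTM)$, I would realize each regular $R \subseteq \Gamma^*$ as some $S_q(M)$ with $M$ finite-turn, following the sketch preceding the corollary. Fix a $\DFA$ $A$ for $R$ over the alphabet $\Gamma$, and let $M$ be an $\NTM$ with input alphabet $\Gamma$ that reads its input $u = a_1\cdots a_n$ from left to right, writing each $a_i$ onto the worktape and moving right, while simulating $A$ in its finite control. When the end of input is reached, $M$ is scanning a fresh blank one cell past $u$; I would then have it perform a single left move that rewrites this blank, which by the blank-handling rules of the model leaves the worktape exactly $a_1\cdots a_n$ with the head scanning $a_n$. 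If $A$ has accepted (so $u \in R$), $M$ enters the distinguished state $q$ at this moment, producing the store configuration $q u \head$, and then moves via a stay transition to a final state; otherwise it has no accepting continuation. The only head motions are a single right sweep followed by one left step, so $M$ is finite-turn, hence in $\ftNTM \subseteq \fvNTM \subseteq \fcNTM$.

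Finally I would verify $S_q(M) = R$: state $q$ is entered only after $A$ accepts the word just written to the tape, and at that instant the tape holds the entire input, so every store configuration $q u' \head \in S(M)$ forces $u' \in R$; conversely, each $u \in R$ yields an accepting computation passing through $q u \head$. Hence $u \in S_q(M) \iff u \in R$, completing the reverse inclusion. There is no deep obstacle in this corollary — the regularity direction is an immediate consequence of Theorem~\ref{fcNTM} and standard closure properties, and the construction direction is routine — but the one point demanding care is the worktape bookkeeping: one must use the correct handling of the trailing blank and of the head-position symbol $\head$ so that the computation passes through exactly the configuration $q u \head$ (head on the last symbol of $u$, with no trailing blank) rather than $q u \blank \head$, since $S_q(M)$ is defined solely in terms of head-at-right-end store configurations.
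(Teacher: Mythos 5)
Your proposal is correct and takes essentially the same approach as the paper: the inclusion $\SS_{\rm state}(\fcNTM) \subseteq \REG$ comes from Theorem~\ref{fcNTM} plus standard closure properties of regular languages (the paper compresses this to ``because $S(M)$ is always regular, $S_q(M)$ must be as well''), and the inclusion $\REG \subseteq \SS_{\rm state}(\ftNTM)$ comes from the same write-a-word-of-$R$-onto-the-worktape-then-enter-$q$ construction that the paper sketches. Your chain-of-inclusions organization and the explicit treatment of the trailing blank are just more detailed renderings of the paper's argument, not a different route.
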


Next, we consider each of $\ftNTCM$, $\fvNTCM$, and $\fcNTCM$. 
Each of these models uses $t$ counters, and the words of the store languages are of the form
$qx C_1^{i_1} \cdots C_t^{i_t}$ where $q$ is the current state, $x$ represents the contents of the worktape, and $i_j$ is the value of the $j$th counter, for $1 \le j \le t$ where $C_1, \ldots, C_t$ are fixed characters.

It was already shown that for finite-turn Turing machines augmented by reversal-bounded counters, all store languages can be accepted by $\NCM$ machines:
\begin{proposition} \cite{StoreLanguages}
$\SS(\ftNTCM) \subseteq \LL(\NCM)$.
\end{proposition}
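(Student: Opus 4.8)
The plan is to construct, for a given finite-turn Turing machine $M$ with $t$ reversal-bounded counters, a one-way $\NCM$ $M'$ with $L(M') = S(M)$; since $M$ is arbitrary in $\ftNTCM$, this yields $\SS(\ftNTCM) \subseteq \LL(\NCM)$. A word of $S(M)$ has the form $q x C_1^{z_1} \cdots C_t^{z_t}$, where $q$ is a state, $x$ is the worktape contents (with the head marker $\head$), and $z_i$ is the contents of counter $i$. The machine $M'$ must verify that the configuration with these data lies on some accepting computation of $M$. The strategy is to decouple the verification into a \emph{worktape/state part}, handled by the already-established regularity of worktape store languages, and a \emph{counter part}, handled by simulating $M$'s counters on $M'$'s own reversal-bounded counters.

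First I would isolate the worktape. Let $M_0$ be the finite-turn Turing machine (no counters) obtained from $M$ by replacing each counter instruction with a transition that nondeterministically guesses the outcome of the zero/nonzero test and the increment in $\{-1,0,+1\}$, recording on an enlarged transition alphabet which counter was used together with the guessed test outcome and increment. Since $M_0$ is finite-turn, the cited result $\SS(\ftNTM) \subseteq \REG$ applies: the construction that reconstructs an accepting computation from the worktape contents via bounded crossing sequences shows that the set of valid store words of $M_0$ is regular. I would enrich this construction so that, as it reconstructs a computation, it also carries the induced sequence of annotated counter operations for each of the $t$ counters; the resulting set of (worktape store word, counter-annotation) descriptions remains regular because the annotations live in a finite alphabet attached to the transitions.

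Next I would attach the counters to $M'$. I allocate one reversal-bounded counter per counter of $M$ (reversal-bounded precisely because $M$'s counters are). As $M'$ processes the reconstructed computation, it applies each guessed increment to the corresponding counter and checks each guessed zero/nonzero test against the actual current value, so that the zero tests are faithfully verified rather than merely guessed; the worktape and state at the marked moment are matched against the prefix $qx$ of the input, and the counter values at that moment are then checked, upon reading the blocks $C_1^{z_1}\cdots C_t^{z_t}$, to equal the respective $z_i$.

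The hard part is reconciling the one-way, left-to-right reading of $q x C_1^{z_1}\cdots C_t^{z_t}$ with the fact that counter validity is inherently a \emph{temporal} condition while the worktape contents $x$ are presented in \emph{spatial} order. A one-way automaton for the regular worktape store language processes cells left-to-right, so the induced counter operations arrive out of their computation-time order, and zero/nonzero tests cannot be checked against a running value that is updated out of order; a reconstruction that instead follows the computation back and forth across $x$ would yield a two-way counter machine, which in general need not accept an $\NCM$ language. The resolution I would pursue exploits reversal-boundedness: since each counter of $M$ reverses only boundedly often, the time intervals on which it equals $0$ form a bounded-size pattern, so I would have $M'$ guess this pattern up front (finitely much information), turning each zero/nonzero test into a locally checkable condition independent of the global ordering and reducing the final-value verification to checking that the net sum of guessed increments on each counter equals $z_i$, which is a semilinear and hence $\NCM$-checkable constraint. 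Establishing that these guessed zero-patterns can always be made globally consistent with a genuine reversal-bounded counter run, and that all of the bookkeeping stays within finitely many reversals of $M'$'s counters, is the delicate step of the argument, made tractable here only because the finite-turn restriction confines the head to boundedly many monotone sweeps.
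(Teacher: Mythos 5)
Your proposal is sound in substance, but it takes a genuinely different route from the one the paper relies on. The paper's approach (used for this cited result and generalized in Proposition~\ref{storeNCM}, whose construction subsumes the finite-turn case) never tries to reconcile temporal and spatial order inside a one-way machine: it encodes the entire computation history as a string, runs a \emph{two-way} machine with $3t$ reversal-bounded counters over that encoding so that the counters of $M$ are simulated faithfully in temporal order (copying their values into spare counters at the nondeterministically chosen moment, and comparing against the $C_i$-blocks at the end), observes that this two-way machine is finite-crossing on its \emph{input} tape, and then invokes the known Gurari--Ibarra theorem converting finite-crossing $2\NCM$s into one-way $\NCM$s, finishing with closure of $\LL(\NCM)$ under gsm mappings. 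Your route stays one-way and attacks the temporal/spatial mismatch directly: since a finite-turn computation decomposes into boundedly many monotone sweeps, the boundedly many zero-intervals of each reversal-bounded counter can be delimited by marks of the form (sweep index, cell), zero tests become locally checkable against these marks, and consistency of the guessed pattern reduces to boundedly many equality/inequality-of-sums checks, each verifiable by a pair of $1$-reversal counters that accumulate increments and decrements even when they arrive out of temporal order. The ``delicate step'' you defer does go through, essentially for the reason you give: with a bounded number of sweeps, ``temporally before a mark'' is determined by the sweep index together with a passed/not-yet-passed flag maintained during the left-to-right scan (note the interval endpoints cannot literally be guessed ``up front,'' since they sit at unbounded positions; they must be guessed on the fly as marks). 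The trade-off is clear: your argument is more elementary and self-contained, avoiding the two-way-to-one-way conversion theorem, but it is intrinsically tied to the finite-turn hypothesis --- for $\fcNTCM$ and $\fvNTCM$ the number of sweeps is unbounded, the stratification of time by sweep index is unavailable, and your method breaks down, whereas the paper's two-way-plus-Gurari route is exactly what extends to those classes.
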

But it remains open as to what occurs with finite-visit and finite-crossing Turing machines augmented by counters. We see that a similar construction to that of Theorem \ref{fcNTM} holds.
For this, we need two-way nondeterministic reversal-bounded multi-counter machines, denoted by $2\NCM$ \cite{Ibarra1978}.

\begin{proposition} \label{storeNCM}
$\SS(\fcNTCM) \subseteq \LL(\NCM)$ and $\SS(\fvNTCM) \subseteq \LL(\NCM)$.
\end{proposition}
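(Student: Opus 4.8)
The plan is to adapt the two-way construction of Theorem~\ref{fcNTM} from a $2\NFA$ to a two-way nondeterministic reversal-bounded multicounter machine $M''$ (a $2\NCM$), using the counters of $M''$ to simulate the reversal-bounded counters of the given machine. As before it suffices to treat $\fcNTCM$, since every finite-visit machine is finite-crossing. So let $M$ be an $r$-crossing $\NTCM$ with $t$ counters, each making at most $s$ reversals. I would keep the entire track/linked-list encoding of Theorem~\ref{fcNTM} for the worktape history, but extend the input alphabet so that each input word carries a suffix $C_1^{z_1}\cdots C_t^{z_t}$ that purports to record the counter contents at the captured configuration. The machine $M''$ checks conditions 1--6 of Theorem~\ref{fcNTM} with its two-way finite control exactly as before, and in addition maintains $t$ counters $a_1,\dots,a_t$ that simulate the counters of $M$.

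The crucial observation is that following the linked list from the beginning node visits the cell-occupancies of $M$ in \emph{temporal} order, and within a single node $M''$ simulates the (possibly long) block of stay and counter transitions that $M$ performs before moving off that cell: it guesses $M$'s counter transitions and verifies every zero/nonzero test against the actual value of the corresponding $a_i$. Consequently the sequence of increments and decrements applied to $a_i$ over the whole run equals, in order, the sequence $M$ applies to its $i$th counter, so each $a_i$ is $s$-reversal-bounded. At the nondeterministically guessed capture point (the same point where track $0$ is verified), $M''$ records the current values by non-destructively copying each $a_i$ into a fresh counter $b_i$ (one auxiliary counter and a constant number of extra reversals each), then continues the simulation until $M$ accepts. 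Finally $M''$ scans the suffix once, decrementing $b_i$ while reading $C_i^{z_i}$, and accepts iff each $b_i$ hits $0$ exactly at the end of its block.

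The key structural point is that $M''$ is finite-crossing on its input. The checks of conditions 1--5 are local consistency tests between adjacent columns and can be done in a single left-to-right sweep; condition 6 follows the linked list, so the number of times the head crosses the boundary between two adjacent columns equals the number of linked-list edges spanning that boundary, which is at most $r$; and the counter suffix is verified in one additional sweep. Hence each input boundary is crossed only a bounded number of times, and all $3t$ counters are reversal-bounded, the $a_i$ inheriting the bound from $M$ and the $b_i$ using one reversal apiece.

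The main obstacle is then the same black-box step that powered Theorem~\ref{fcNTM}, one level up: in place of ``every $2\NFA$ accepts a regular language'' we need ``every $2\NCM$ that is finite-crossing on its input and reversal-bounded on its counters accepts a language in $\LL(\NCM)$.'' This is the known conversion of finite-crossing two-way reversal-bounded multicounter machines into one-way ones \cite{Ibarra1978}, and it is exactly where both restrictions are used ($r$-crossing of $M$ bounds the input crossings, and reversal-boundedness of $M$'s counters bounds the reversals of the $a_i$). Granting it, $L(M'')\in\LL(\NCM)$. Extracting $S(M)$ is then a gsm mapping that emits track $0$ (trimming end blanks and replacing the marked symbol $\overline a$ by $a\head$, as in Theorem~\ref{fcNTM}) followed by the verbatim suffix $C_1^{z_1}\cdots C_t^{z_t}$; since $\LL(\NCM)$ is a full trio and hence closed under gsm mappings, $S(M)\in\LL(\NCM)$. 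As every $\fvNTCM$ is an $\fcNTCM$, the same argument yields $\SS(\fvNTCM)\subseteq\LL(\NCM)$.
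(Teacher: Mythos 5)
Your proposal is correct and follows essentially the same route as the paper's own proof: both adapt the Theorem~\ref{fcNTM} construction into a finite-crossing $2\NCM$ with (effectively) $3t$ reversal-bounded counters --- simulation counters, snapshot counters filled at the nondeterministically guessed capture point, and verification against an appended $C_1^{z_1}\cdots C_t^{z_t}$ suffix --- and then invoke the known conversion of finite-crossing two-way reversal-bounded multicounter machines to one-way $\NCM$s followed by closure of $\LL(\NCM)$ under gsm mappings. The only differences are cosmetic: the paper performs the snapshot by destructively copying each simulation counter into two fresh counters (one continuing the simulation, one held for the final check) rather than your non-destructive copy via a temporary counter, and it attributes the two-way-to-one-way conversion to \cite{Gurari} rather than \cite{Ibarra1978}.
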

\begin{proof}
We describe a modification of  the construction of Theorem \ref{fcNTM}. Let $M$ be an $\fcNTCM$ with $t$ counters.
Instead of building a $2\NFA$ $M'$, we build a $2\NCM$ with $3t$ counters. We refer to the counters  of $M'$ by the names
$c_1, \ldots, c_t, d_1, \ldots d_t, e_1, \ldots, e_t$. The machine operates just as in the simulation in the proof of Theorem \ref{fcNTM}, except the following:
\begin{itemize}

\item In point 1, track 0 of the input word contains a word in $Q \Gamma^* \overline{\Gamma} \Gamma^* C_1^* \cdots C_t^*$, \textit{i.e.}, a word that could possibly be in $S(M)$ after applying the gsm (that also fixes letters of $C_1, \ldots, C_t$ that appear in the store language) mentioned above. Namely, track 0 contains exactly one marked symbol.
\item  In point 6, it simulates the transitions of $M$ by using counters $c_1, \ldots, c_t$ of $M'$ to simulate the counters of $M$ faithfully until it hits the first marked symbol. Then in the nondeterministically guessed position where the marked symbol and state in track 0 matches the current worktape cell contents and state, it copies $c_j$ to both $d_j$ and $e_j$ for each $j$, $1 \le j \le t$ (so that now $d_j$ and $e_j$ have identical contents to what was in $c_j$, and $c_j$ is now $0$). It then continues the simulation using $d_j$, for each $j$ until it verifies that $M$ accepts. Finally, $M'$ verifies that the number of each of the characters $C_i$ at the end of the input word is equal to the value of the corresponding counter $e_i$.
\end{itemize}
 
 It follows that $S(M) = g( L(M'))$ where $g$ is the gsm.
 However, before applying $g$ to $L(M')$, we first apply another construction to $M'$, which is  a $2\NCM$. Notice that $M'$ satisfies the property that for every $x \in L(M')$, every accepting computation of $x$ crosses the boundary of each pair of adjacent cells of the {\bf input tape} a bounded number of times. This restriction of $2\NCM$ is a known model referred to by the name finite-crossing two-way nondeterministic reversal-bounded multi-counter machines \cite{Gurari}. Furthermore, it is known that each such machine can be converted to a one-way nondeterministic reversal-bounded multi-counter machine \cite{Gurari} that accepts the same language. Therefore, $M'$ can be converted to a (one-way) $\NCM$ $M''$ that accepts $L(M')$. Also, $\LL(\NCM)$ is closed under gsms \cite{Ibarra1978} and therefore $g(L(M'')) = S(M) \in \LL(\NCM)$.
\qed \end{proof}

As with the case without counters, given any $M \in \NCM$, we can build a machine $M' \in \ftNTCM$ (resp.\ $\fcNTCM, \fvNTCM$) that
copies the input to the worktape, switches to some new fixed state $q$, goes back to the beginning of the worktape, then simulates $M$ by reading the worktape to simulate the reading of the input, and using the counters faithfully. Then if we look at
$S_q (M') = \{ u \mid q u \head \in S(M')\}$, this is exactly $L(M)$. Again, considering any such class of machines $\MM$ and letting
$\SS_{\rm state}(\MM)$ be the family of all $S_q(M)$ where $M \in \MM$, we get the following:
\begin{corollary}
$\SS_{\rm state}(\ftNTCM) = \SS_{\rm state}(\fvNTCM) = \SS_{\rm state}(\fcNTCM) = \LL(\NCM)$.
\end{corollary}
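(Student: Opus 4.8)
The plan is to establish each of the three equalities simultaneously by a squeezing argument. Since $\ftNTCM \subseteq \fvNTCM \subseteq \fcNTCM$ and the operator $\SS_{\rm state}$ is clearly monotone in the machine class (any $S_q(M)$ with $M$ in a smaller class is also realized in a larger class), it suffices to prove the two outer inclusions $\LL(\NCM) \subseteq \SS_{\rm state}(\ftNTCM)$ for the smallest class and $\SS_{\rm state}(\fcNTCM) \subseteq \LL(\NCM)$ for the largest class. These, together with the monotonicity chain, trap the intermediate family $\SS_{\rm state}(\fvNTCM)$ between $\LL(\NCM)$ and $\LL(\NCM)$, forcing all three families to coincide with $\LL(\NCM)$.

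For the lower bound $\LL(\NCM) \subseteq \SS_{\rm state}(\ftNTCM)$, I would use exactly the construction described immediately above the corollary. Given $L \in \LL(\NCM)$ accepted by an $\NCM$ $N$, build $M'$ that on input $u$ first copies $u$ onto the worktape in a single rightward pass, switches to a fresh state $q$, returns the head to the left end, and then simulates $N$ by scanning the worktape rightward (treating worktape symbols as $N$'s input symbols) while driving its counters exactly as $N$ does, accepting when $N$ accepts. The head makes only two turns, so $M' \in \ftNTCM$; moreover the unique configuration in which $M'$ is in state $q$ has worktape content $u$ with all counters zero, so $q u \head \in S(M')$ holds precisely when the ensuing simulation accepts, i.e.\ when $u \in L$. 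Hence $S_q(M') = L$, giving $L \in \SS_{\rm state}(\ftNTCM)$.

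For the upper bound $\SS_{\rm state}(\fcNTCM) \subseteq \LL(\NCM)$, I would invoke Proposition \ref{storeNCM}, which already gives $S(M) \in \LL(\NCM)$ for every $M \in \fcNTCM$. The set $S_q(M) = \{ u \mid q u \head \in S(M)\}$ is obtained from $S(M)$ by intersecting with the regular language $q\Gamma^*\head$ (store words beginning with $q$, having the head at the right end and no counter symbols, i.e.\ all counters zero) and then applying the gsm that deletes the leading $q$ and the trailing $\head$. Since $\LL(\NCM)$ is closed under intersection with regular languages and under gsm mappings \cite{Ibarra1978}, it follows that $S_q(M) \in \LL(\NCM)$ for every state $q$, whence $\SS_{\rm state}(\fcNTCM) \subseteq \LL(\NCM)$.

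I expect the only delicate point to be the bookkeeping in the lower-bound construction: one must check that the copy-then-rewind-then-simulate machine is genuinely finite-turn, so that a single realization lands in the smallest class and thereby serves all three inclusions, and that the state $q$ is entered exactly once per computation with the head in the canonical right-end position and all counters reset, so that $S_q(M')$ is exactly $L$ rather than some larger set (in particular, trailing blanks must be avoided so the store word is $q u \head$ as written). Everything else reduces to the already-established store-language inclusion of Proposition \ref{storeNCM} and to the standard closure properties of $\LL(\NCM)$.
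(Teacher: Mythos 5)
Your proof is correct and follows essentially the same route as the paper: the lower bound via the copy--rewind--simulate machine (which is finite-turn, hence lies in all three classes), and the upper bound via Proposition \ref{storeNCM} together with closure of $\LL(\NCM)$ under intersection with regular sets and gsm mappings. Your explicit squeezing/monotonicity framing and your attention to the bookkeeping (state $q$ entered exactly once, head at the right end, counters zero, no trailing blanks) merely make precise what the paper leaves implicit.
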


Theorem \ref{fcNTM} and Proposition \ref{storeNCM}
immediately provide several applications using results in \cite{StoreLanguages,IbarraMcQuillanVerification}. First, denote the deterministic machines in $\fcNTM$ (respectively $\fvNTM$) by $\fcDTM$ (respectively $\fvDTM$). In this case,
a machine is deterministic if given any state $q$, worktape symbol $x$, and $a \in \Sigma \cup \{\emr\}$, there is at most one transition in the transition relation from $q$, with worktape symbol $x$, and on either input symbol $a$ or $\lambda$.
\begin{proposition}
The following are true:
\begin{itemize}

\item $\LL(\fcDTM)$ and $\LL(\fvDTM)$ are closed under right quotient by regular languages.
\item For any machine $M$ in any of $\fcNTM, \fvNTM$ and regular sets of store configurations $C \in \REG$, then both $\post_M^*(C)$ and $\pre_M^*(C)$ are regular languages, and a finite automaton can be constructed to accept each of them.
\item For any machine $M$ in any of $\fcNTCM, \fvNTCM$ and sets of store configurations $C \in \LL(\NCM)$, then both $\post_M^*(C)$ and $\pre_M^*(C)$ are in $\LL(\NCM)$, and an $\NCM$ can be constructed to accept each of them.
\item For any of $\fcNTM, \fvNTM, \fcNTCM, \fvNTCM$, the common configurations problem is decidable.
\end{itemize}
\label{implications}
\end{proposition}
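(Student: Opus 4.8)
The plan is to derive all four items as direct instantiations of the general transfer results of \cite{StoreLanguages} and \cite{IbarraMcQuillanVerification}, supplying as input the store-language descriptions just proved: Theorem \ref{fcNTM}, which gives regular store languages for finite-crossing (hence finite-visit) Turing machines, and Proposition \ref{storeNCM}, which places the store languages of $\fcNTCM$ and $\fvNTCM$ inside $\LL(\NCM)$. The substantive work of characterizing the store languages is already done, so what remains is to confirm that each of the four classes meets the hypotheses of the appropriate meta-theorem and then to quote it.

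For the first item I would invoke the result of \cite{StoreLanguages} that, under the stated side conditions, any one-way nondeterministic class all of whose store languages are regular has a deterministic subclass closed under right quotient by regular languages. Applied with $\MM = \fcNTM$ and with $\MM = \fvNTM$ and combined with Theorem \ref{fcNTM}, this yields the desired closure for $\LL(\fcDTM)$ and $\LL(\fvDTM)$; the notion of determinism needed is precisely the one spelled out just before the statement, so no separate argument is required.

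For the second and third items I would apply the $\pre^*/\post^*$ transfer theorems of \cite{IbarraMcQuillanVerification}: regularity of all store languages of a class yields that $\pre_M^*(C)$ and $\post_M^*(C)$ are regular and effectively constructible whenever $C$ is regular, and $\LL(\NCM)$-acceptability of all store languages yields that both sets are in $\LL(\NCM)$ and effectively constructible whenever $C \in \LL(\NCM)$. Instantiating the regular version with $\fcNTM$ and $\fvNTM$ (via Theorem \ref{fcNTM}) gives the second item, and instantiating the $\NCM$ version with $\fcNTCM$ and $\fvNTCM$ (via Proposition \ref{storeNCM}) gives the third. The fourth item then follows from the decidability theorem of \cite{IbarraMcQuillanVerification} for the common configurations problem, which assumes only that store languages are accepted by $\NCM$s; since $\REG \subseteq \LL(\NCM)$, Theorem \ref{fcNTM} covers $\fcNTM$ and $\fvNTM$ while Proposition \ref{storeNCM} covers $\fcNTCM$ and $\fvNTCM$, so all four classes qualify.

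I expect the only real obstacle to be verifying that our Turing-machine-with-counter classes genuinely satisfy the ``simple conditions'' assumed by the meta-theorems, rather than the act of quoting them. The $\pre^*/\post^*$ and common-configuration arguments are driven by an auxiliary construction that reads a prefix of input encoding a candidate store configuration of $C$, installs that configuration on the worktape and counters, verifies its membership in $C$, and then runs $M$; the conclusion is read back off the store language of this auxiliary machine. The point to check is therefore that each class is effectively closed under this initialize-verify-run construction while preserving its defining restriction. Installing an arbitrary worktape content costs one left-to-right writing sweep plus one returning sweep, raising the crossing and visit bounds by only a fixed additive constant; a regular $C$ is checked in the finite control, while an $\LL(\NCM)$ set $C$ is checked using a bounded number of additional reversal-bounded counters, so the reversal bound is preserved up to a constant. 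Hence the auxiliary machine stays within the same class, the side conditions hold, and the meta-theorems apply as stated.
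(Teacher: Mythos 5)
Your proposal is correct and takes essentially the same route as the paper: both derive all four items by instantiating the meta-theorems of \cite{StoreLanguages} and \cite{IbarraMcQuillanVerification} with Theorem \ref{fcNTM} and Proposition \ref{storeNCM}, and both identify that the real work is checking the side conditions --- that the worktape store is ``readable'' for the quotient result, that the classes are closed under the initialize/verify/run constructions (what the paper calls being \emph{loaded} and \emph{unloaded} by sets from $\REG$ or $\LL(\NCM)$) for $\pre_M^*$ and $\post_M^*$, and that the target families have decidable emptiness and intersection closure for the common-configurations problem. The paper simply writes out the loading and unloading constructions step by step, whereas you sketch them, but the underlying argument, including the observation that installation and verification only add a constant to the crossing/visit and reversal bounds, is the same.
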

\begin{proof}
For the first point, right quotient was studied in \cite{StoreLanguages}. There, a data store was called {\em readable} if it is possible for such a machine to have a ``stay'' that can be executed from any configuration (\textit{i.e.}\ any machine of that type could have an instruction designed to not to change its store; e.g. pushdown automata have a stay instruction that does not change its contents, and pushdown automata are allowed to define their transitions to use it from any state); and at any point, if the store contents are represented by $y$, it is possible to switch to a configuration where $y$ can be read one letter at a time, either from left-to-right (like a queue, which can repeatedly dequeue) or right-to-left (like a pushdown, which can repeatedly pop). Certainly both a finite-crossing worktape and a finite-visit worktape satisfy this property as they can read the contents until hitting blanks. Then given any readable store types, if we consider the class of all one-way nondeterministic machines $\MM$ with those readable store types where $\SS(\MM) \subseteq \REG$, and we let $\MM_D$ be the deterministic machines in $\MM$, then $\LL(\MM_D)$ are closed under right quotient with regular languages. Hence the first point follows.

For the second and third points, we need two technical properties studied in \cite{IbarraMcQuillanVerification} --- they are used to establish the connection between store languages and the $\pre^*$ and $\post^*$ operations.
We say that a machine model $\MM$ can be {\em  loaded} by a set of configurations $C$, if for all $M \in \MM$ with state set $Q$ and $C \subseteq \conf(M)$, there is a machine $M' \in \MM$ with $Q' \supseteq Q$ that, on input $q\gamma\$ x \$$ where $c=q\gamma \in \conf(M), q \in Q, x \in \Sigma^*$, and $\$$ is a new symbol, operates as follows:
\begin{enumerate}

\item  $M'$ reads $c = q\gamma$ while using states in $Q' \setminus Q$ and initializing the store configuration to $\gamma$.

\item After reading the first $\$$, it switches to state $q$ (the first state of $Q$ hit) and configuration $c$ if $c \in C$, and switches to a unique state $q_N \in Q' \setminus Q$ if $c \notin C$.

\item From $c$, $M'$ simulates $M$ on $x$ if and only if $c \in C$, as $M$ is defined on states of $Q$. 

\end{enumerate}
It is said that $\MM$ can be loaded by sets from some family $\LL$ if, for all $M \in \MM$ and $C \subseteq \conf(M)$ with $C \in \LL$, then $\MM$ can be loaded by $C$.

We say that a machine model $\MM$ can be {\em unloaded} by a set of configurations $C$ if, for all $M \in \MM$ with state set $Q$ and $C\subseteq \conf(M)$, there is a machine $M' \in \MM$ that, on input $c\$x\$, c = q\gamma \in \conf(M), q \in Q, x \in \Sigma^*$, operates as follows: 
\begin{enumerate}

\item $M'$ reads $q \gamma$ using states not in $Q$ while putting $\gamma$ on the stores.
\item Upon reading the first $\$$, it switches to state $q$ and configuration $c$.
\item It then simulates $M$ on $x$ starting from $c$.
\item Upon reading the second $\$$, it verifies that the current configuration of $M'$ is in $C$ and accepts only in this case. 
\end{enumerate} 
It is said that $\MM$ can be unloaded by sets from a family $\LL$ if, for all $M \in \MM$ and $C\subseteq \conf(M)$ with $C \in \LL$, then $C$ can be unloaded by $\MM$.

We will show that $\fcNTCM$ (respectively $\fvNTCM$) can be loaded and unloaded by sets from $\LL(\NCM)$ (this is already known for $\ftNTCM$, whose proof is essentially the same). The proofs for $\fcNTM,\fvNTM$, and $\ftNTM$ being loaded and unloaded by regular languages are simpler. First we will show it can be loaded by sets from $\LL(\NCM)$.
Consider an $\fvNTCM$ $M = (Q,\Sigma,\Gamma,q_0,F)$ with $t$ counters, and consider a language $C$ accepted by an $\NCM$ $M_C$ with $t_C$ counters. From this, we will describe how to build a $C$-loaded version $M' \in \fvNTCM$ of $M$ with $t+t_C$ counters. Throughout the simulation, the first $t$ counters will be used to simulate the counters of $M$, while the last $t_C$ counters will be used to simulate $M_C$.
The input will be verified to be of the form $q\gamma C_1^{i_1} \cdots C_t^{i_t} \$ x \$$ where $c = q\gamma C_1^{i_1} \cdots C_t^{i_t} \in \conf(M), q \in Q, \gamma \in \Gamma^*, x \in \Sigma^*$, which is processed as follows:
\begin{enumerate}

\item When reading $q\gamma$, $M'$ puts $\gamma$ on the worktape while marking the position of the read/write head.
\item When reading $C_1^{i_1} \cdots C_t^{i_t}$, $M'$ puts $i_j$ on counter $j$.

\item In parallel to executing steps 1 and 2, $M'$ verifies that $c \in C$ using the last $t_C$ counters.

\item  If $c \in C$, then it returns to the position of the encoded read/write head on the worktape and switches to state $q$ (if $c \notin C$, it rejects). From this configuration, $M'$ simulates $M$ on $x$ from $q$.
\end{enumerate}
Hence, $\fcNTCM$ can be loaded by sets from $\LL(\NCM)$.

For the unloaded case, again let $M = (Q,\Sigma,\Gamma,q_0,F)$ be an $\fvNTCM$ with $t$ counters, and consider an $\NCM$ $M_C$ with $t_C$ counters accepting $C$. From this, we can build a $C$-unloaded version $M' \in \fvNTCM$ of $M$ with $t+t_C$ counters. Again, throughout the simulation, the first $t$ counters will be used to simulate the counters of $M$, while the last $t_C$ counters will be used to simulate $M_C$.
The input will be verified to be of the form $q\gamma C_1^{i_1} \cdots C_t^{i_t} \$ x \$$ where $c = q\gamma C_1^{i_1} \cdots C_t^{i_t} \in \conf(M), q \in Q, \gamma \in \Gamma^*, x \in \Sigma^*$, which is processed as follows:
\begin{enumerate}

\item When reading $q\gamma $, it puts $\gamma$ on the worktape while marking the read/write head.
\item When reading $C_1^{i_1} \cdots C_t^{i_t}$, $M'$ puts $i_j$ on counter $j$. 
\item At the first $\$$, it then returns to the correct first position of the read/write head, and switches to $q$ and configuration $c$.
\item From there, it simulates $M$ starting at $q$ on $x$.
\item When it hits the second $\$$, it can verify that the current configuration is in $C$ by examining the current state $p$, scanning the current worktape $\gamma'$ from left-to-right, and the current counter contents $i_1', \ldots, i_t'$ to verify that 
$p \gamma' C_1^{i_1'} \cdots C_{t}^{i_t'} \in C$.
\end{enumerate}
Therefore, $\fcNTCM$ can be unloaded by sets from $\LL(\NCM)$.

From Theorem 20 of \cite{IbarraMcQuillanVerification}, it follows that the third property is true.

For the last property, Proposition 24 of \cite{IbarraMcQuillanVerification} says that for any machine model $\MM$ and language family $\LL$ such that:
\begin{itemize}

\item $\SS(\MM) \subseteq \LL$ with an effective construction,
\item $\LL$ has a decidable emptiness problem,
\item and $\LL$ is effectively closed under intersection,
\end{itemize}   then $\MM$ has a decidable common configurations problem. 
Both $\REG$ \cite{HU} and
$\LL(\NCM)$ \cite{Ibarra1978} have a decidable emptiness problem and are closed under intersection. The result follows
from Theorem \ref{fcNTM} and Proposition \ref{storeNCM}.
\qed \end{proof}

The results in Proposition \ref{implications} were previously unknown. The closure of the deterministic classes under right quotient by regular languages is interesting from a theoretical perspective. Furthermore, the result that $\pre_M^*(C)$  and $\post^*(C)$ can always be accepted by a finite automaton when $M$ is in either $\fcNTM$ or $\fvNTM$ and $C\in \REG$, is interesting towards the areas of model checking, reachability, and verification (similarly to the analogous result on pushdown automata \cite{PushdownVerification}), as is the decidability of the common configuration problem for all of these models even with reversal-bounded counters.

\section{Decidability Questions Regarding Turing Machine Restrictions}

Even though we now know that the store language of every $\ftNTM,\fvNTM,\fcNTM$ machine is regular, the question arises of whether we can determine if a given $\NTM$ is an $\ftNTM$ (respectively $\fvNTM,\fcNTM$) in the first place (which would guarantee that it has a regular store language). We prove next that this cannot be determined in general. 
The following result is shown for deterministic Turing machines (denoted by $\DTM$), and therefore it is also true for $\NTM$s.
\begin{proposition} It is undecidable whether a $\DTM$ is $0$-turn (resp., 
$0$-crossing, $1$-visit).
\label{prop1}
\end{proposition}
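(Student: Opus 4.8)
The plan is to reduce from the halting problem for deterministic Turing machines on empty input, which is undecidable \cite{HU}. The observation that drives the whole construction is that all three properties are quantified only over \emph{accepting} computations: a machine with no accepting computation at all is vacuously $0$-turn, $0$-crossing, and $1$-visit. So I would arrange matters so that the only violation of these restrictions occurs precisely when the simulated machine halts and our machine accepts; a looping computation costs nothing.

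Given a deterministic Turing machine $T$, I would construct a $\DTM$ $M_T$ with a single worktape and no counters that, on input $\lambda$, reads the end-marker and then uses its worktape to simulate $T$ on empty input (this is routine, since a one-way-input $\DTM$ is Turing-complete on its worktape); on any non-empty input, $M_T$ rejects immediately without accepting. When the simulation reaches a halting configuration of $T$, $M_T$ performs one extra worktape manoeuvre — move the read/write head one cell to the right and then one cell back to the left — and only then enters a final state and accepts. If $T$ never halts, $M_T$ simulates forever and never accepts. Since $T$ is deterministic, so is $M_T$.

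The correctness argument splits into two cases. If $T$ halts on empty input, the unique computation of $M_T$ on $\lambda$ is accepting and contains the right-then-left manoeuvre, which produces a direction reversal (a turn), crosses the boundary between two adjacent worktape cells, and revisits a cell; hence this accepting computation is neither $0$-turn nor $0$-crossing nor $1$-visit, so $M_T$ fails all three restrictions. Conversely, if $T$ does not halt, then $M_T$ has no accepting computation whatsoever (the computation on $\lambda$ loops, and all other inputs are rejected), so $M_T$ is vacuously $0$-turn, $0$-crossing, and $1$-visit. Thus $M_T$ is $0$-turn (respectively $0$-crossing, $1$-visit) if and only if $T$ does not halt on empty input, so a decision procedure for any of the three properties would decide the complement of the halting problem, a contradiction.

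The main obstacle — and really the crux of the argument — is recognizing that the restrictions range over accepting computations only, so that looping computations are ``free'': once this is noticed, the reduction is essentially forced, and the one genuine design choice is the single post-halt worktape move that simultaneously witnesses a turn, a crossing, and a second visit to a cell, which lets a single construction settle all three claims at once. Everything else (faithful simulation of $T$, determinism, and rejecting non-empty inputs) is standard.
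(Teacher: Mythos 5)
Your proof is correct, and its skeleton matches the paper's: both reduce from the halting problem for a deterministic Turing machine on blank tape by building a one-way $\DTM$ that simulates the given machine on its worktape and, upon halting, performs extra worktape moves that violate the restriction. The interesting differences are in how the two directions are witnessed. You exploit vacuity: when $T$ loops, $M_T$ accepts nothing, so it is $0$-turn, $0$-crossing, and $1$-visit vacuously, which also lets a single right-then-left manoeuvre settle all three variants at once. The paper instead keeps the accepted language non-empty: its machine $M'$ always accepts $\lambda$ via a computation that respects the restriction (which forces small per-case adjustments, e.g.\ a stay move for $0$-crossing but an immediate right move for $1$-visit, since any accepting computation must consume the end-marker), and accepts $a^n$ for every $n \ge 1$ exactly when the simulated machine halts, in which case it appends $n$ dummy turns (resp.\ crossings, visits). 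That input-dependent, unbounded violation is not idle: it is precisely what makes Corollary \ref{cor1} --- undecidability of being finite-turn (resp.\ finite-visit, finite-crossing) at all --- an immediate consequence, since the paper's $M'$ is finite-turn if and only if it is $0$-turn. Your construction cannot serve that corollary: when $T$ halts, your $M_T$ is still $k$-turn for the fixed $k$ determined by the halting simulation plus one manoeuvre, hence always finite-turn. So your argument fully settles Proposition \ref{prop1}, but the paper's extra machinery buys the follow-up corollary for free.
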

\begin{proof} We will show the $0$-turn case.
Let $M$ be a single-tape two-way deterministic Turing machine (with a combined input and two-way worktape as commonly defined \cite{HU}) with an initially blank tape.
Construct a one-way $\DTM$ $M'$ with input alphabet $\{a\}$ and
one read/write worktape. Then $M'$ on input $x$ operates as follows:
\begin{itemize}

\item If $x$ is $\lambda$, $M'$ accepts
without using the worktape tape.

\item If $x= a^n, n \ge 1$, $M'$ first simulates $M$ on the worktape.  There are two cases.
Case 1: $M$ does not halt.  Then $M'$ does not halt also.
Case 2: $M$ halts. Then $M'$ makes $n$ dummy
turns and accepts $x$.  
\end{itemize}
Clearly, if $M$ does not halt on blank tape, $M'$ would only accept $x = \lambda$ 
with $0$-turns. If $M$ halts, $M'$ would accept not only $a^n$ with $n$ additional turns.

It follows that $M'$ is $0$-turn if and only if $M$ does not halt on blank tape.
The result follows from the undecidability of the halting problem for 
$\DTM$s on an initially blank tape. The proof is identical for $0$-crossings (except in case 2 above, $M'$ would make $n$ dummy crosses).
For $1$-visit, $M'$ makes one visit immediately with the first move, and in case 2, $M'$ makes $n$ dummy visits.
\qed
\end{proof}

\begin{corollary} It is undecidable whether a $\DTM$  is an $\ftNTM$ (respectively, 
$\fvNTM, \fcNTM$).
\label{cor1}
\end{corollary}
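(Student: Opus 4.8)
The plan is to reuse verbatim the machine $M'$ constructed in the proof of Proposition~\ref{prop1} and to observe that the very same reduction already distinguishes \emph{finite-turn} from \emph{not finite-turn}, not merely $0$-turn from not $0$-turn. Recall that $M'$ on input $a^n$ first simulates the fixed two-way $\DTM$ $M$ on blank tape; if $M$ halts, $M'$ then makes $n$ dummy turns and accepts $a^n$, whereas if $M$ does not halt, $M'$ diverges, so the only accepted input is $\lambda$, on which $M'$ makes $0$ turns.

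First I would analyze the two cases for the turn measure. If $M$ does not halt on blank tape, the only accepting computation of $M'$ is on $\lambda$ and makes $0$ turns; hence $M'$ is $0$-turn, and in particular $M'$ is an $\ftNTM$. If $M$ halts, then for every $n \ge 1$ the machine $M'$ has an accepting computation on $a^n$ making $n$ turns; since this quantity grows without bound in $n$, there is no $k$ for which $M'$ is $k$-turn, so $M'$ is not finite-turn. Thus $M'$ is an $\ftNTM$ if and only if $M$ does not halt on blank tape, and since the latter is undecidable, so is the former.

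For the finite-visit and finite-crossing statements I would invoke the corresponding variants of $M'$ already described at the end of the proof of Proposition~\ref{prop1}, which make $n$ dummy crosses (respectively $n$ dummy visits) in the halting case. The same dichotomy applies: when $M$ halts, the crossing (respectively visit) count grows without bound in $n$, so $M'$ is not finite-crossing (respectively not finite-visit); when $M$ does not halt, $M'$ is $0$-crossing (respectively $1$-visit) and hence finite. In each case $M'$ belongs to the relevant finite class precisely when $M$ fails to halt, yielding undecidability.

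I do not expect a genuine obstacle here; the only point requiring care is to confirm that the unbounded quantity is exactly the turn/crossing/visit measure and that no alternate computation could supply a smaller uniform bound. But since $M'$ is deterministic, each input $a^n$ has a unique computation, so the relevant count is exactly $n$ (up to a fixed additive constant), which rules out any finite bound and makes the reduction immediate.
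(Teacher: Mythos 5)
Your proof is correct and follows essentially the same route as the paper, which likewise derives the corollary from the Proposition~\ref{prop1} construction by noting that $M'$ is finite-turn (respectively finite-visit, finite-crossing) if and only if it is $0$-turn (respectively $1$-visit, $0$-crossing), i.e., if and only if $M$ does not halt on blank tape. Your added observation that determinism pins the turn count to $n$ plus a constant is a fine justification of the unboundedness, though the paper treats this as immediate.
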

This follows from the previous proposition, since $M'$ is finite-turn if and only if it is $0$-turn.

The next two propositions study decidability of two related problems: 1) Given  an $\NTM$ $M$, determining whether there is a word $x$ whereby all accepting computations on $x$ are $k$-turn (respectively $k$-visit, $k$-crossing), and 2) determining whether there is a word $x$ whereby there exists an accepting computation on $x$ that is $k$-turn (respectively $k$-visit, $k$-crossing). We will see that the first problem is undecidable, while the second is decidable. This second property implies that both problems are decidable for deterministic (or unambiguous, meaning there is at most one accepting computation on each word) machines.
\begin{proposition} It is undecidable, given an $\NTM$ $M$ and $k \ge 0$, whether there is an 
an input $x \in L(M)$ such that all accepting computations on $x$ are $k$-turn 
(respectively, $k$-crossing, $k\ge 1$-visit). 
\label{prop2}
\end{proposition}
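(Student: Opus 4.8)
The plan is to reduce from the halting problem for deterministic Turing machines on blank tape, using nondeterminism so that the turn/crossing/visit count of a computation \emph{diagnoses} whether a hardcoded machine halts. Given a single-tape two-way $\DTM$ $T$ started on blank tape (as in Proposition \ref{prop1}), I would construct an $\NTM$ $M$ over the unary input alphabet $\{a\}$, together with a fixed bound $k$, so that $M$ on any input nondeterministically chooses one of two modes.

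In the \emph{benign} mode, $M$ ignores $T$ entirely and simply consumes the input while using the worktape minimally, then accepts. For the turn and crossing versions I would have this mode stay on a single worktape cell while scanning the input, so that it accepts with $0$ turns and $0$ crossings; for the visit version I would have it advance the head one cell to the right per input symbol, so that each cell is visited exactly once and it accepts $1$-visit. Accordingly I set $k=0$ for turns and crossings and $k=1$ for visits, matching the ``$k\ge 1$-visit'' caveat, since no accepting computation can be $0$-visit. In the \emph{diagnostic} mode, $M$ simulates $T$ on its blank worktape; if $T$ ever halts, $M$ then executes a few dummy moves deliberately exceeding $k$ (at least one extra turn, one extra boundary crossing, or a second visit to some cell) and accepts, while if $T$ never halts this mode runs forever and contributes no accepting computation. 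The reduction is plainly effective, as $T$'s transitions are simply built into the diagnostic mode.

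The key observation is that the benign mode accepts every input, so $L(M)=a^*$ and every word is a candidate. Fixing any input $x$, its accepting computations are exactly the single benign computation (which meets the bound by construction) together with the single diagnostic computation, and the latter exists precisely when $T$ halts. Hence if $T$ halts, every $x$ has a diagnostic accepting computation exceeding the bound, so \emph{no} $x$ has all accepting computations within the bound; whereas if $T$ loops, the diagnostic mode never accepts and every $x$ has all of its (lone, benign) accepting computations within the bound. Therefore ``there exists $x\in L(M)$ all of whose accepting computations are $k$-turn (resp.\ $k$-crossing, $k$-visit)'' holds exactly when $T$ does \emph{not} halt on blank tape, and the undecidability of the halting problem gives the result.

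The main obstacle I anticipate is arranging the benign mode to meet the chosen bound \emph{exactly} while still accepting every input. Reading a unary input via worktape stay transitions incurs one visit per symbol, which would inflate the visit count, so for the visit version the benign mode must advance rightward once per symbol, and care is needed at acceptance time (moving right rather than staying when reading the end-marker) to avoid a second visit to the final cell. Dually, for the crossing version the benign mode must make no worktape move at all to keep $0$ crossings. Once these low-level details are pinned down for each of the three measures, and the diagnostic mode is padded to strictly exceed $k$, the equivalence above is immediate.
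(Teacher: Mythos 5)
Your proof is correct and takes essentially the same approach as the paper: both reduce from the halting problem for a $\DTM$ on blank tape via a nondeterministic two-branch construction, where the benign branch accepts within the bound and the diagnostic branch simulates the given machine and exceeds the bound precisely when it halts, so the answer is ``yes'' if and only if the machine does not halt. The only cosmetic differences are that the paper has both branches first make $k$ dummy turns and accepts only the single word $a$, whereas you keep the benign branch at $0$ turns/crossings (resp.\ $1$ visit) and accept all of $a^*$; both variants yield the stated undecidability.
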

\begin{proof}
Again, it is sufficient to show the $k$-turn case.

Let $M$ be a single-tape deterministic Turing machine (with a combined input and two-way worktape \cite{HU}) with an initially blank tape. Construct a one-way $\NTM$ $M'$ with input alphabet $\{a\}$.
First, $M'$ when given input $x$
makes $k$ turns on the worktape. Then it nondeterministically does one of the following:
\begin{enumerate}

\item $M'$ accepts $x$ if $x = a$ (taking $0$ turns).
\item $M'$ simulates $M$ on the worktape and if $M$ halts, $M'$ makes
a nondeterministically guessed number of turns and accepts $x$ if $x = a$ and rejects
all other strings.
\end{enumerate}
Clearly, $M'$ only accepts $a$, and not all accepting computations of $M$
on $x$ are within $k$-turns if and only if $M$ halts on blank tape.
\qed
\end{proof}

\begin{corollary}
\label{cor2} It is undecidable, given an $\NTM$ and $k \ge 0$, whether there is an 
an input $x$ such that only a finite number of accepting computations on $x$ are not $k$-turn (respectively not $k$-visit, not $k$-crossing).
\end{corollary}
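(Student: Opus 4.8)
The plan is to reduce from the halting problem for single-tape deterministic Turing machines started on blank tape, adapting the construction in the proof of Proposition \ref{prop2}. The essential new difficulty is that the machine $M'$ built there accepts only the single word $a$, so for the present statement the existential quantifier ``there is an input $x$'' would be satisfied \emph{trivially} by any $x \notin L(M')$: such an $x$ has zero, hence finitely many, non-$k$-turn accepting computations, regardless of whether the underlying machine halts. To obtain a genuine reduction I would therefore force $M'$ to behave uniformly on \emph{every} input, so that the decision cannot be settled by inspecting words outside the accepted language.

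Concretely, given a single-tape deterministic $M$ started on blank tape, I would build a one-way $\NTM$ $M'$ with input alphabet $\{a\}$ that, on any input $x$, reads and ignores $x$, first makes $k$ turns on the worktape, and then nondeterministically either (1) accepts immediately (a $k$-turn accepting computation), or (2) simulates $M$ on the worktape and, if $M$ halts, makes a nondeterministically guessed number of additional turns and then accepts. Since acceptance in branch (2) never depends on $x$, the turn behaviour is identical for all inputs. I would then split into two cases. If $M$ halts on blank tape, then for \emph{every} input $x$ and every guessed value $j$, branch (2) yields a distinct accepting computation whose turn count is $k$ plus the fixed finite number of turns used in simulating $M$ plus $j$; for all large $j$ this exceeds $k$, so every input has infinitely many accepting computations that are not $k$-turn, and no input $x$ has only finitely many. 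If $M$ does not halt, branch (2) never reaches acceptance, so the only accepting computations are the $k$-turn ones of branch (1), and every input has zero non-$k$-turn accepting computations. Hence an input $x$ of the required form exists if and only if $M$ does not halt, and undecidability follows from the undecidability of the halting problem.

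The $k$-visit and $k$-crossing variants would be handled by exactly the substitutions already used in the proof of Proposition \ref{prop2}: replacing the dummy turns by dummy visits (a sequence of stay moves on a single cell, which is why the visit case requires $k \ge 1$) or by dummy crosses (moving back and forth across one boundary). I expect the point needing the most care to be the uniformity argument, namely verifying that when $M$ halts the family of branch-(2) computations is genuinely infinite and pairwise distinct on each fixed input, since distinct guessed values of $j$ produce distinct computations with strictly larger turn counts, so that ``infinitely many non-$k$-turn computations'' holds simultaneously for all $x$ and the existential quantifier is defeated. Keeping branch (1) in the construction is what guarantees $L(M') = \{a\}^*$ in the non-halting case, so that the ``finitely many'' condition is witnessed (by zero) on inputs that are actually accepted rather than only vacuously on rejected ones.
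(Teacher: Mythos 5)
Your proof is correct, and it takes a route that differs from the paper's in an instructive way. The paper's proof of Corollary~\ref{cor2} is a one-liner: it reuses the machine $M'$ from Proposition~\ref{prop2} verbatim (which accepts only the single word $a$) and observes that when $M$ halts, $M'$ has infinitely many accepting computations on $a$ with arbitrarily many turns. That argument is sound provided ``there is an input $x$'' is read, as in Proposition~\ref{prop2}, as ``there is an input $x \in L(M')$''; under the literal reading over all of $\Sigma^*$, any $x \neq a$ has zero (hence finitely many) non-$k$-turn accepting computations, so the answer would be ``yes'' regardless of whether $M$ halts --- exactly the trivial-satisfaction problem you identified. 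Your construction removes the issue instead of relying on that reading: by letting both nondeterministic branches accept every input, you get $L(M') = a^*$ with identical computational behaviour on all inputs, so ``there exists $x$ with only finitely many non-$k$-turn accepting computations'' holds if and only if $M$ does not halt, independently of which inputs happen to be accepted. What the paper's route buys is brevity, since the corollary falls out of the previous construction with no new work; what yours buys is validity under the statement as literally quantified, at the cost of restating a uniformized version of the construction. Your case analysis (for each fixed $x$, infinitely many pairwise distinct branch-(2) computations with turn count exceeding $k$ when $M$ halts; only the $k$-turn branch-(1) computations when it does not) and the standard substitutions of dummy visits and dummy crossings for dummy turns are all as they should be, so undecidability follows from the halting problem exactly as you argue.
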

This follows from the proof of the proposition above, since if $M$ halts, $M'$ has an unbounded
number of accepting computations with arbitrarily many turns.

%Clearly, for deterministic machines, weakly $k$-turn ($k$-crossing. $k$-visit)
%is the same without the ``weakly'' assumption.  Hence, Proposition \ref{prop1} and
%Corollary \ref{cor1} hold for the weak version.
%However, Proposition \ref{prop2}  no longer holds for the weakly version.

Now we examine the second problem.
\begin{proposition} 
\label{weakly}
It is decidable, given an $\NTM$ $M$
and $k\ge 0$, whether there is an an input $x$ such that there is an accepting computation
on $x$ that is $k$-turn (respectively $k$-crossing, $k\ge 1$-visit). Furthermore, an $\ftNTM$ (respectively $\fvNTM, \fcNTM$) can be 
effectively constructed to accept all words $w$ where there is an accepting computation of $M$ on $w$ that is $k$-turn (respectively $k$-visit, $k$-crossing), and whereby every accepting computation is $k$-turn (respectively $k$-visit, $k$-crossing).

Both statements are also true for $\NTCM$ by constructing an $\ftNTCM$ (respectively $\fvNTCM, \fcNTCM$).
\end{proposition}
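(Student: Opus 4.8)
The plan is to reduce both statements to the \emph{emptiness} problem for finite-turn (respectively finite-visit, finite-crossing) machines, which is decidable because $\LL(\fcNTM) \subseteq \LL(\fcNTCM)$ is semilinear via an effective construction \cite{Harju}. Concretely, from $M$ and $k$ I would build a machine $M'$ that simulates $M$ step for step but continuously monitors the relevant resource (turns, crossings, or visits) and abandons any computation the instant this resource would exceed $k$. Then $M'$ is, by construction, a $k$-turn (respectively $k$-visit, $k$-crossing) machine, since every one of its accepting computations stays within the bound, and $L(M')$ is exactly the set of words $w$ admitting a $k$-turn (etc.) accepting computation of $M$. This immediately yields the second sentence of the statement, and deciding emptiness of $L(M')$ yields the first.

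The monitoring is done differently for each resource. For turns, $M'$ keeps in its finite control the current direction of travel (ignoring stay and counter steps, matching the address-based definition of a turn) together with a turn counter in $\{0,\dots,k\}$; a genuine reversal increments the counter, and $M'$ aborts once it would pass $k$. For crossings, $M'$ augments the worktape alphabet to $\Gamma \times \{0,\dots,k\}$, recording in each cell how many times its left boundary has been crossed; each boundary crossing increments exactly one such count, and $M'$ aborts if a count would exceed $k$. Since stay steps cross no boundary, $M'$ may freely insert auxiliary stay steps to read and rewrite these counts without disturbing its crossing profile. For the counter-augmented versions, the reversal-bounded counters of $M$ are simply carried along as reversal-bounded counters of $M'$, so $M'$ lands in $\ftNTCM$ (respectively $\fvNTCM, \fcNTCM$); counter and stay steps keep the worktape address fixed and so contribute neither turns nor crossings.

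The visit case is where I expect the real difficulty, because here \emph{no} auxiliary step is permissible: every step $M'$ takes on a cell is itself a visit, so the bookkeeping must be folded into the genuine simulation steps without inflating the count. My plan is to keep the current cell's running visit count in the finite control (a value in $\{1,\dots,k\}$) and to store in each worktape cell the count as of the last time the head departed it. On arriving at a cell via a move, $M'$ reads the stored count at the first simulated step from that cell, adds one for the arrival configuration, and aborts if this exceeds $k$; each subsequent stay or counter step increments the control count and aborts on overflow; on departure the up-to-date count is written back into the cell. In this way $M'$ performs exactly one transition per transition of $M$ at the identical head position, so its per-cell visit counts coincide with those of $M$ and it is genuinely $k$-visit (this is also why the statement restricts to $k \ge 1$, as a $0$-visit computation is impossible). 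The point needing the most care is precisely this synchronisation: reading the stored count on arrival, correctly accounting for configurations produced by counter steps that cannot themselves rewrite the worktape, and charging the final accepting configuration, all while guaranteeing that $M'$ never introduces a spurious visit, crossing, or turn.

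Finally, once $M'$ has been verified to be an $\ftNTM$ (respectively $\fvNTM, \fcNTM$), or $\ftNTCM$ (respectively $\fvNTCM, \fcNTCM$) in the counter case, both conclusions follow: the constructed $M'$ is the required machine, and since $L(M') = \emptyset$ exactly when no input admits a $k$-turn (respectively $k$-visit, $k$-crossing) accepting computation of $M$, decidability of emptiness for these semilinear families \cite{Harju} settles the first statement.
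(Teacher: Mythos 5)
Your proposal is correct and follows essentially the same route as the paper's proof: simulate $M$ while recording bounded (capped at $k$) turn/crossing/visit counts — on the tape for crossings and visits, in the finite control for turns — so that exactly the $k$-bounded accepting computations of $M$ survive as accepting computations of the monitor machine $M'$, and then decide emptiness of $L(M')$ via the effective semilinearity of $\fcNTCM$ \cite{Harju}. The only differences are cosmetic bookkeeping choices (you store counts as a second component of each cell and abort immediately on overflow, whereas the paper interleaves dedicated counter cells with the tape letters and rejects at the end when the recorded maximum $S$ exceeds $k$), and your explicit treatment of the visit-count synchronisation is, if anything, more detailed than the paper's one-line remark on that case.
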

\begin{proof}
We prove the $k$-crossing case as it is the most complicated.
Notice that if an $\fcNTM$ (respectively $\fcNTCM$) can be constructed to accept all words $w$ where there is  an accepting computation of $w$ by $M$ that is $k$-crossing, then this implies that the problem is decidable, because emptiness is decidable for $\fcNTCM$ \cite{Harju}, and this machine is not empty if and only if $M$ accepts an input that is $k$-crossing.

Let $M$ be an $\NTCM$.  We construct another $\NTCM$ $M'$ that simulates $M$ as follows. Each cell of the worktape keeps either 
a worktape letter of $M$ or a non-negative integer, in an alternating fashion (between every two cells labelled by letters is a single non-negative integer).
Then $M'$ simulates $M$, but as it crosses to the right say, $M'$ moves right to an integer $m$ which it increases to $m+1$ before proceeding to the next cell. $M'$ also stores a variable called $S$ in its finite control which stores the maximum integer stored in any cell so far. All of these values only store integers up to $k$.

When $M$ accepts, $M'$ accepts if $S \le k$; else it rejects.
Therefore, if $M$ accepts $w$ with at most $k$ crossings, then so does $M'$, and every accepting computation of $M$ with at most $k$ crossings has a corresponding at most $k$ crossing accepting computation of $M'$. Any accepting computation using more than $k$ crossings does not have a corresponding accepting computation of $M'$. 

For the $k$-turn case, we only need to keep track of the number of turns so far in $S$. For the $k$-visit case, we need to store the number of visits to each cell as a second component of each cell along with the maximum visit number.
\qed
\end{proof}
To note, Proposition \ref{weakly} does not contradict Proposition \ref{prop2} because in Proposition \ref{prop2}, there could be additional accepting computations that are not $k$-turn, but the machine in Proposition \ref{weakly} has accepting computations corresponding with all accepting computations of $M$ with at most $k$ turns.

An $\NTCM$ $M$ is considered {\em unambiguous}, if for each $w \in \Sigma^*$, $M$ has at most one accepting computation on $w$.
The proof of the next result is similar to the one above, noting that
since the machine is unambiguous, for any input, there is at most 
one accepting computation. To note, every deterministic machine is unambiguous. This corollary is meant to contrast Proposition \ref{prop2}.
\begin{corollary}  It is decidable, given an unambiguous  $\NTCM$ $M$
and $k$, whether there is an an input $x$ such that the accepting computation
on $x$ is k-turn (resp., $k$-crossing, $k$-visit). 
\end{corollary}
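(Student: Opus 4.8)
The plan is to reduce this to the existential version of the problem already settled in Proposition \ref{weakly}. The key observation is that unambiguity collapses the universal quantifier appearing in Proposition \ref{prop2} into an existential one: since an unambiguous $\NTCM$ has at most one accepting computation on each input, the phrases ``all accepting computations on $x$ are $k$-turn'', ``the accepting computation on $x$ is $k$-turn'', and ``there exists an accepting computation on $x$ that is $k$-turn'' all coincide whenever $x \in L(M)$.

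First I would make the equivalence precise. For an unambiguous $M$, the statement ``there is an input $x$ such that the accepting computation on $x$ is $k$-turn'' holds if and only if ``there is an input $x$ and an accepting computation on $x$ that is $k$-turn'' holds. The forward implication is immediate. For the converse, suppose some accepting computation on $x$ is $k$-turn; by unambiguity it is the only accepting computation on $x$, so it is exactly ``the accepting computation on $x$'', which is therefore $k$-turn. The same argument works verbatim for $k$-crossing and ($k \ge 1$-)visit.

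Then I would simply invoke Proposition \ref{weakly}, which states that it is decidable, given an $\NTCM$ $M$ and $k$, whether there is an input $x$ admitting a $k$-turn (respectively $k$-crossing, $k \ge 1$-visit) accepting computation; indeed it constructs an $\ftNTCM$ (respectively $\fvNTCM$, $\fcNTCM$) accepting exactly those words and applies the decidable emptiness test for such machines. Combined with the equivalence above, this yields decidability for unambiguous $\NTCM$s, and in particular for all deterministic $\NTCM$s.

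The content here lies entirely in the observation rather than in any construction, so there is no real technical obstacle; the point worth stressing is why the undecidability of Proposition \ref{prop2} does not survive. That reduction built a machine with many accepting computations on the single accepted word $a$, some taking arbitrarily many turns, precisely to exploit the universal quantifier over accepting computations. Unambiguity forbids exactly this multiplicity, so the hardness construction cannot be carried out, and the problem reduces to the decidable existential version of Proposition \ref{weakly}.
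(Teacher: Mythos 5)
Your proof is correct and matches the paper's approach: the paper likewise observes that unambiguity makes ``there exists a $k$-turn accepting computation on $x$'' coincide with ``the (unique) accepting computation on $x$ is $k$-turn,'' and then appeals to the decidability established in Proposition~\ref{weakly}. Your added remark on why the hardness construction of Proposition~\ref{prop2} cannot survive unambiguity is a nice clarification but not a departure from the paper's argument.
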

It is also possible to construct an explicit example of a string $x$ with the desired property by testing non-emptiness, and then
if it is non-empty, testing non-emptiness of $L(M') \cap \{w\}$ ($M'$ in proof above) for all $w$ from the shortest until one accepts.

\section{Discussion Regarding Definitions of Turing Machine Restrictions}

Consider the following example of a non-finite-crossing $\NTM$.

\begin{example}\label{TMduplicate}
    Consider an $\NTM$ $M=(Q,\Sigma,\Gamma,\delta,q_0,F)$, where $\Sigma=\{a,b\}$, $\Gamma=\{a,b,a',b',\$,\blank\}$, $F=\{q_7\}$, and $\delta$ contains the following transitions, for all $x\in\{a,b\}$ and $y\in\{a,b,\$\}$:
    \begin{align*}
            (q_0,x,\blank,q_0,x, {\rm R}),
            (q_0,\emr,\blank,q_1,\$, {\rm L}),
            (q_1,\lambda,x,q_1,x, {\rm L}),
            (q_1,\lambda,\blank,q_2,\blank, {\rm R}),
            (q_2,\lambda,a,q_3,a', {\rm R}),
            (q_2,\lambda,b, q_4,b', {\rm R}),\\
            (q_3,\lambda,y,q_3,y, {\rm R}),
            (q_3,\lambda,\blank,q_5,a, {\rm L}),
            (q_4,\lambda,y,q_4,y, {\rm R}),
            (q_4,\lambda,\blank, q_5,b, {\rm L}),
            (q_5,\lambda,y, q_5,y, {\rm L}),
            (q_5,\lambda,x', q_6,x, {\rm R}),\\
            (q_6,\lambda,x,q_2,x, {\rm S}),
            (q_6,\lambda,\$, q_7,\$, {\rm S}).
    \end{align*}
    This machine accepts the simple regular language $\Sigma^+$, and when given an input $w$, produces $w\$w$ on the tape.
    It operates by first copying the input string to the tape followed by $\$$, then moving to the leftmost symbol on the tape.
    It marks the symbol using a prime symbol ($'$) and remembers it in the state, moves right to the first blank, writes the symbol, moves left to the marked symbol, erases the mark, moves right, then repeats until all symbols before $\$$ have been copied.
    
    Notice that when given an input of length $n$, performing the duplication on the tape requires the read/write head to cross the cell containing $\$$ at least $2n$ times. Thus this machine is not finite-crossing.
    
    Since this machine only reaches the final state $q_7$ after completing its operation, the only store configurations involving $q_7$ are those in which the tape contains a duplicated string, \textit{i.e.}, if we intersect $S(M)$ with the regular language $q_7\Gamma^*$, we obtain the language
    \[\{q_7w\$\head w\mid w\in\Sigma^+\}\]
    which is not regular. Thus $S(M)$ is not regular, since regular languages are closed under intersection \cite{HU}.
\end{example}
  We will refer to this example in discussing several points below.

%\subsection{Definition of finite-crossing and finite-visit}

In the definition of $k$-crossing machines we use in this paper, a machine is $k$-crossing if every accepting computation is $k$-crossing.
Elsewhere in the literature, it was defined that a machine $M$ is $k$-crossing if, for every $w \in L(M)$, there is some accepting computation that is $k$-crossing \cite{visitautomata}. Here, we call this notion {\em weakly $k$-crossing} (and {\em weakly finite-crossing}). The languages accepted by $k$-crossing and weakly $k$-crossing $\NTM$s coincide by Proposition \ref{weakly}. But  there is a large difference between the store languages of the two models. The following proof is reminiscent of Example \ref{TMduplicate}.
We use the definition of $S_q(M)$ and $\SS_{\rm state}(\MM)$ from Section \ref{storeFiniteVisit}. We add ``weak'' as a superscript to $\fcNTM$ to represent weakly finite-crossing machines.
\begin{proposition}
$\SS_{\rm state}(\fcNTM^{\rm weak}) = \RE$.  Furthermore, there are non-recursive languages that are store languages of weakly $0$-crossing $\NTM$s.
\end{proposition}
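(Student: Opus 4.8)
The plan is to prove the two claims in tandem, since the second (existence of non-recursive store languages for weakly $0$-crossing machines) is essentially the $0$-crossing special case of the first ($\SS_{\rm state}(\fcNTM^{\rm weak}) = \RE$). The key observation driving both is exactly the phenomenon from Example \ref{TMduplicate}: the weak definition lets us hide arbitrarily complicated tape-rewriting computations behind a \emph{single, possibly non-accepting} worktape path, because the crossing bound only needs to be met on \emph{some} accepting computation for each accepted word. I would exploit this by making the machine accept a trivial language (e.g. $\{a\}$ or $\Sigma^+$) via a cheap low-crossing computation, while in a separate nondeterministic branch it runs an unbounded Turing computation that leaves a store configuration encoding an arbitrary recursively enumerable set on the worktape.

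For the upper bound $\SS_{\rm state}(\fcNTM^{\rm weak}) \subseteq \RE$, I would argue that the store language of any $\NTM$ is recursively enumerable: one can enumerate all finite computation prefixes $(q_0,w\emr,\blank\head,\dots)\vdash^* (q,w',x,\dots)$ and, for each, semi-decide whether it extends to an accepting computation by dovetailing over all continuations. Since $S_q(M)$ is obtained from $S(M)$ by selecting the state-$q$ words and stripping the state and head marker (a computable operation), $S_q(M)\in\RE$, giving $\SS_{\rm state}(\fcNTM^{\rm weak})\subseteq\RE$.

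For the lower bound, I would fix an arbitrary $\RE$ language $L\subseteq\Sigma^*$, take a standard Turing machine $M_L$ semi-deciding $L$, and build a weakly $0$-crossing $\NTM$ $M$ together with a state $q$ such that $S_q(M)=L$. The construction is as follows. On any input, $M$ nondeterministically chooses one of two branches. In the \emph{accepting branch}, $M$ does essentially nothing on the worktape (no crossings) and accepts a single trivial word, so that $L(M)$ is nonempty and every such accepting computation is $0$-crossing. In the \emph{store-generating branch}, $M$ writes a guessed word $w$ on the worktape without crossing any cell boundary more than the allowed bound (I would have it lay down $w$ left-to-right as it writes, entering state $q$ exactly when the worktape contents equal $w\head$), then transitions into a subroutine simulating $M_L$ on $w$; this branch is designed so that it reaches a final state \emph{if and only if} $M_L$ halts on $w$, i.e. $w\in L$. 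Crucially, the $M_L$-simulation (which may cross cell boundaries unboundedly) occurs \emph{after} the configuration $q w\head$, and the whole store-generating branch need only ever reach an accepting configuration on those $w\in L$; for $w\notin L$ it loops without accepting. Because $q w\head\in S(M)$ requires the store-generating branch to eventually reach a final state, we get exactly $S_q(M)=\{w : w\in L\}=L$. For weak $0$-crossing we must check that for every accepted \emph{word} there is \emph{some} $0$-crossing accepting computation: since the trivial accepting branch accepts the one designated word with $0$ crossings, and that word is the only word in $L(M)$, the weak $0$-crossing condition is satisfied regardless of the wild crossing behaviour in the store-generating branch.

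The main obstacle, and the point requiring the most care, is the interface between the two branches and verifying the weak-crossing bound honestly. I must ensure that (i) $q$ is entered \emph{only} at the moment the worktape holds precisely $w\head$ with $w$ unmarked, so that extracting state-$q$ words gives exactly $L$ and not spurious intermediate tapes; (ii) the store-generating branch accepts for $w$ if and only if $w\in L$, which is where the halting/semi-decision of $M_L$ is embedded; and (iii) the accepted language $L(M)$ contains only the single trivially-accepted word (or a fixed regular trivium), so that the weak-crossing requirement is met by the cheap branch alone and places no constraint on the expensive branch. Taking $L$ to be a specific non-recursive $\RE$ language then yields a non-recursive $S_q(M)=L$ for a weakly $0$-crossing machine, establishing both the equality $\SS_{\rm state}(\fcNTM^{\rm weak})=\RE$ and the existence of non-recursive store languages, in sharp contrast to the (non-weak) $\fcNTM$ case of Theorem \ref{fcNTM}.
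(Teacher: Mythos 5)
Your proposal is correct and takes essentially the same approach as the paper: a nondeterministic two-branch machine whose cheap branch certifies the weak $0$-crossing property, while the other branch enters a special state $q$ exactly once and then runs an unbounded Turing simulation so that $S_q(M)$ equals an arbitrary $\RE$ language $L$ (your guessing $w$ on the worktape rather than copying the actual input, and keeping $L(M)$ a trivial singleton rather than $\Sigma^*$, are cosmetic variations). The only step worth making explicit for the second claim is that non-recursiveness of $S_q(M)$ transfers to the store language $S(M)$ itself, via closure of the recursive languages under left and right quotient by a symbol --- a point your ``computable operation'' remark already supplies in substance.
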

\begin{proof}
Let $M$ be a single-tape deterministic Turing machine (one combined input and read/write worktape) accepting $L$ over $\Sigma$ accepting an arbitrary $\RE$ language. Build $M'$ as follows: $M'$ does one of two things nondeterministically on input $w \in \Sigma^*$. Either
1) $M'$ simply reads $w$ and accepts, or
2) $M'$ copies $w$ to the worktape, switches to a new special state $q$ (which it only enters once), moves back to the first worktape cell before $w$, and then simulates $M$ on $w$, accepting if $M'$ accepts. Thus, $L(M')  = \Sigma^*$, and $M'$ is weakly $0$-crossing because for every $w\in L(M')$, there is some accepting computation (of type 1) that is $0$-crossing. But if one looks at $S_q(M')$, this is equal to $L(M)$.

For the second point, let $L$ be a non-recursive language accepted by a single-tape deterministic Turing machine $M$. By following the procedure of the first paragraph, we can construct $M$ such that $S_q(M') = L(M)$. Assume $S(M')$ were recursive. But then $(q^{-1}S(M'))(\head)^{-1} = S_q(M')$ would be as well since the recursive languages are closed under left and right quotient with a symbol, a contradiction.
\qed
\end{proof}

This illustrates that even if two classes of machines accept the same family of languages, this does not imply that their classes of store languages are equal. Indeed, in this case, finite-crossing $\NTM$ and weakly finite-crossing $\NTM$ accept the same family of languages. However, the first class of machines always have regular store languages, while the second has non-recursive store languages.
Notice that the machine $M'$ constructed in this proof uses nondeterminism. For unambiguous $\NTM$, where each word has at most one accepting computation, there is no difference in store languages between the two notions, and they are always regular in both cases. 

\section{Non-Bounded Visit and Crossing Turing Machines}

We have shown that for an $\NTM$, a constant bound on turns, crossings, or visits always produces a regular store language.
From Example \ref{TMduplicate}, we can see that given an input word of length $n$, a bound of $O(n)$ on the number of turns (respectively visits, crossings) can produce non-regular store languages.
This motivates us to consider other bounds on turns, crossings, and visits based on the length of the input to the Turing machine.

\begin{example}\label{TMlogn}
    Consider the Turing machine $M=(Q,\Sigma,\Gamma,\delta,q_0,F)$ where $\Sigma=\{a\}$, $\Gamma=\{a,b,\blank\}$, $F=\{q_7\}$, and $\delta$ contains the following transitions:
    \begin{align*}
            (q_0,a,\blank , q_0,a, {\rm R}), 
            (q_0,\emr,\blank , q_1,\blank, {\rm L}),
            (q_1,\lambda,a, q_1,a, {\rm L}), 
            (q_1,\lambda,\blank, q_2,\blank, {\rm R}),
            (q_2,\lambda,a,q_3,b, {\rm R}), 
            (q_2,\lambda,b ,q_2,b, {\rm R}),\\ 
            (q_3,\lambda,a,q_4,a, {\rm R}), 
            (q_3,\lambda,b, q_3,b, {\rm R}), 
            (q_3,\lambda,\blank,q_7,\blank, {\rm L}),
            (q_4,\lambda,a , q_5,b, {\rm R}),
            (q_4,\lambda,b,q_4,b, {\rm R}), 
            (q_4,\lambda,\blank,q_6,\blank, {\rm L}), \\
            (q_5,\lambda,a, q_4,a, {\rm R}),
            (q_5,\lambda,b,q_5,b, {\rm R}),
            (q_6,\lambda,a,q_6,a, {\rm L}), 
            (q_6,\lambda,b,q_6,b, {\rm L}), 
            (q_6,\lambda,\blank,q_2,\blank, {\rm R}).
    \end{align*}
    This machine accepts the language $\{a^{2^k}\mid k>0\}$. It operates by copying the input to the tape, moving the head back to the leftmost symbol, then sweeping from left to right while changing every other $a$ to $b$.
    After reaching the rightmost symbol, if the number of $a$'s seen was odd and greater than $1$, then the length of the input could not have been a power of two.
    If a single $a$ was seen, then the length of the input must have been a power of two, so $M$ accepts.
    If the number of $a$'s seen was even, then the head is returned to the leftmost symbol and another sweep is performed.
    
    Since each sweep halves the number of $a$'s on the tape, and $M$ only accepts when all $a$'s have been changed to $b$'s, this machine makes $2k+2$ turns for input of length $2^k$.
    Hence if the length of the input is $n$, then the number of turns is $O(\log n)$.
    Since this machine always sweeps the full length of the tape after a turn and has no transitions that stay on a tape cell, the number of crossings and visits are also both $O(\log n)$.
    
    This machine only reaches the final state $q_7$ after changing every $a$ on the tape to $b$, so if we intersect the store language $S(M)$ with the regular language $q_7\Gamma^*$, we obtain the language
    \[\{q_7b^{2^k}\head\mid k>0\}\]
    which is not regular. Thus $S(M)$ is not regular (nor semilinear \cite{GinsburgCFLs}).
\end{example}

From this example, it is clear that for a Turing machine given input of length $n$, a bound of $O(\log n)$ on turns, crossings, or visits is not sufficient to guarantee that the store language is regular.
This result can be strengthened as follows:
\begin{proposition}
Let $f : \mathbb{N} \rightarrow \mathbb{N}$ be an unbounded function, and let $M$ be an $\NTM$ which is not finite-turn, but on every input of length $n$, it makes at most $f(n)$ turns. Then there exists an $\NTM$ $M'$, which on every input of length $n$ makes at most $O(f(n))$ turns, but $S(M')$ is not regular.
\end{proposition}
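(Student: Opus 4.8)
The plan is to use the hypothesised machine $M$ purely as an \emph{unbounded generator of turns}. Since $M$ is not finite-turn, the set $T$ of turn counts realised by its accepting computations is infinite, yet on every input of length $n$ each such computation uses at most $f(n)$ turns. I would construct $M'$ so that whenever it follows an accepting computation of $M$ that makes $t$ turns, it writes $a^t\$a^t$ on a dedicated track of its worktape and enters a distinguished state $q$ from which it accepts. The store configurations of $M'$ in state $q$ then encode exactly $\{a^t\$a^t \mid t \in T\}$, which is non-regular for \emph{every} infinite $T$; this robustness is the reason for duplicating rather than merely writing $a^t$, since $\{a^t \mid t \in T\}$ could itself be regular (for instance if $T = \mathbb{N}$).

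Concretely, $M'$ on input $w$ would use a two-track worktape. The first track carries a faithful step-by-step simulation of $M$ on $w$; the second track holds a unary register. Each time the simulated head of $M$ reverses direction, $M'$ makes a short excursion along the shared cells to the frontier of this register, appends one symbol, and returns to resume the simulation, so that after $M$ accepts with $t$ turns the register contains $a^t$. Then $M'$ duplicates the register exactly as in Example \ref{TMduplicate} to obtain $a^t\$a^t$ on a fresh track, enters $q$, and accepts. The entire point is the accounting of \emph{turns}: the simulation contributes the $t$ turns of $M$; each of the $t$ excursions inserts only $O(1)$ additional reversals (it either prolongs a sweep that is already reversing toward the register, or performs a bounded back-and-forth when the register lies on the far side); and duplicating a length-$t$ string costs $O(t)$ turns. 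Hence every computation of $M'$ on an input of length $n$ makes $O(t)$ turns with $t \le f(n)$, that is, $O(f(n))$ turns, as required. Note that only reversals are counted, so the (possibly large) distances travelled during the excursions are irrelevant.

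For non-regularity I would intersect $S(M')$ with the regular set pinning the first symbol to $q$, then apply a gsm that projects onto the duplication track and erases blanks, exactly as in the extraction used in the proof of Theorem \ref{fcNTM}; the result is $\{a^t\$a^t \mid t \in T\}$. A standard pumping argument (pump inside the first block of a sufficiently long word $a^t\$a^t$ to unbalance the two blocks) shows this set is non-regular whenever $T$ is infinite, and since $\REG$ is closed under intersection with regular sets and under gsm mappings, $S(M')$ cannot be regular. I expect the main obstacle to be precisely the turn bookkeeping of the previous paragraph: because $f$ may grow arbitrarily slowly, the auxiliary maintenance of the register and the duplication must add only a constant number of reversals per turn of $M$, and verifying this constant-factor bound carefully on the single one-sided worktape is the delicate part of the argument.
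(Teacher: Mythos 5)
Your proposal is correct and follows essentially the same route as the paper's own proof: simulate $M$ while recording its turns in unary on an auxiliary track with only $O(1)$ extra reversals per turn, then spend $O(t)$ further turns converting the $t$ recorded marks into a balanced two-block string (the paper builds $x^k y^k \#$ by sweeps, you build $a^t\$a^t$ by the duplication of Example \ref{TMduplicate}, which the paper itself cites as its template), and finally extract that track by a homomorphism/gsm and pump to contradict regularity. The turn bookkeeping you flag as delicate is handled in the paper exactly as you suggest, exploiting that only reversals, not distances, are counted.
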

\begin{proof}
Choose an arbitrary number $k_0$. Since $M$ is not finite-turn, it has some accepting computation on a string $w$, $|w| = n$, which makes at least $k_0$ turns. Denote the number of turns in this computation by $k$, where $k_0 \leq  k \leq f(n)$. Now construct $M'$ as the following: $M'$ has a two-track worktape, and it simulates a computation of $M$ on $w$ using the first track, but whenever $M$ would make a turn, $M'$ puts a new symbol $x$ on the second track. However, if this cell already contains the symbol $x$, $M'$ changes this symbol to a new symbol $y$, continues in the same direction as it was moving previously until it finds a cell with an empty second track, puts the symbol $x$ there, turns around, moves back to the cell with symbol $y$, changes this symbol back to $x$, and continues simulating the computation of $M$.

Note that after the simulation of $M$ ends, the second track of the tape of $M'$ contains exactly $k$ symbols $x$; and $M'$ has made $k$ turns so far.

$M'$ now moves its head to the left end of the tape (using up to one more turn), and sweeps the tape $k$ more times. In every sweep, the machine replaces one of the symbols $x$ on the second track by the symbol $y$, and appends one cell with the symbol $z$ in the second track at the very end of the tape. This continues until all symbols $x$ are erased, taking $2k$ more turns. Finally, $M'$ appends one more symbol $\#$ to the end of the tape and accepts. (This is a slightly modified variant of the construction from Example \ref{TMduplicate}.)

Observe that the computation of $M'$ has used $3k$ plus a constant number of turns, and since $k \le f(n)$, the number of turns is in $O(f(n))$. Further, the second track of the tape contains the string $x^k y^k \#$, potentially interleaved with blank spaces.

Let $h$ be a homomorphism which preserves the second track of the tape, and erases all tape cells which have an empty second track, and all non-tape symbols (state, head). Thus the image of the last configuration of $M'$ under $h$ is $x^k y^k \#$.

If $S(M')$ was regular, then so should be $L = (h(S(M'))$ intersected with $x^*y^*\#$. However, by the above construction, since we were able to choose an arbitrarily large $k$, we obtain an infinite number of strings $x^k y^k \#$ which should be in $L$. On the other hand, no string where the number of symbols $x$ and $y$ differ is in $L$, as the symbol $\#$ was only appended in the last step of $M'$. This is a contradiction with regularity of $L$, and thus also $S(M')$.

What this means is that for any function $f$, as long as some machine which makes $f(n)$ turns on an input of length $n$ exists, then $O(f(n))$ turns is not a sufficient bound to guarantee a regular store language. Example \ref{TMlogn} then demonstrates an $\NTM$ which makes $\log(n)$ turns on an input of length $n$.
\qed
\end{proof}
An open question is: does there exist an unbounded function $f$, such that every $f(n)$-turn-bounded machine is also constant-turn-bounded? If this was answered negatively, it would mean that the only possible limit on number of turns which guarantees a regular store language is constant.

\section{Store Languages of One-Way Multi-Counter Machines}
It is known that the store language of every $\NCM$ is in $\LL(\DCM)$, where $\DCM$ are the deterministic machines in $\NCM$ \cite{StoreLanguages}. But, how complex can the store languages of multi-counter machines get, when we do not start with the assumption that the counters are reversal-bounded? Certainly not all recursively enumerable languages can be store languages of multi-counter machines (even after taking the left quotient by a state) since the store language of every $t$ counter machine over state set $Q$ is a subset of $Q C_1^* \cdots C_t^*$, which is a letter-bounded language. It is known that even the simple language $\Sigma^*$ where $|\Sigma| \ge 2$ is not a bounded language \cite{GinsburgCFLs}, and therefore $\Sigma^*$  cannot be the store language of a multi-counter machine (even after taking any sort of quotient with the state set). However, we get non-recursive store languages. Let $\DCOUNTER(t)$ be the deterministic $\COUNTER(t)$ machines.
\begin{proposition} \label{nonREcounter}
$\SS(\DCOUNTER(2))$ contains non-recursive languages.
\end{proposition}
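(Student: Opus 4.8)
The plan is to exploit the Turing-completeness of deterministic two-counter machines (Minsky machines) and to plant a single ``marker'' state on the unique accepting computation, so that membership of one designated store configuration in $S(M)$ encodes an undecidable predicate.

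First, fix a deterministic single-tape Turing machine $T$ that accepts a non-recursive $\RE$ set $S \subseteq \mathbb{N}$, viewed as the unary language $\{a^m \mid m \in S\}$; such a $T$ exists because $\RE = \LL(\DTM)$ and there are non-recursive $\RE$ subsets of $\mathbb{N}$. I would then construct a deterministic $\COUNTER(2)$ machine $M$ which, on input $a^m$, first loads a suitable encoding $N_m$ of ``$T$ run on input $m$'' into counter $1$ (with counter $2$ at $0$), then enters a fresh state $q$ that is used nowhere else and is entered exactly once, and from $q$ simulates $T$ by the standard two-counter construction, accepting if and only if $T$ accepts. The state $q$ is the marker: the configuration $q C_1^{N_m}$ lies on the computation of $M$ on $a^m$, and by determinism it lies on no other computation.

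The decisive point is the loading step, and this is where I expect the main obstacle. By Schroeppel's theorem a two-counter machine cannot compute $n \mapsto 2^n$ with the argument held internally in a counter, so I cannot simply read $a^m$ into a counter and then internally inflate it to the exponential encoding that seeds the Minsky simulation. The way around this is to use the one-way input tape itself as an external clock: starting with counter $1$ equal to $1$, for each input symbol $a$ the machine performs one doubling (repeatedly moving a unit from one counter to the other while adding two, which uses only the two available counters), so that after reading all of $a^m$ counter $1$ holds exactly $2^m$, with counter $2$ back at $0$. Since the doublings are driven by input reads rather than by an internally stored counter, this evades Schroeppel's obstruction, and $2^m$ (or any computably adjusted seed built the same way) can serve as $N_m$ for the simulation of $T$. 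Every step is deterministic and uses only two counters throughout, so $M \in \DCOUNTER(2)$.

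With this in place, $q C_1^{N_m} \in S(M)$ if and only if the unique computation of $M$ on $a^m$ is accepting, if and only if $T$ accepts $m$, if and only if $m \in S$. I would then close by reduction: if $S(M)$ were recursive, the procedure ``given $m$, compute $N_m$, form the string $q C_1^{N_m}$, and decide its membership in $S(M)$'' would decide $S$, contradicting the non-recursiveness of $S$. Hence $S(M)$ is non-recursive and, since $M \in \DCOUNTER(2)$, the proposition follows. Equivalently, $S(M) \cap q C_1^*$ equals $\{q C_1^{N_m} \mid m \in S\}$, which cannot be recursive because the recursive sets are closed under intersection with regular languages and $m \mapsto N_m$ is a computable injection with decidable range.
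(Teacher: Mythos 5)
Your proposal is correct and follows essentially the same route as the paper's proof: both use the one-way input tape to drive a per-symbol doubling that plants $2^m$ in a counter (explicitly circumventing the impossibility of internal exponentiation), enter a designated state exactly once, then run a prime-power (G\"odel) encoded simulation of a universal device on the two counters, and conclude by a reduction showing that deciding membership of the designated state's configurations in the store language would decide a non-recursive set. The only cosmetic difference is that the paper simulates a deterministic one-way two-counter machine accepting a non-recursive unary language (citing Hopcroft--Ullman) via the encoding $2^{c_1}3^{c_2}5^{c_3}$, whereas you invoke Minsky's two-counter simulation of a Turing machine with seed $2^m$ directly --- the same underlying construction.
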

\begin{proof}
First, we will note the following well-known strategy. If we have a $2$-counter machine with counters called $A$ and $B$, with a number
$c$ in $A$, and $0$ in $B$, and we have a fixed number $d$, we can compute $dc$ and ultimately store it in $A$. We can do this by adding $d$ to $B$ for every decrease of $A$ by $1$, and then when $A$ is zero, move the contents of $B$ back to $A$.
Similarly, if we have a number $c$ in $A$ and $0$ in $B$, and we have a fixed number $d$, we can determine if $c$ is divisible by $d$, and compute $c$ divided by $d$ if it is. This can be done by adding $1$ to $B$ for every decrease of $d$ by $A$. If it is not divisible, then it is also clearly possible to recover $c$.

It is known that that there are non-recursive unary
languages that can be accepted by one-way deterministic 2-counter
machines \cite{HU}. Let $L$ be such a language accepted
by a machine $M$, where the counters are called $X$ and $Y$.  Construct a deterministic 2-counter machine $M'$ with
counters called $A$ and $B$. Then $M'$ on unary input $a^n \emr$
first computes on $A$ the number
$2^n$ using the doubling strategy above for every input letter read. 
When $M'$
reaches the end marker, $M'$ switches to special state $p$, and then
simulates $M$ in the following altered fashion. 
After simulating each move of $M$, counter $A$ of $M'$ will be of the form
$2^{c_1} 3^{c_2} 5^{c_3}$, where $c_1$ is the remaining number of $a$'s to read from the simulated computation,
$c_2$ is the current value in the simulated counter $X$, and
$c_3$ is the current value in the simulated counter $Y$.

Before simulating each transition of $M$, we can test whether there are any input letters left to read by checking if $A$ is divisible by $2$, and we can test whether any of the two simulated counters $X$ or $Y$ are non-zero by checking if $A$ is divisible by $3$ and by $5$. 
Then $M'$ can choose the appropriate transition, and can simulate the reading of an $a$ with a division by $2$, and an
increase to $X$ (resp.\ $Y$) by multiplying by $3$ (resp.\ $5$), and can simulate a decrease to $X$ (resp.\ $Y$) by dividing by $3$ (resp.\ $5$). Then $M'$ accepts if $M$ accepts.
Notice that $S(M') \cap p C_1^* C_2^* = pL'$, where $L' = \{a^{2^n} \mid a^n \in L\}$. Clearly,
$pL'$ is non-recursive, since recursive languages are closed under left
quotient with a symbol, and $L$ can easily be computed from $L'$.
\qed
\end{proof}

The above is true for deterministic machines. For nondeterministic machines, notice the following:
\begin{corollary}
There are non-recursive store languages of machines in $\COUNTER(2)$ accepting $\{\lambda\}$.
\end{corollary}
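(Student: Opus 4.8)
The plan is to reuse the deterministic machine from the proof of Proposition \ref{nonREcounter} almost verbatim, but to replace the reading of the unary input $a^n$ by a nondeterministic guess, so that the resulting machine reads no input symbols at all and therefore accepts only $\lambda$. Recall that in that proof the machine first placed $2^n$ onto counter $A$ (one doubling per input symbol), then entered a special state $p$, and from there simulated the non-recursive deterministic $2$-counter machine $M$ using the prime-power encoding $2^{c_1}3^{c_2}5^{c_3}$ in $A$; the store configurations in state $p$ encoded exactly $pL'$ with $L' = \{a^{2^n} \mid a^n \in L\}$, which is non-recursive.

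First I would construct a machine $M'' \in \COUNTER(2)$ that, on the empty input, performs only $\lambda$-moves. Starting with $A = 1 = 2^0$, it repeatedly and nondeterministically chooses either to double $A$ (using counter $B$ as scratch, exactly as in the doubling strategy of the previous proof) or to stop guessing and switch to state $p$. After a guess of $n$ doublings, $A = 2^n$ and $B = 0$, which is precisely the configuration from which the old machine began its simulation in state $p$. From $p$ onward, $M''$ runs the identical simulation of $M$ (the divisibility tests, divisions, and multiplications are all inherited unchanged), and accepts exactly when $M$ accepts.

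Next I would verify that $L(M'') = \{\lambda\}$. Since $M''$ consumes no input symbols (every transition reads $\lambda$ or the end-marker $\emr$), we have $L(M'') \subseteq \{\lambda\}$; and $\lambda \in L(M'')$ because $L$ is non-recursive and hence non-empty, so there is some $n$ with $a^n \in L$, and the run that guesses that $n$ and then simulates $M$ is accepting. Then I would identify the store language: a store configuration $p C_1^{2^n}$ (with $B = 0$) lies on some accepting computation if and only if the guess $n$ can be completed to an accepting run, i.e.\ if and only if $M$ accepts $a^n$, i.e.\ $a^n \in L$. Hence $S(M'') \cap p C_1^* C_2^* = \{p C_1^{2^n} \mid a^n \in L\}$, which is, up to renaming $C_1$ to $a$, exactly the non-recursive language $pL'$ from Proposition \ref{nonREcounter}. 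Since the recursive languages are closed under intersection with a regular set and under left quotient by a symbol, $S(M'')$ cannot be recursive.

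The only real obstacle is the bookkeeping around store-language semantics: I must ensure that a configuration $p C_1^{2^n}$ with $a^n \notin L$ does not accidentally enter $S(M'')$. This is handled by the determinism of $M$: for each guessed $n$ the continuation from $p$ is unique, and that continuation is accepting precisely when $a^n \in L$, so the nondeterminism lives entirely in the initial guessing phase and never manufactures a spurious accepting extension. With that point settled, the corollary follows immediately from the non-recursiveness of $pL'$ already established in Proposition \ref{nonREcounter}.
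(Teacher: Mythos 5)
Your proposal is correct and follows essentially the same route as the paper, whose proof of this corollary is exactly the observation that the machine from Proposition \ref{nonREcounter} can replace the reading of $a^n$ by nondeterministic $\lambda$-move doubling to guess $n$ before entering state $p$. The additional details you supply (non-emptiness of $L$ giving $\lambda \in L(M'')$, and determinism of the continuation from $p$ ruling out spurious members of the store language) are accurate elaborations of what the paper leaves implicit.
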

This is evident from the proof above as we can start by using nondeterminism to guess the value $n$.

\section{Store Languages of Two-Way Machines}

There has been comparatively little study of the store languages of two-way machines.
Two-way machines can be defined similarly to one-way machines, except each transition reads
a letter from $\Sigma$ or the left or right end marker
$\eml$ or $\emr$, and each transition has a final component that controls whether the input head position
moves to the left by one cell, stays in the same position, or moves to the right by one cell. Configurations
also include the input head position. 
We define acceptance by hitting a final state and falling off the input tape past the right end marker.
Please refer to the formal definitions for two-way machines \cite{StoreLanguages}.
We use the same model names as with one-way machines, but preface each class name with the number $2$ (like $2\NCM$).
As in the proof of Proposition \ref{storeNCM}, two-way input tape is called \emph{finite-crossing}
if in every accepting computation, the input head crosses the boundary of any two adjacent cells of the input
tape at most $c$ times for some $c$. In the next results, we also need the well-known class of one-way pushdown automata
(resp.\ augmented with reversal-bounded counters) $\NPDA$ (resp.\ $\NPCM$), and replace $N$ with $D$ for the deterministic restriction.

The only study of store languages of two-way machines that has been done in the past was in Section 4 of \cite{StoreLanguages}.
There, it was shown that all store languages of finite-crossing $2\NCM$s are in $\LL(\DCM)$ (that is, they can be accepted by one-way $\DCM$ machines). If the input is not finite crossing, it was shown that there are $2\DCM$ machines $M$ with one 1-reversal-bounded counters (called $2\DCM(1,1)$) accepting bounded languages such that the store languages are not in $\LL(\NPCM)$.
Similarly, there are $2\NCM(1,1)$ machines $M$ accepting unary languages such that $S(M)$ are not in $\LL(\NPCM)$ nor semilinear. Furthermore, some general connections between store languages of one-way and two-way acceptors was made. 
If a family of one-way acceptors $\MM_1$, and $\MM_2$ is the corresponding family of two-way acceptors, then 
$\MM_1$ has store languages contained in some family $\LL$ that is closed under homomorphism if and only if 
$\MM_2$ machines accepting finite languages have store languages contained in $\LL$.

In the next proposition, a $\DPDA^{1t}$ is a $\DPDA$ that is one-turn, which means that it does not push after popping, and finite-crossing refers to crossings on the two-way input tape.
\begin{proposition} \label{nonREtwoway}
There are non-recursive store languages of $2\DCOUNTER(1)$ machines accepting bounded languages.
Also, 
there are non-recursive store languages of machines in both finite-crossing $2\DCOUNTER(1)$ and finite-crossing $2\DPDA^{1t}$.
\end{proposition}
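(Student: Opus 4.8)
The plan is to produce, in each of the three models, a \emph{deterministic} machine whose accepted language stays simple (bounded in the first case, recursive in the others) but whose \emph{store} language encodes the halting set of a fixed universal device and is therefore non-recursive. The conceptual key is that a store configuration records only the state and the store contents, not the input or the input-head position; hence a configuration belongs to $S(M)$ as soon as \emph{some} input drives $M$ through it on the way to acceptance. This existential quantification over all inputs is exactly what lets the reachable configurations form an r.e.-complete set even though each individual accepting computation terminates, and it is why such weak deterministic devices (whose accepted languages are recursive, e.g.\ a $2\DPDA$ by Cook's theorem) can nonetheless have non-recursive store languages. Throughout I fix a deterministic single-tape machine (or $2$-counter machine) $U$ with a \emph{local} next-configuration map and a non-recursive r.e.\ halting set $\mathrm{HALT}=\{n : U \text{ halts on } n\}$, and arrange in each construction that a distinguished state $p$ is reachable with the store encoding $n$ precisely when $n\in\mathrm{HALT}$; intersecting $S(M)$ with a suitable regular (or bounded) set then isolates a faithful copy of $\mathrm{HALT}$.

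For the first claim I would take a unary input alphabet, so $L(M)\subseteq a^{*}$ is automatically bounded, and simulate a deterministic $2$-counter machine $M_0$ that \emph{enumerates} a non-recursive r.e.\ set $W$, entering a state $p$ successively with its first counter $X$ equal to $w_0,w_1,\dots$. The single real counter of the $2\DCOUNTER(1)$ holds $X$, while the two-way input-head position plays the role of the second counter $Y$: moving right or left simulates increment or decrement, the left end-marker realises the test $Y=0$, and if a simulated increment would push the head off the right end the machine aborts. On input $a^{n}$ this simulates $M_0$ faithfully until $Y$ first exceeds $n$. Since each $w\in W$ is produced after finitely many steps, using some finite maximal $Y$-value, a long enough input reaches the configuration $(p,X{=}w)$, whereas no $w\notin W$ is ever reached on any input; thus $S(M)\cap p\,C_1^{*}=p\{\,C_1^{w}: w\in W\,\}$ is non-recursive. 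This simulation makes the head oscillate unboundedly, so it is two-way but \emph{not} finite-crossing, matching the statement (and strengthening the earlier $\NPCM$-separation to non-recursiveness).

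The two finite-crossing models must instead verify an \emph{arbitrarily long} computation history of $U$ while crossing each input boundary only boundedly often, so here the input is a candidate history $C_0\#C_1\#\cdots\#C_k$ with $C_0$ encoding some $n$ and $C_k$ halting; the machine accepts iff this is a valid accepting history and then reaches $p$ with the store holding $n$. For finite-crossing $2\DCOUNTER(1)$ I would check the constraint $C_{i+1}=\mathrm{next}(C_i)$ one adjacent pair at a time: load the relevant unary block of $C_i$ into the counter, sweep to the corresponding block of $C_{i+1}$ and cancel, repeat for the other block and the zero-tests. All head motion stays within $C_i\cup C_{i+1}$, so every boundary is crossed a bounded number of times, while the counter is simply reused pair by pair (its reversals, unlike the input crossings, need not be bounded). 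For finite-crossing $2\DPDA^{1t}$ the lone stack turn performs a single global comparison, exploiting that $\mathrm{next}$ is a finite-state transduction (a counter step edits a unary block by one symbol): the push phase reads $C_0\#\cdots\#C_{k-1}$ left-to-right and pushes $\mathrm{next}(C_0)\#\cdots\#\mathrm{next}(C_{k-1})$ on top of a bottom copy of $1^{n}$, and the pop phase rereads the shifted sequence $C_k,\dots,C_1$ right-to-left, popping to verify all the equalities $\mathrm{next}(C_i)=C_{i+1}$ simultaneously. Two passes give finite-crossing, exactly one stack turn is used, and when the comparison completes the residual stack $Z1^{n}$ together with state $p$ is the desired store configuration.

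The main obstacle is precisely this finite-crossing verification: a bounded number of input passes, together with a single reusable counter (respectively a single stack turn), must certify a history of unbounded length. The way through is to keep every check \emph{local} — pairwise block cancellation for the counter, and one successor-string push/pop for the one-turn stack — so that global validity is confirmed without revisiting any boundary more than a constant number of times. Once that is arranged, correctness is routine: for a deterministic $U$ each $n\in\mathrm{HALT}$ admits a unique accepting history that drives the machine to the slice labelled by $p$ and $n$, and no other input reaches that slice, so in every model $S(M)$ meets a regular (or bounded) set in a copy of $\mathrm{HALT}$ and is therefore non-recursive.
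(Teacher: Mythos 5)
Your three constructions land close to the paper's on two of the three claims, and genuinely differ on the third. For the bounded $2\DCOUNTER(1)$ case the paper uses the same key idea as you --- the two-way input head serves as a surrogate second counter --- but drives it differently: its machine reads $a^n\#^d$, computes $2^n$ in the real counter, simulates a fixed $2$-counter acceptor of a non-recursive unary language (rejecting if $d$ is too small), whereas you run an enumerator of a non-recursive r.e.\ set with the input length as the resource bound. For the finite-crossing $2\DPDA^{1t}$ case both you and the paper verify computation histories with a one-turn stack; the paper aligns the push and pop phases by writing the even configurations reversed on the input and doubling the pushed odd ones, while you exploit the two-way head to reread the input right-to-left during the pop phase, and you enter the special state $p$ at the end rather than at the start --- both are legitimate. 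Your finite-crossing $2\DCOUNTER(1)$ construction is the genuinely different one: you verify a history pairwise, reusing the single counter on each adjacent pair of unary-encoded configurations, while the paper keeps its original counter simulation and regains finite-crossing by consuming a fresh input block $\#^{d_i}\$$ for each monotone phase of the simulated counter. Your variant is more modular (it reuses the same history argument as the pushdown case); the paper's gives one unified simulation covering both halves of the proposition.

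There is, however, one correctness-critical flaw. In your first construction you say the machine ``aborts'' when a simulated increment of $Y$ would push the head off the right end, and you then conclude that every configuration $(p, X{=}w)$ reached before that point lies in $S(M)$. That inference is invalid: by definition the store language contains only configurations occurring on \emph{accepting} computations (a point your own opening paragraph stresses). If ``abort'' means reject, then no input is ever accepted and $S(M)=\emptyset$. The construction works only if the machine \emph{accepts} upon exhausting its resource: when the simulated increment would move past the right end-marker, it must switch to a final state and fall off the input. With that one change, $L(M)=a^*$ (still bounded), every finite prefix of the enumeration lies on an accepting run, and $S(M)\cap pC_1^{*}=p\{C_1^{w} : w\in W\}$ as claimed. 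A second, minor, wrinkle: in the $2\DPDA^{1t}$ construction, $\mathrm{next}(C_i)$ is not literally a one-pass bounded-delay transduction under any natural unary encoding, because the successor state and the block increments depend on zero tests (and on the state symbol) that may be read only after the corresponding part of the output must already have been pushed. This is repairable --- either carry the applied transition and its zero-test flags inside each configuration and check their internal consistency separately, or more simply push verbatim copies of $C_0,\dots,C_{k-1}$ and perform the transition check during the pop phase, comparing the popped (reversed) $C_i$ against $C_{i+1}$ read right-to-left with bounded lag in the finite control --- but as stated the push phase is not well defined.
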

\begin{proof} 
To start, we show the result for $2\DCOUNTER(1)$.
Let $M$ be the 2-counter machine accepting a unary non-recursive
language $L$ over some alphabet $\{a\}$ in the proof of Proposition \ref{nonREcounter}. Let
$\#$ be a new symbol. We construct a two-way deterministic machine $M'$ with one counter
$C$ which, when given an input $a^n\#^d$  (for some $n, d \ge 1$), first converts $n$
to $2^n$ and stores it in counter $C$ using the doubling strategy (by using the
two-way input to act as another counter). This can be done if $d$ is large enough.
Then it simulates $M$ as in Proposition \ref{nonREcounter}, using $C$ and the input tape, rejecting if $d$ is not
large enough.
Notice that $M'$ accepts a bounded language that is a subset of $a^* \#^*$. 
For the case where $M'$ is finite-crossing, a similar construction is used which does not accept a bounded language.
Let $\$$ be a new symbol.
We construct a two-way deterministic machine $M'$ with one counter
$C$ which, when given an input $a^n\#^{d_1} \$ \cdots \#^{d_m} \$$  (for some $n, d_1, \ldots, d_m \ge 1$), first converts $n$
to $2^n$ and stores it in counter $C$ using the doubling strategy (by using the
two-way input to act as another counter). This can be done if $d$ is large enough.
$M'$ uses a new 
block $\#^{d_i} \$$ in simulating a counter every time it goes from increasing to decreasing. Clearly $M'$ is $c$-crossing for some $c$.

For $2\DPDA$, we use an arbitrary non-recursive language $L$ accepted by 
a single-tape $\DTM$ $Z$ with initial state $q_0$ that can only accept after an even number of configurations. On input $w$, $w \in L$ if and only if
there exists $x= ID_1 \# ID_3 \# \cdots \# ID_{2k-1} \#\# ID_{2k}^R \# ID_{2k-2}^R \# \cdots \# ID_2^R$ where $ID_1 \vdash \cdots \vdash ID_{2k}$ is an accepting computation of $Z$ on $w$ and $ID_1 = q_0w$. A $2\DPDA$ $M'$ with pushdown alphabet $\Gamma$ can be constructed which on input $x$,
pushes $w$ onto the pushdown, enters a special state $p$ (which it only enters once). Then it continues to push
$ID_1 \# ID_1 \# ID_3 \# ID_3 \# \cdots \# ID_{2k-1} \# ID_{2k-1}$ (doubling each configuration which it can do with the finite-crossing input tape) onto the pushdown until it reads $\# \#$. 
Then for each $i$ odd on the pushdown, as it is being popped, it is possible to verify that $ID_{i}$ can derive $ID_{i+1}$ and $ID_{i-1}$ can derive $ID_i$ by scanning each configuration after $ \# \# $ at most two times. Thus, $S(M') \cap p \Gamma^* = p L$, which is not recursive, otherwise $L$ would be as well.
\qed
\end{proof}
Note that for the $2\DPDA$, it is also possible to build $M'$ with two turns on the input and two turns on the pushdown by first verifying that odd configurations produce even configurations, and then verifying that even configurations produce odd configurations.

We showed in Proposition \ref{nonREtwoway}  that there exist non-recursive store languages of $2\DCOUNTER(1)$ machines.
 In the proof, the machines accept \emph{bounded languages} (subsets of $a^* \#^*$). A similar result is given in
 Proposition \ref{nonREtwoway}, where the machines are \emph{finite-crossing}, but in that case, the machines accept \emph{non-bounded languages}.
 However, when the machines are finite-crossing \emph{and} accept bounded languages, we can show that their store languages are
 recursive. In fact, we prove a much stronger result.
 We will need Theorem~2 in \cite{IbarraSeki}.
\begin{proposition} \cite{IbarraSeki}
A finite-crossing $2\DPCM$ accepting a bounded language is effectively equivalent to a $\DCM$. Hence, its emptiness problem is decidable.
\end{proposition}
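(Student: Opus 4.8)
The plan is to reduce the emptiness question to the decidability of Presburger arithmetic by proving that the set of exponent vectors realizing accepted words is effectively semilinear. Let $M$ be the finite-crossing $2\DPCM$, say $c$-crossing, with $L(M) \subseteq w_1^* \cdots w_n^*$. As a cleanup I would first code each $w_i$ as a single fresh letter $a_i$, using a deterministic finite-state front end that verifies the input lies in $w_1^* \cdots w_n^*$ and exposes its block structure; this preserves both boundedness and the finite-crossing property, so it is enough to treat inputs $a_1^{i_1} \cdots a_n^{i_n}$. Define $Q = \{ (i_1, \ldots, i_n) \mid a_1^{i_1} \cdots a_n^{i_n} \in L(M) \}$, so the goal becomes to show that $Q$ is semilinear.

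Next I would exploit the finite-crossing hypothesis through crossing sequences. Because $M$ is $c$-crossing, the head crosses each boundary between adjacent input cells at most $c$ times, so at every boundary the computation induces a bounded-length crossing sequence recording, at each crossing, the state of $M$ and the reversal phase of each of its (reversal-bounded) counters; each such crossing sequence ranges over a finite set. Within a single block $a_j^{i_j}$ the scanned letter is fixed, hence the behaviour of $M$ across the block is driven by a uniform deterministic transition structure, and what one must summarize is the relation between the configuration at which the head enters the block and the configuration at which it leaves, together with the net counter changes, as a function of the block length $i_j$.

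The main obstacle is the pushdown, whose contents are unbounded and therefore cannot be placed in a crossing sequence. Here I would use that over a uniform block the pushdown behaves as a deterministic pushdown transducer reading a one-letter tape: the entry-to-exit configuration relation obtained by iterating $i_j$ identical steps, restricted to the boundedly many entries and exits dictated by the crossing sequence, is effectively Presburger-definable in $i_j$ once the net stack effect of the pushdown is abstracted and combined with the semilinear reversal behaviour of the counters. Composing these per-block relations along the boundedly many sweeps, and conjoining the consistency conditions that glue adjacent crossing sequences together with the initial condition and the accepting condition (hitting a final state upon falling off the right end), yields a formula whose free variables are $i_1, \ldots, i_n$ and whose auxiliary variables (crossing data and counter values) are existentially quantified. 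Hence $Q$ is semilinear and computable.

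Finally I would invoke the known facts that a bounded language with semilinear exponent set is accepted by a $\DCM$ --- the $\DCM$ checks membership in $a_1^* \cdots a_n^*$ in its finite control, loads each exponent $i_j$ onto a counter, and then deterministically tests the fixed Presburger condition defining $Q$ with reversal-bounded counters --- and that emptiness of $L(M)$ reduces to satisfiability of the formula for $Q$, which is decidable. The delicate step is precisely the pushdown abstraction of the previous paragraph: the crux is to argue that the net contribution of the unbounded pushdown to the accepting condition collapses to finitely many semilinear constraints on the exponents, so that the pushdown can be eliminated in favour of the reversal-bounded counters of the resulting $\DCM$.
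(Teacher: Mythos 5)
Note first that the paper does not prove this proposition at all: it is quoted verbatim as Theorem~2 of \cite{IbarraSeki}, so your attempt is really a blind reconstruction of that cited theorem rather than of an argument in this paper. Measured against what such a proof must accomplish, your sketch has a genuine gap, and it sits exactly at the step you yourself flag as ``delicate.'' Crossing sequences at input-cell boundaries can record the state and the reversal phases of the reversal-bounded counters, because those range over a finite set; they cannot record the pushdown contents, which are unbounded. Hence the ``entry-to-exit configuration relation'' of a block is a relation between objects that include arbitrary stack words, not a finite summary that can be guessed and glued along the boundedly many sweeps. Your claim that this relation is ``effectively Presburger-definable in $i_j$ once the net stack effect of the pushdown is abstracted'' is precisely the content of the theorem, and it is asserted, not argued. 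Abstracting a \emph{net} stack effect (say, height change) cannot suffice: the symbols pushed during one sweep are popped and inspected during later sweeps, so whether the composed computation is consistent depends on the actual stack word, not on any semilinear count attached to it.

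That this cannot be waved away is witnessed inside the present paper: Proposition~\ref{nonREtwoway} builds a finite-crossing $2\DPDA$, with a pushdown that is even one-turn, which verifies accepting computation histories of arbitrary Turing machines --- each input boundary is crossed boundedly often, yet the stack carries unbounded information between sweeps, making such machines wildly non-semilinear on non-bounded inputs. Your abstraction-plus-gluing scheme, as stated, makes no essential use of the fact that blocks are unary (unarity appears only as a convenience, ``the scanned letter is fixed''), so if it were sound it would apply to that machine as well and yield a contradiction. A correct proof must therefore show specifically how boundedness tames the stack --- e.g., via the eventually periodic behaviour of a deterministic pushdown iterated over a unary block and a careful analysis of how stack words produced in one sweep are consumed in later ones --- and that analysis is the substance of Theorem~2 in \cite{IbarraSeki}. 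Two smaller points: the finite-state recoding of $w_1^{i_1}\cdots w_n^{i_n}$ into $a_1^{i_1}\cdots a_n^{i_n}$ must be justified for a \emph{two-way} input head (an inverse-gsm argument works but should be said), and the non-uniqueness of the exponent representation of a word in $w_1^*\cdots w_n^*$ needs handling; your closing step, that bounded languages with effectively semilinear exponent sets lie in $\LL(\DCM)$, is fine and is itself a result of \cite{IbarraSeki}.
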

Notice here that the $\DCM$ constructed is a one-way machine.

\begin{proposition} \label{finitecrossingDPCM}
The store language of every finite-crossing $2\DPCM$ accepting a bounded language is recursive.
\end{proposition}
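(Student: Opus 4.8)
The plan is to reduce the store-language question to an emptiness/membership question that the preceding proposition can decide. The key observation is that a store configuration $c = q\gamma C_1^{i_1}\cdots C_t^{i_t}$ belongs to $S(M)$ precisely when there is an accepting computation of the finite-crossing $2\DPCM$ $M$ that passes through $c$ at some intermediate step. To test this for a fixed $c$, I would build an auxiliary machine $M_c$ of the same type (finite-crossing $2\DPCM$) that simulates $M$ but splits its run at the moment it occupies the configuration $c$: $M_c$ first runs $M$ from its initial configuration, nondeterministically (or, since everything is deterministic, by checking at each step) verifies that it reaches exactly the store configuration $c$, and then continues simulating $M$ to verify that it eventually accepts. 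Since $M$ is deterministic and finite-crossing on a bounded language, $M_c$ can be arranged to remain finite-crossing and to accept a bounded language as well (its input language is a subset of the same bounded set, so the crossing bound and boundedness are inherited). Then $c \in S(M)$ if and only if $L(M_c) \neq \emptyset$, and the preceding proposition (the $\DCM$-equivalence result from \cite{IbarraSeki}) makes this emptiness test decidable.

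Once membership in $S(M)$ is decidable for each individual configuration $c$, recursiveness of $S(M)$ follows, because $S(M)$ lives inside the fixed letter-bounded set $Q\Gamma^* (\head) \cdots$ combined with $C_1^*\cdots C_t^*$, so each candidate string can be enumerated and tested. More concretely, given an arbitrary input string $z$, I would first check the easy syntactic conditions (that $z$ has the correct shape $q\gamma' C_1^{i_1}\cdots C_t^{i_t}$ with exactly one head marker), reject if not, and otherwise apply the decision procedure above; this is a total algorithm, which is exactly what recursiveness requires. The detail to be careful about is encoding the ``pass through configuration $c$'' check inside a machine of the allowed type: the pushdown contents $\gamma$ can be unboundedly long, so $M_c$ cannot store all of $c$ in its finite control. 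I would handle this by having $M_c$ carry $c$ written on a fresh portion of the (two-way) input tape — appending $c$ as part of the input and using the two-way head to compare the pushdown/counter/state against $c$ letter by letter at the guessed splitting point — which keeps $M_c$ finite-crossing and bounded.

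The main obstacle is verifying that the splitting construction genuinely stays within the class to which the \cite{IbarraSeki} proposition applies, namely finite-crossing $2\DPCM$ accepting a bounded language. The crossing bound is the delicate part: reading the copy of $c$ off the input tape and comparing it against the live store must be done with only a bounded number of crossings of each input-cell boundary, and the store comparison (especially popping the pushdown to match $\gamma$) must not destroy the ability to then continue the accepting run. I would organize the comparison so that at the guessed moment, $M_c$ reads $\gamma$ off a scratch copy while the genuine pushdown is matched against it in a single coordinated pass, using the determinism of $M$ to know it has reached the right point, so that no extra crossings accumulate. Provided this bookkeeping works, the reduction to the decidable emptiness problem is immediate and the recursiveness of $S(M)$ follows.
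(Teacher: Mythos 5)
Your high-level reduction is the same as the paper's: for each candidate string $c$, construct a finite-crossing $2\DPCM$ $M_c$ accepting a bounded language with $c \in S(M)$ if and only if $L(M_c) \neq \emptyset$, then invoke the decidable emptiness problem from the preceding proposition. The gap lies in your mechanism for detecting that the simulated configuration equals $c$. First, the obstacle you identify is not real: $c$ is a \emph{fixed} string for each decision instance, so $M_c$ can simply store all of $c$ in its finite control (this is exactly what the paper's $M_y$ does); the finite control grows with $|c|$, but each $M_c$ is still a legitimate machine. Second, the workaround you substitute --- appending $c$ to the input tape and comparing the live store against it ``at the guessed splitting point'' --- does not stay inside the class. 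The machine must be \emph{deterministic} for the cited emptiness result to apply, so there is no guessing: $M_c$ must test for equality with $c$ at every step of the simulation. Each such test via the tape copy forces the input head to travel to the appended $c$-region and back, so the input boundaries between the head's current position and that region are crossed once per simulated step, i.e., unboundedly often, and $M_c$ is no longer finite-crossing. Likewise, testing whether a counter currently holds exactly $n_i$ is destructive (decrement $n_i$ times, test for zero, re-increment to restore); performing this at every step produces unboundedly many counter reversals, so the counters are no longer reversal-bounded. Your closing ``provided this bookkeeping works'' is precisely the step that fails.

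The paper resolves this with a threshold/spill device: $M_y$ keeps the simulated stack contents in its finite control as long as the height is at most $|z|$, and each counter value in its finite control as long as it is at most $n_i$, using the real pushdown and counters only for the excess. Then ``current configuration $= y$'' is a pure finite-control predicate, checkable at every step for free --- deterministically, with no input-head movement, no extra stack traffic, and no counter reversals --- so $M_y$ remains a finite-crossing $2\DPCM$ accepting a subset of the original bounded language. If you wanted to salvage a letter-by-letter comparison, the pushdown half can be repaired (pop at most $|z|+1$ symbols while remembering them in the finite control, compare, push them back), but the counter half cannot, which is why the finite-control tracking is the right tool here.
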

\begin{proof}
Let $M$ be a finite-crossing $2\DPCM$ with $t$ reversal-bounded counters accepting a bounded language $L \subseteq \Sigma^*$, for some alphabet $\Sigma$. 
Every string in the store language $S(M)$ is of the form
\[
y = p z C_1^{n_1} \cdots C_t^{n_t},
\]
where $p$ is the current state, $z$ is the content of the stack, and $n_1, \ldots, n_t$ are the values of the counters.

We describe an algorithm to determine whether a given string $y$ is in $S(M)$. Construct a finite-crossing $2\DPCM$ $M_y$ with a pushdown stack $P$ and counters $C_1, \ldots, C_t$. $M_y$ stores the target configuration $y$ in its finite control.

On input $w$, $M_y$ simulates $M$ on $w$, maintaining in its finite control the simulated stack content as long as the height is less than or equal to $|z|$, and likewise keeping track of each counter $C_i$ as long as its value does not exceed $n_i$. It only begins using the actual pushdown stack $P$ when the stack height exceeds $|z|$, and uses counter $C_i$ only if its value exceeds $n_i$.

If, during this simulation, $M$ reaches the configuration $y$, then $M_y$ records this in its control. It then continues simulating $M$, and accepts if and only if $M$ accepts $w$. If $M$ never enters the configuration $y$, then $M_y$ rejects (regardless of whether $M$ accepts or not).

Clearly, the counters of $M_y$ are still reversal-bounded. Thus, $y \in S(M)$ if and only if the language $L(M_y)$ is non-empty. The result follows since the emptiness problem for $2\DPCM$ accepting  bounded languages is decidable.
\qed
\end{proof}

In Proposition \ref{nonREtwoway}, we showed that $\SS(2\DCOUNTER(1))$ contains non-recursive languages. However, if the counter is reversal-bounded, we have:
\begin{proposition} \label{twoway1rev}
$\SS(2\DCM(1))$ are recursive, and membership is decidable in $S(M)$ for each $M \in 2\DCM(1)$.
\end{proposition}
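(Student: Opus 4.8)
The plan is to mirror the strategy of Proposition~\ref{finitecrossingDPCM}: reduce membership in $S(M)$ to an emptiness test for a machine in the same class. Since $M \in 2\DCM(1)$ has no worktape and only a single counter, every store configuration has the form $y = q C^n$, consisting of a state $q$ and a single counter value $n$. So it suffices to describe, for a given candidate $y = q C^n$, an effective construction of a machine $M_y \in 2\DCM(1)$ with $L(M_y) \neq \emptyset$ if and only if $y \in S(M)$, and then to invoke decidability of emptiness for this class.

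First I would build $M_y$ so that it simulates the (unique, deterministic) computation of $M$ on its input $w$ while monitoring the counter relative to the fixed constant $n$. Because $n$ is fixed for a given membership query, $M_y$ may keep the exact counter value in its finite control (using $O(n)$ additional control states) as long as that value is at most $n$, and engage its physical counter only once the value exceeds $n$, where the physical counter stores the excess value minus $n$. This bookkeeping preserves determinism, the two-way input behaviour, the single-counter restriction, and, crucially, the reversal bound: every increment or decrement of the physical counter corresponds to an increment or decrement of the counter of $M$ at a value above $n$, so the number of reversals cannot increase. Whenever the simulated configuration has state $q$ together with finite-control-tracked counter value exactly $n$ (equivalently, physical counter $0$ and control value $n$), $M_y$ sets a flag in its finite control; it accepts $w$ exactly when $M$ accepts $w$ and this flag has been set.

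By construction, $L(M_y)$ is the set of inputs $w$ whose computation of $M$ accepts and passes through a configuration with state $q$ and counter value $n$, so $L(M_y)$ is nonempty precisely when $q C^n \in S(M)$. The input head position is irrelevant here, since the store configuration does not record it; any position at which the pair $(q,n)$ occurs suffices, and the representation of the value $n$ is unique, so no occurrence is missed or double counted. As $M_y$ is again a $2\DCM(1)$ and is obtained effectively from $M$ and $y$, deciding $y \in S(M)$ reduces to deciding emptiness of $M_y$.

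The hard part is the emptiness test itself. I would appeal to the known fact that the emptiness problem for two-way (even nondeterministic) finite automata equipped with a single reversal-bounded counter is decidable. This is exactly the ingredient that fails in Proposition~\ref{nonREtwoway}, where a single \emph{unrestricted} counter together with the two-way input can encode a two-counter machine and thereby produce a non-recursive store language; the reversal bound is precisely what restores decidability here. Granting that result, the construction above yields a uniform decision procedure for membership in $S(M)$ as $M$ and $y$ range over $2\DCM(1)$ and $Q C^{*}$, so each such $S(M)$ is recursive, which is the claim.
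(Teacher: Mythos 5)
Your construction is essentially the paper's own proof: both reduce membership of $y = qC^n$ in $S(M)$ to emptiness of a machine $M_y \in 2\DCM(1)$ that keeps the simulated counter value in its finite control while it is at most $n$, engages the physical counter only for the excess above $n$ (which preserves determinism and the reversal bound), flags the moment the target state--value pair is reached, and accepts exactly when $M$ accepts with the flag set. The paper describes this construction as analogous to that of Proposition~\ref{finitecrossingDPCM} and then invokes decidability of emptiness for $2\DCM(1)$.

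The one substantive error is in your justification of the final emptiness test: you assert as a ``known fact'' that emptiness is decidable for two-way \emph{even nondeterministic} finite automata with a single reversal-bounded counter. That is not known --- it is an open problem, and the paper itself makes this point: Proposition~\ref{twoway1revnon} shows that membership in store languages of $2\NCM(1)$ machines is decidable \emph{if and only if} emptiness for such machines is decidable, and the surrounding discussion states explicitly that it is not known whether Proposition~\ref{twoway1rev} holds in the nondeterministic case. Fortunately, your $M_y$ is deterministic, so your argument only needs decidability of emptiness for $2\DCM(1)$, which is exactly the (true) result the paper cites. Once the parenthetical claim about nondeterminism is deleted and the appeal is restricted to the deterministic case, your proof is correct and coincides with the paper's.
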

\begin{proof}
Let $M \in 2\DCM(1)$.
Every string in $S(M)$ (assuming $S(M)$ is non-empty) is of the
form $$y = pC_1^d,$$ where $p$ is a state of $M$ and $d$ represents the counter value.

We construct a two-way deterministic 1-counter machine $M_y$, whose counter is reversal-bounded, such that $y \in S(M)$ if and only if the language $L(M_y)$ is non-empty. The construction of $M_y$ is analogous to that in the proof of Proposition \ref{finitecrossingDPCM}.

 The result follows since the emptiness problem for $2\DCM(1)$ is
decidable \cite{IbarraJiang}.
\qed
\end{proof}

\begin{example}
Let $L = \{a^i b^j \mid i \ge 4, j \ge 2, i \text{ is divisible by } j\}$.
Clearly, $L$ can be accepted by a $2\DCM(1)$
$M$ which reads and stores $i$ in the counter and enters a special state $s$. 
Then it makes multiple passes on $b^j$ while decrementing $i$ and checks 
and accepts if $i$ is a multiple of $j$. Note that $M$ makes only one counter
reversal. Then $S(M)$ is not even a context-free language; otherwise, 
$S(M) \cap sC_1^+ = \{sC_1^i \mid i \ge 4, j \text{ is composite}\}$ would be context-free.
\end{example}

We do not know if Proposition \ref{twoway1rev} holds when $M$ is nondeterministic. However,
if $M$ accepts a bounded language, Proposition \ref{twoway1rev} holds, since the emptiness
problem for $2\NCM(1)$ 
that accepts bounded languages is decidable \cite{Ibarra2008}.
In general however, the following holds:
\begin{proposition} \label{twoway1revnon}
The 
membership problem for the store language for machines in $2\NCM(1)$ is decidable
if and only if the emptiness problem for $\NCM(1)$ is decidable.
\end{proposition}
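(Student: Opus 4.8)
The plan is to prove the two implications as effective reductions between the store-membership problem and the emptiness problem for $2\NCM(1)$, reusing the simulation idea from the proof of Proposition~\ref{twoway1rev}. For the implication that a store-membership decision procedure yields an emptiness decision procedure, I would take an arbitrary $M\in 2\NCM(1)$ and build $M'\in 2\NCM(1)$ that simulates $M$ step for step and, the first time $M$ enters an accepting configuration, drives its single counter down to $0$ and moves to a fresh state $p$ that is used nowhere else, from which $M'$ accepts (it runs its input head off the right end in a final state). The decrementing phase adds at most one counter reversal, so $M'$ is still a $2\NCM(1)$. Since $p$ is entered only immediately after $M$ accepts, the word $y=p\,C_1^{0}$ belongs to $S(M')$ if and only if $M$ has an accepting computation on some input, i.e.\ $y\in S(M')$ iff $L(M)\neq\emptyset$. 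Thus deciding membership in $S(M')$ decides emptiness of $M$.

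For the converse, suppose emptiness is decidable for $2\NCM(1)$, and let $M\in 2\NCM(1)$ together with a candidate word $y=p\,C_1^{d}$ be given. Following the construction in Proposition~\ref{twoway1rev}, I would build $M_y\in 2\NCM(1)$ that stores $y$ in its finite control, simulates $M$ on its input, nondeterministically guesses the moment at which the store of $M$ equals $y$ (state $p$ and counter value exactly $d$), records in its control that this moment has occurred, and accepts if and only if $M$ accepts and the guessed moment was verified. Then $L(M_y)$ is exactly the set of inputs admitting an accepting computation of $M$ that passes through the store configuration $y$, so $y\in S(M)$ iff $L(M_y)\neq\emptyset$, which is decidable by hypothesis.

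The step I expect to be the main obstacle is the construction of $M_y$: with only a single reversal-bounded counter available, $M_y$ must detect the one instant at which the counter holds the fixed value $d$ and then continue the simulation with that value intact. I would resolve this exactly as in Propositions~\ref{finitecrossingDPCM} and~\ref{twoway1rev}, by keeping the simulated counter value in the finite control while it is at most $d$ and storing only the excess (value $-\,d$) in the physical counter once the value exceeds $d$; the instant ``counter $=d$'' is then the instant the physical counter is empty at this boundary, which is detectable by a zero test, no auxiliary counter is needed, and the physical counter reverses no more often than the simulated one, so reversal-boundedness is preserved. A final routine check in both reductions is that the added phases---zeroing the counter in the first construction and boundary detection in the second---each introduce only a bounded number of extra reversals, keeping both $M'$ and $M_y$ inside $2\NCM(1)$.
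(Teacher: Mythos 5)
Your proposal is correct and takes essentially the same route as the paper: one direction normalizes the machine so that a distinguished state with counter value zero occurs in an accepting computation if and only if the language is non-empty, and the converse direction is exactly the $M_y$ construction invoked by the paper from Proposition~\ref{twoway1rev} (tracking the simulated counter in the finite control up to the target value $d$ and keeping only the excess in the physical counter). The only cosmetic difference is that you realize the normalization as an explicit construction of $M'$ with a fresh state $p$, where the paper assumes a unique accepting state with zeroed counter without loss of generality.
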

\begin{proof}
Let $M$ be in $2\NCM(1)$ and $f$ be the unique accepting state of $M$, and assume without
loss of generality that all counters are zero before switching to $f$.  Then $f$ is in the
store language $S(M)$ of $M$ if and only if $L(M)$ is not empty. It follows
that if we can decide membership in $S(M)$, then we can decide the emptiness
of $L(M)$. Now suppose we can decide the emptiness of $L(M)$. Then, as illustrated
in the proof of Proposition \ref{twoway1rev}, we can decide membership in $S(M)$.
\qed
\end{proof}

Clearly, if the emptiness problem for machines in $2\NCM(1)$ is decidable (hence,
the membership problem for store languages is decidable), then the store
languages for machines in $\mathcal{M}$ are recursive. However, if the emptiness
problem is undecidable, it does not necessarily mean that there are store
languages that are not recursive.

The ideas in the proofs of Proposition \ref{twoway1revnon} can be applied to other 
counter machine models such as $2\DCM(2)$. 
It is known 
that the emptiness problem for these machines is undecidable, even when 
restricted to the special case when the machines accept letter-bounded
languages \cite{Ibarra1978}.
\begin{proposition}
Testing membership in $S(M)$ for $M \in 2\DCM(2)$ (even where $L(M)$ is letter-bounded) is undecidable.
\end{proposition}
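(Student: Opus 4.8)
The plan is to reduce the undecidability of the store language membership problem for $2\DCM(2)$ directly to the undecidability of the emptiness problem for $2\DCM(2)$ accepting letter-bounded languages, mirroring the argument used in Proposition \ref{twoway1revnon}. The key observation is that membership queries in a store language can encode emptiness queries: if we can decide whether a particular store configuration is reachable in some accepting computation, then by choosing that configuration to be the unique final state (with all counters zero), we can decide whether the machine accepts anything at all.

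Concretely, let $M \in 2\DCM(2)$ accept a letter-bounded language $L$, and assume without loss of generality that $M$ has a unique accepting state $f$ and that both counters are reset to zero immediately before entering $f$. Then the string $f$ (that is, $f C_1^0 C_2^0$, just the state with empty counters) is in $S(M)$ if and only if there is some accepting computation passing through a configuration in state $f$ with both counters zero, which happens if and only if $L(M) \neq \emptyset$. So a decision procedure for membership in $S(M)$ would immediately yield a decision procedure for emptiness of $L(M)$. Since the emptiness problem for $2\DCM(2)$ restricted to letter-bounded languages is known to be undecidable \cite{Ibarra1978}, it follows that membership in $S(M)$ is undecidable as well.

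The main point requiring care is the normalization of $M$: I would argue that any $2\DCM(2)$ can be converted into an equivalent one with a unique accepting state that zeroes both counters before accepting, and that this conversion preserves both the accepted language (so letter-boundedness is retained) and, crucially, the reduction's validity. This normalization is routine --- one adds a final clean-up phase that, upon reaching any original accepting state, decrements each counter to zero before transitioning to the new unique state $f$ --- but one must check it does not introduce additional counter reversals beyond a constant bound, which is immediate since each counter only makes one monotone decreasing sweep during clean-up. The only genuine obstacle, which is really a matter of invoking the right cited result rather than a technical difficulty, is ensuring the undecidability source (emptiness for letter-bounded $2\DCM(2)$) is stated in a form that directly matches our normalized machine; since letter-boundedness of $L(M)$ is preserved under the clean-up construction, this poses no issue, and the reduction goes through verbatim as in Proposition \ref{twoway1revnon}.
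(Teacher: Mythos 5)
Your proposal is correct and matches the paper's intended argument: the paper proves this proposition exactly by applying the idea of Proposition \ref{twoway1revnon} (a unique accepting state $f$ entered with both counters zero, so that $f \in S(M)$ iff $L(M) \neq \emptyset$) together with the known undecidability of emptiness for $2\DCM(2)$ accepting letter-bounded languages from \cite{Ibarra1978}. Your explicit justification of the normalization step (that the clean-up sweep preserves determinism, reversal-boundedness, and letter-boundedness) is a detail the paper leaves implicit, but the approach is the same.
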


\begin{remark}
In general, store languages can sometimes be more complicated than accepted languages, and can sometimes be simpler. 
The simpler case can be seen from the many cases where the store languages of models that accept more than regular languages only produce regular store languages. This is the case for one-way pushdown automata, and for $\fcNTM$.
The more complicated case occurs, for example if $M$ is a $2\NPCM$ or a $2\NPDA$, and therefore $L(M)$ is recursive. Indeed, to decide whether $y$ is in $L(M)$, 
construct  an $\NPCM$ $M_y$ which has $y$  in its state and simulates $M$ on 
$\lambda$ input,  Then $y$ is $L(M)$ if and only if $M_y$ accepts $\lambda$.  Since emptiness 
(hence, membership) for $\NPCM$ is decidable, $L(M)$ is recursive.  
However, store languages of $2\NPCM$, and even finite-crossing $2\DPDA$ where the pushdown is one-turn, and finite-crossing $2\COUNTER(1)$ are sometimes
non-recursive by Proposition \ref{nonREtwoway}.

To note that all languages accepted by $2\NCM$ are recursive, using the same argument above with decidability of emptiness for $\NCM$. Further, testing membership in a store language of $M \in 2\DCM(2)$ is undecidable by the proposition above.
Despite this, it is open whether or not all store languages of $2\DCM(2)$ (or even $2\NCM$) are recursive. It is possible that membership could be undecidable within a recursive store language.
\end{remark}

A two-way input tape is called finite-turn if the input head makes at most $c$
turns (reversals) on the input tape for some $c$. Clearly, finite-turn is
a special case of finite-crossing. We will show that Proposition \ref{nonREtwoway} does
not hold if finite-crossing is replaced by finite-turn.
We denote by $2\NCCM$ to be the same as $2\NPCM$ but the pushdown is replaced with an unrestricted counter.
\begin{proposition}
If $M$ is a finite-turn $2\NCCM$, then $S(M)$ is an effectively 
computable semilinear set that is in $\LL(\DCM)$.
\end{proposition}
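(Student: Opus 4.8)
The plan is to build on the single structural feature that separates the finite-turn case from the finite-crossing machines of Proposition~\ref{nonREtwoway}: a bound of $c$ on the number of input-head turns forces every accepting computation to split into at most $c+1$ \emph{monotone passes} over the input, and these passes occur in a fixed global time order. A finite-crossing computation can instead make unboundedly many global turns, and so cannot be decomposed into boundedly many passes; this is exactly why those machines admit non-recursive store languages, whereas here the bounded decomposition is what will deliver semilinearity. I would first record that $S(M)\subseteq Q\,C^*C_1^*\cdots C_t^*$ is letter-bounded, so, separately for each state $q$, I may identify each store configuration $q\,C^{d}C_1^{d_1}\cdots C_t^{d_t}$ of $S(M)$ with the tuple $(d,d_1,\dots,d_t)\in\mathbb{N}_0^{t+1}$, where $d$ is the value of the unrestricted counter and $d_1,\dots,d_t$ are the values of the reversal-bounded counters. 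It then suffices to prove that this set of tuples is semilinear, since a bounded language whose exponent set is semilinear is accepted by a $\DCM$ \cite{Ibarra1978,GinsburgCFLs}; because every step below is effective, this also yields effective computability.

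To obtain semilinearity I would pass to reachability and write the tuple set as $R\cap R'$, where $R$ collects the counter tuples with which state $q$ is reachable from the initial configuration over \emph{some} input and head position, and $R'$ collects those tuples from which an accepting configuration is reachable. The two are symmetric under reversing the computation, and semilinear sets are effectively closed under intersection, so it is enough to treat $R$. Here I would invoke the pass decomposition: any computation witnessing membership in $R$ is a sequence of at most $c+1$ monotone passes, and each pass induces a relation between the counter tuple at its start and the counter tuple at its end. Each such per-pass relation is semilinear: for the reversal-bounded counters this is the standard semilinearity of reversal-bounded counter reachability \cite{Ibarra1978}, and for the single unrestricted counter it holds because a one-counter device performing a monotone input pass is essentially a (unary) pushdown machine, whose input/output counter-value relation is semilinear by Parikh's theorem \cite{GinsburgCFLs}.

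The next step is to compose these at most $c+1$ semilinear per-pass relations. Since semilinear relations are effectively closed under composition, and since the consistency requirement --- that the at most $c+1$ visits to each input cell all read one and the same symbol --- is a finite-state (hence semilinear) condition once the boundedly many per-cell ``crossing records'' are folded in, the composition yields a semilinear description of $R$. Intersecting the semilinear sets $R$ and $R'$, re-attaching the finitely many states, and applying the bounded-plus-semilinear characterisation then gives that $S(M)$ is an effectively computable semilinear set in $\LL(\DCM)$.

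The delicate point, and the step I expect to be the main obstacle, is the interaction between the shared read-only input and the composition of the passes: the passes cover overlapping intervals of the input and are threaded by a single unrestricted counter, so the per-pass relations cannot be composed blindly. The symbol read at a given cell must agree across all of the boundedly many passes that visit it, and the unrestricted counter value handed from one pass to the next must respect non-negativity and zero-tests in genuine time order. Reconciling these constraints while keeping the per-pass relations semilinear and composable is exactly where the finite-turn hypothesis is indispensable: boundedly many passes means each cell carries only a bounded crossing record, so the consistency constraints stay finite-state and fold into the semilinear relations, whereas under mere finite-crossing the number of passes is unbounded and this reconciliation provably fails --- matching the non-recursiveness exhibited in Proposition~\ref{nonREtwoway}.
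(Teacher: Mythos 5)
Your proposal attempts to re-derive the semilinearity from scratch, and the step that fails is exactly the one you flag as delicate: the composition of per-pass relations. The per-pass relations you compose are relations on counter tuples with the input word of that pass existentially quantified away; composing them therefore allows each pass to be witnessed by a \emph{different} input word, so the composed relation is only an over-approximation of the true reachable set, not $S(M)$. Your proposed repair --- fold the boundedly many per-cell crossing records into the relations so that input consistency becomes ``finite-state'' --- is the crossing-sequence construction, and making it precise forces you to simulate all passes \emph{in parallel} during a single one-way scan of the input. For the reversal-bounded counters this is fine (this is how finite-crossing $2\NCM$ is converted to $\NCM$ \cite{Gurari}), but the unrestricted counter breaks it: each pass needs its own unrestricted copy, so the simulating one-way machine has up to $c+1$ unrestricted counters, and already two unrestricted counters give full Turing power \cite{HU}, so no semilinearity theorem applies to the machine this construction produces. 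The alternative of keeping the passes sequential fails for a complementary reason: the unrestricted counter's non-negativity and zero tests are constraints in global \emph{time} order, i.e., on prefix sums threaded across passes while the crossing records live in \emph{space} order, and these are not captured by composing per-pass input/output relations. So the hard content of the proposition is precisely the claim you assert can be ``folded in''; it is not finite-state, and your sketch does not establish it.

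The paper avoids this entirely by delegating the hard semilinearity content to a known theorem: for finite-turn $2\NPCM$ (hence finite-turn $2\NCCM$), the Parikh image of the accepted language is effectively semilinear \cite{Ibarra1978}. It then reduces the store language to that theorem with a self-referential construction: build a finite-turn $2\NCCM$ $M'$ taking input $w \# 1^s C_1^{i_1} \cdots C_{t+1}^{i_{t+1}}$, which simulates $M$ on $w$, at a nondeterministically chosen moment suspends the simulation to check (and then restore) that its current state and counter values match the encoded configuration, and accepts iff $M$ accepts $w$. Projecting the semilinear Parikh image of $L(M')$ onto the configuration coordinates yields the semilinear set for $S(M)$, and membership in $\LL(\DCM)$ follows from the fact that bounded languages with semilinear exponent sets are in $\LL(\DCM)$ \cite{IbarraSeki}. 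If you want to salvage your write-up, the fix is to replace the pass-composition argument with this kind of reduction to the class-level Parikh theorem rather than trying to prove that theorem implicitly.
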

\begin{proof}
It was shown in \cite{Ibarra1978} that if $M$ is a finite-turn 2$\NPCM$, then the Parikh image of $L(M)$ is an effectively 
computable semilinear set. Obviously, this result holds for finite-turn $2\NCCM$.
Let $\Sigma$ be the input alphabet of $M$ and $\#$ be a new symbol.
If $M$ has $t$ reversal-bounded counters,
then the strings in $S(M)$, if non-empty, are of the form
$1^s C_1^{i_1} C_2^{i_2} \cdots C_{t+1}^{i_{t+1}}$, where $s$ is the state, $i_1$ is the value
of the unrestricted counter, and $i_2, \dots, i_{t+1}$ are the values of the
reversal-bounded counters (we assume for convenience that the state set is $\{1,\ldots, m\}$ for some $m$).

Given $M$, we can construct a finite-turn $2\NCCM$ $M'$ which, when given input
$y = w \# 1^s C_1^{i_1} C_2^{i_2} \cdots C_{t+1}^{i_{t+1}}$, simulates $M$ on $w$. 
At some nondeterministically chosen point, $M'$ checks that the current configuration (i.e., the state and counter values) 
is represented on the
input. So it scans the input and it checks that the state is $s$, and the
counters have values $i_1, \ldots, i_{k+1}$. If not, it rejects. (Note that $M'$
can restore the values of the counters after checking). Then it continues the simulation
of $M$ on $w$. $M'$ accepts if $M$ accepts $w$. Then the Parikh image of 
$L(M')$ is an effectively computable semilinear set $S'$. It follows, by deleting the components
associated with the symbols in $\Sigma$ and $\#$ from the vectors generating $S'$,
we get the semilinear set for $S(M)$.
Finally, it is indeed known that all bounded-semilinear languages are in $\LL(\DCM)$ by \cite{IbarraSeki}.
\qed
\end{proof}

\section{Conclusions and Future Directions}

We have shown that store languages of finite-crossing and finite-visit Turing machines are regular, which was previously known for the less powerful finite-turn Turing machines. These two classes are important as they exactly accept the languages accepted by many types of finite-index grammars. From this, it is proven that the languages accepted by deterministic finite-visit and deterministic finite-crossing Turing machines are closed under right quotient with regular languages. Also it is determined that given a regular set of configurations $C$, the set of configurations that may follow in zero or more steps from $M$ is regular, and the set of configurations that may lead in zero or more steps to a configuration in $C$ is regular. 
Furthermore, adding reversal-bounded counters to either type of Turing machine produces store languages that can be accepted by machines with only reversal-bounded counters. This leads to more general reachability results for Turing machines (optionally with reversal-bounded counters). In addition, for either type of Turing machine (optionally with reversal-bounded counters), given two machines of this type, it is decidable whether they have some non-initial configuration in common --- a problem that has applications in fault tolerance.

Store languages of two-way machines are further studied, and many classes are shown to either have non-recursive store languages, or at least have store languages with an undecidable membership problem. In particular, store languages of two-way deterministic pushdown automata can be non-recursive.

In the future, it is worth considering if there is yet another larger class of Turing machines with regular store languages, or if there are other unrelated restrictions that can be imposed on Turing machines to produce regular store languages.
Examples \ref{TMduplicate} and \ref{TMlogn} present an interesting problem: how complex do the store languages become if we allow for a number of turns, crossings, or visits bounded by the size of the input.
There are also many other machine models whose store languages have yet to be studied.
A problem for future work is to characterize the store languages possible as a function of the number of turns, crossings, and visits, where it is larger than any constant.
Finally, in \cite{JALCTuringMachines} it was shown that it is decidable, given a finite-turn Turing machine $M$, whether the set of subwords of $L(M)$ is equal to $\Sigma^*$. The key property used there is that the store language of all finite-turn Turing machines are regular. Does a similar property hold for finite-crossing Turing machines, and hence many types of finite-index grammars?
Other properties of finite-crossing Turing machines are of interest. For example, it was recently shown that it is decidable to determine whether a given finite-turn Turing machine accepts a bounded language. Is this also true for finite-crossing Turing machines (and hence finite-index grammars)?

\bibliographystyle{elsarticle-num}
\bibliography{bounded}
\end{document}